\newtheorem{theorem}{Theorem}
\newtheorem{proposition}{Proposition}
\newcommand{\vnorm}[1]{\left|\left|#1\right|\right|_{p}}
\newcommand{\ALID}{\textsf{ALID}}
\newcommand{\PALID}{\textsf{PALID}}
\newcommand{\nop}[1]{}
\newcommand{\rmv}[1]{} 
\newcommand{\hlit}[1]{{\color{black} #1}}
\begin{document}
\begin{sloppy}
%

\title{ALID: Scalable Dominant Cluster Detection}

\numberofauthors{3}

\author{
\alignauthor
Lingyang Chu\\
       \affaddr{Institute of Computing Technology, CAS\\ Beijing, China}\\
       \email{lingyang.chu@vipl.ict.ac.cn}
\alignauthor
Shuhui Wang\\
       \affaddr{Institute of Computing Technology, CAS\\ Beijing, China}\\
       \email{wangshuhui@ict.ac.cn}
\alignauthor
Siyuan Liu\\
       \affaddr{Heinz College\\ Carnegie Mellon University\\ Pittsburgh, USA}\\
       \email{siyuan@cmu.edu}
\and  
\alignauthor
Qingming Huang\\
       \affaddr{University of Chinese Academy of Sciences\\ Beijing, China}\\
       \email{qmhuang@jdl.ac.cn}
\alignauthor
Jian Pei\\
       \affaddr{School of Computing Science \\Simon Fraser University \\Vancouver, Canada}\\
       \email{jpei@cs.sfu.ca}
}

\maketitle


\begin{abstract}
Detecting dominant clusters is important in many analytic applications. The state-of-the-art methods find dense subgraphs on the affinity graph as the dominant clusters.  However, the time and space complexity of those methods are dominated by the construction of the affinity graph, which is quadratic with respect to the number of data points, and thus impractical on large data sets.  To tackle the challenge, in this paper, we apply \emph{Evolutionary Game Theory} (EGT) and develop a scalable algorithm, Approximate Localized Infection Immunization Dynamics ({\ALID}).  The major idea is to perform Localized Infection Immunization Dynamics (LID) to find dense subgraph within local range of the affinity graph.
\hlit{
LID is further scaled up with guaranteed high efficiency and detection quality by an estimated Region of Interest (ROI) and a carefully designed Candidate Infective Vertex Search method (CIVS).
}
{\ALID} only constructs small local affinity graphs and has a time complexity of $\mathcal{O}(C(a^*+\delta)n)$ and a space complexity of $\mathcal{O}(a^*(a^*+\delta))$, where $a^*$ is the size of the largest dominant cluster and $C\ll{n}$ and $\delta\ll{n}$ are small constants. We demonstrate by extensive experiments on both synthetic data and real world data that {\ALID} achieves state-of-the-art detection quality with much lower time and space cost on single machine. We also demonstrate the encouraging parallelization performance of {\ALID} by implementing the Parallel ALID ({\PALID}) on \emph{Apache Spark}. {\PALID} processes 50 million SIFT data points in 2.29 hours, achieving a speedup ratio of 7.51 with 8 executors.

\end{abstract}


\section{Introduction}
A dominant cluster is a group of highly similar objects that possesses maximal inner group coherence~\cite{SEA,DS}.
On a massive data set, more often than not the dominant clusters carry useful information and convey important knowledge.
For example, in a big collection of news data (e.g., official news, RSS-feeds and tweet-streams), the dominant clusters may indicate potential real world hot events~\cite{DSM_VLDB,APP_twitter_stand}; in a large repository of interpersonal communication data (e.g., emails and social networks), the dominant clusters may reveal stable social hubs~\cite{APP_socialhub}. Therefore, efficiently and effectively detecting dominant clusters from massive data sets has become an important task in data analytics.

In real applications, dominant cluster detection often faces two challenges.  First, the number of dominant clusters is often unknown. Second, large data sets are often noisy. Unknown number of meaningful dominant clusters are often hidden deeply in an overwhelming amount of background noise~\cite{APP_twitter_stand,APP_socialhub}. For instance, numerous news items about almost every aspect of our daily life are added to the Web everyday.  Most of the news items are interesting to small groups of people, and hardly attract sufficient social attention or become a dominant cluster of a hot event.  As another example, billions of spam messages are sent everyday.  Finding meaningful email threads and activities is a typical application of dominant cluster detection, and is challenging mainly due to the large amount of spam messages as noise~\cite{SEA}.

Traditional partitioning-based clustering methods like \emph{k}-means~\cite{SKM_VLDB,KM_Lloyd,KM_stream} and spectral clustering~\cite{NYS_PAMI04,NYS_muli,SC_tradition,SC_KDD12} are often used in cluster detection. However, such methods are not robust in processing noisy data with an unknown number of dominant clusters~\cite{DS}.
First, these methods typically require a pre-defined number of (dominant) clusters, without prior knowledge on the true number of (dominant) clusters, an improper number of clusters may lead to low detection accuracy.
Second, each data item, including both members of dominant clusters and noise data items, are forced to be assigned to a certain cluster, which inevitably leads to degenerated detection accuracy and subtracted cluster coherence under high background noise.

The affinity-based methods~\cite{SEA,DS,IID}, which detect dominant clusters by finding dense subgraphs on an affinity graph, are effective in detecting an unknown number of dominant clusters from noisy background.
Since the data objects in a dominant cluster are very similar to each other, they naturally form a highly cohesive dense subgraph on the affinity graph. Such high cohesiveness is proven to be a stable characteristic to accurately identify dominant clusters from noisy background without knowing the exact number of clusters~\cite{SEA,IID}.

Nevertheless, efficiency and scalability are the bottlenecks of the affinity-based methods, since the complexity of constructing the affinity matrix from $n$ data items is $\mathcal{O}(n^2)$ in both time and space. Even though the computational costs can be saved by forcing the affinity graph sparse~\cite{PSC}, the enforced sparsity breaks the intrinsic cohesiveness of dense subgraphs and consequently affects the detection quality of dominant clusters.

To tackle the challenge, in this paper, we propose Approximate Localized Infection Immunization Dynamics ({\ALID}), a dominant cluster detection approach that achieves high scalability and retains high detection quality.
The key idea is to avoid constructing the global complete affinity graph. Instead, {\ALID} finds a dense subgraph within an accurately estimated local Region of Interest (ROI). The ROI is guaranteed by the \emph{law of triangle inequality} to completely cover a dominant cluster, and thus fully preserves the high intrinsic cohesiveness of the corresponding dense subgraph and ensures the detection quality. Moreover, since dominant clusters generally exist in small local ranges, {\ALID} only searches a small local affinity subgraph within the ROI. Therefore, {\ALID} only constructs small local affinity graphs and largely avoids the expensive construction of the global affinity graph. Consequently, the original $\mathcal{O}(n^2)$ time and space complexity of the affinity graph construction is significantly reduced to $\mathcal{O}(C(a^*+\delta)n)$ in time and $\mathcal{O}(a^*(a^*+\delta))$ in space, where $a^*$ is the size of the largest dominant cluster and $C\ll{n}$, $\delta\ll{n}$ are small constants.

We make the following major contributions.
First, we propose LID to detect dense subgraphs on the local affinity graph within a ROI. LID localizes the Infection Immunization Dynamics~\cite{IID} to efficiently seek dense subgraphs on a small local affinity graph. It only computes a few columns of the local affinity matrix to detect a dense subgraph without sacrificing detection quality. This significantly reduces the time and space complexity.

\hlit{
Second, we estimate a Region of Interest (ROI) and propose a novel Candidate Infective Vertex Search method (CIVS) to significantly improve the scalability of LID and ensure high detection quality.
The estimated ROI is guaranteed to accurately identify the local range of the ``\emph{true}'' dense subgraph (i.e., dominant cluster), which ensures the detection quality of {\ALID}. The CIVS method is proposed to quickly retrieve the data items within the ROI, where all data items are efficiently indexed by Locality Sensitive Hashing (LSH)~\cite{LSH}. Demonstrated by extensive experiments on synthetic data and real world data, {\ALID} achieves substantially better scalability than the other affinity-based methods without sacrificing the detection quality.

}

Third, we carefully design a parallelized solution on top of the MapReduce framework~\cite{MapReduce} to further improve the scalability of {\ALID}. The promising parallelization performance of Parallel ALID ({\PALID}) is demonstrated by the experiments on \emph{Apache Spark} (http://spark.apache.org/). {\PALID} can efficiently process 50 million SIFT data~\cite{SIFT} in 2.29 hours and achieve a speedup ratio of $7.51$ with 8 executors.

For the rest of the paper, we review related work in Section~\ref{sec:related-work}. We revisit the problem of dense subgraph finding in Section~\ref{sec:dense-subgraph}. We present the {\ALID} method in Section~\ref{sec:ALID}. Section~\ref{sec:exp} reports the experimental results. Section~\ref{sec:con} concludes the paper.
We also provide a full version of this work in~\cite{appendix_pdf} with an appendix of proof materials and look-up tables.

\section{Related Work}\label{sec:related-work}
For dominant cluster detection, the affinity-based methods~\cite{SEA,DS,IID} that find dense subgraphs on affinity graphs are more resistant against background noise than the canonical partitioning-based methods like \emph{k}-means~\cite{SKM_VLDB,KM_Lloyd} and spectral clustering~\cite{NYS_PAMI04,NYS_muli,SC_KDD12}.
The dense subgraph seeking problem is well investigated in literature~\cite{SIGMOD_PANG_annotationGraphAPP,DSM_VLDB,SIGMOD_CSV_visualizing_subgraph}. Motzkin \textit{et al.}~\cite{motzkin_straus} proved that seeking dense subgraphs on an un-weighted graph can be formulated as a quadratic optimization problem on the simplex.
This method was extended to weighted graphs by the dominant set method (DS)~\cite{DS}, which solves a standard quadratic optimization problem (StQP) by replicator dynamics (RD)~\cite{Weibull}.

Bul\`o \textit{et al.}~\cite{IID} showed that, given the full affinity matrix of an affinity graph with $n$ vertices, the time complexity for each RD iteration is $\mathcal{O}(n^2)$, which hinders its application on large data sets. Thus, Bul\`o \textit{et al.}~\cite{IID} proposed the infection immunization dynamics (IID)~\cite{IID} to solve the StQP problem in $\mathcal{O}(n)$ time and space. However, the overall time and space complexity of IID is still $\mathcal{O}(n^2)$, since each iteration of IID needs the full affinity matrix, which costs quadratic time and space to compute and store.

Since most dense subgraphs exist in local ranges of an affinity graph, running RD on the entire graph is inefficient~\cite{SEA,GS}. Therefore, Liu \textit{et al.}~\cite{SEA} proposed the shrinking and expansion algorithm (SEA) to effectively prevent unnecessary time and space cost by restricting all RD iterations on small subgraphs. Both the time and space complexities of SEA are linear with respect to the number of graph edges~\cite{SEA}.  The scalability of SEA is sensitive to the sparse degree of the affinity graph.

Affinity propagation (AP)~\cite{AP} is another noise resistant solution to detect unknown number of dominant clusters. It finds the dominant clusters by passing real valued messages along graph edges, which is very time consuming when there are many vertices and edges.

\hlit{
Mean shift \cite{MS_PAMI_most_cited} differs from the affinity-based methods by directly seeking clusters in the feature space. It assumes that the discrete data items are sampled from a pre-defined density distribution and detect clusters by iteratively seeking the maxima of the density distribution. However, the detection quality of mean shift is sensitive to the type and bandwidth of the pre-defined density distribution.
}

The above affinity-based methods are able to achieve high detection quality when the affinity matrix is already materialized. However, the scalability of those methods with respect to large data sets is limited by the $\mathcal{O}(n^2)$ time and space complexities of the affinity matrix computation.
To the best of our knowledge, {\ALID} developed in this paper is the first attempt to achieve high scalability and retain good noise resistance.

\section{Dense Subgraph Finding Revisit}\label{sec:dense-subgraph}
In this section, we revisit the dense subgraph finding problem from the perspective of Evolutionary Game Theory (EGT)~\cite{Weibull} and discuss the common scalability bottleneck of the infection immunization dynamics (IID)~\cite{IID} and other affinity-based methods.

Consider a global affinity graph, denoted by $G=(V,I,A)$, where $V=\{v_i\in{R^d} \mid i\in{I=[1,n]}\}$ is the set of vertexes, and each vertex $v_i$ uniquely corresponds to a $d$-dimensional data point in $R^d$ space, $R$ being the set of real numbers, $I=[1,n]$ is the range of indices of all vertexes.  $A$ is the affinity matrix, where each entry $a_{ij}$ of $A$ represents the affinity between $v_i$ and $v_j$, that is,
\begin{equation}
\label{Eqn:exp_kernel_affinity}
a_{ij}=\left\{
\begin{array}{c l}
e^{-k\vnorm{v_i-v_j}} \ & i\neq{j} \\
0 \ & i=j
\end{array}\right.
\end{equation}
where $\vnorm{\cdot}$ represents the $L_p$-norm ($p\geq 1$) and $k>0$ is the scaling factor of the \emph{Laplacian Kernel}.

Given an affinity graph $G$ of $n$ vertexes, each graph vertex $v_i$ can be further referenced by an $n$-dimensional index vector $s_i=[\underbrace{0 \cdots 0}_{i-1} \; 1 \; \underbrace{0 \cdots 0}_{n-i}]^T$\rmv{, which is the $i$-th column of the \emph{identity matrix} with size $n$}. In other words, both $v_i$ and $s_i$ refer to the same $i$-th vertex in the graph.

\begin{figure}[t]
\centering
\includegraphics[width=82mm]{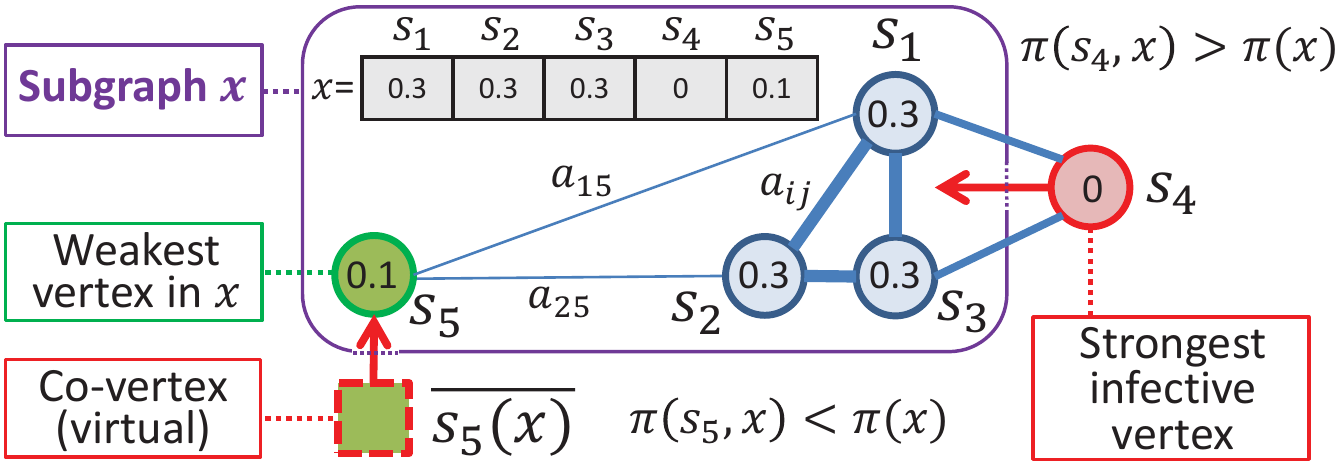}
\caption{An affinity graph $G$ with 5 vertexes $\{s_1, s_2, s_3, s_4, s_5\}$. Thicker edge means larger affinity. The purple rounded rectangle identifies a subgraph $x$ of $G$. The subgraph $x$ is composed of vertexes $\{s_1, s_2, s_3, s_5\}$ and is represented by $x=[0.3, 0.3, 0.3, 0, 0.1]^T, x\in\triangle^5$. $s_4$ is an infective vertex against $x$, since $\pi(s_4,x)>\pi(x)$. $s_5$ is a weak vertex in $x$, since $\pi(s_5,x)<\pi(x)$. The co-vertex $\overline{s_5(x)}$ \rmv{is a virtual vertex that} represents an infective subgraph composed of $\{s_1, s_2, s_3\}$.}
\label{Fig:intuition_example}
\end{figure}

A subgraph can be modeled by a subset of vertexes as well as their probabilistic memberships.
Take Figure~\ref{Fig:intuition_example} as an example.  We assign $L_1$ normalized non-negative weights to a subgraph $x$ containing vertexes $\{s_1, s_2, s_3, s_5\}$ and represent the subgraph by $x=0.3\cdot s_1+0.3\cdot s_2+0.3\cdot s_3+0\cdot s_4+0.1\cdot s_5$. Alternatively, $x$ can be regarded as an $n$-dimensional vector storing all the vertex weights, where the $i$-th dimension $x_i$ is the weight of vertex $s_i$.
Intuitively, $x_i$ embodies the probability that vertex $s_i$ belongs to subgraph $x$. Thus $x_i=0$ indicates that $s_i$ does not belong to $x$.  For example, vertex $s_4$ in Figure~\ref{Fig:intuition_example} does not belong to $x$.
In general, a subgraph can be represented by an $n$-dimensional vector $x\in\triangle^n$ in the standard simplex $\triangle^n=\{x\in{R^n}\mid\sum_{i\in I}{x_i}=1,x_i\geq0\}$.

The average affinity between two subgraphs $x, y\in{\triangle^n}$ is measured by the weighted affinity sum between all their member vertexes.  That is,
\begin{equation}
\pi(y,x)=y^TAx=\sum_i\sum_j{x_iy_ja_{ij}}
\end{equation}

As a special case, when $y=s_i$, that is, $y$ is subgraph of a single vertex, $\pi(s_i,x)=(s_i)^TAx$ represents the average affinity between vertex $s_i$ and subgraph $x$.  Moreover, the average affinity between subgraph $x$ and itself is $\pi(x)=\pi(x,x)=x^TAx$, which measures the internal connection strength between all vertexes of subgraph $x$. Liu and Yan~\cite{GS} indicated that such internal connection strength is a robust measurement of the intrinsic cohesiveness of $x$. Thus, $\pi(x)$ is also called the \textbf{density} of subgraph $x$.

\rmv{A dense subgraph~\cite{GS} is essentially a subset of vertexes in a graph that are coherent and robust against disturbances. Examples of dense subgraphs include communities in a social network or a spam farm in World Wide Web.}
\hlit{
Bul{\`o} \textit{et al.}~\cite{IID} indicated that a dense subgraph is a subgraph of local maximum density $\pi(x)$~\cite{GS,DS}, and every local maximum argument $x^*$ of $\pi(x)$ uniquely corresponds to a dense subgraph. Thus, the dense subgraph seeking problem can be reduced to the following standard quadratic optimization problem (StQP):
\begin{equation}
\label{Eqn:StQP}
\begin{array}{c}
\text{Maximize} \qquad \pi(x)=x^TAx=\sum_i x_i\pi(s_i,x) \\
\text{s.t.} \qquad x\in{\triangle^n}
\end{array}
\end{equation}
}
which can be solved by the Infection Immunization Dynamics (IID)~\cite{IID}. IID finds the dense subgraph $x^*$ by iteratively increasing the graph density $\pi(x)$ in the following two steps.
\begin{description}
  \item [Infection]: for a vertex $s_i$ whose average affinity $\pi(s_i,x)$ is larger than the \rmv{current} graph density $\pi(x)$, such as $\pi(s_4,x)>\pi(x)$ in Figure~\ref{Fig:intuition_example}, increase the its weight $x_i$.
  \item [Immunization]: for a vertex $s_i$ whose average affinity $\pi(s_i,x)$ is smaller than the \rmv{current} graph density $\pi(x)$, such as $\pi(s_5,x)<\pi(x)$ in Figure~\ref{Fig:intuition_example}, decrease its weight $x_i$.
\end{description}

\rmv{
The rationale of IID can be revealed by rewriting $\pi(x)$ to
\begin{equation}
\pi(x)=x^TAx=\sum_i x_i\pi(s_i,x)
\end{equation}
where $x_i$ is the weight of vertex $s_i$ and $\pi(s_i,x)$ is the average affinity between $s_i$ and $x$.
}

For the sake of clarity, we write $\pi(y-x,x)=\pi(y,x)-\pi(x)$ and define the relationship between subgraphs $x$ and $y$ as follow. If $\pi(y-x,x)>0$, then $y$ is said to be \emph{infective} against $x$.  Otherwise, $x$ is said to be \emph{immune} against $y$.  The set of infective subgraphs against $x$ is defined as:
\begin{equation}
\label{Eqn:gamma_x}
\gamma(x)=\{y\in{\triangle^n} \mid \pi(y-x,x)>0\}
\end{equation}

IID performs \textbf{infection} and \textbf{immunization} using the following \textbf{invasion model}.
\begin{equation}
\label{Eqn:invasion_model}
z=(1-\varepsilon)x+\varepsilon y
\end{equation}
The new subgraph $z\in\triangle^n$ is obtained by invading $x\in\triangle^n$ with a  subgraph $\varepsilon y$, where $\varepsilon\in{[0,1]}$ and $y\in\triangle^n$. This transfers an amount of $\varepsilon$ weight from the vertexes in subgraph $x$ to the vertexes in subgraph $y$, while the vertex weights in $z$ still sum to 1. In other words, the weights of vertexes in subgraph $x$ are decreased and the weights of vertexes in subgraph $y$ are increased.

For each iteration, IID selects the optimal graph vertex $s_i$ that maximizes the absolute value of $|\pi(s_i-x,x)|$ by function $s_i=M(x)$, where
\begin{equation}
\label{Eqn:best_mutant_strategy}
\begin{array}{l}
M(x)=\arg\max\limits_{s_i\in(\mathcal{C}_1 \cup \mathcal{C}_2)}\;|\pi(s_i-x,x)| \\
\begin{array}{l}
\mathcal{C}_1=\{s_i \mid \pi(s_i-x,x)>0)\}\\
\mathcal{C}_2=\{s_i \mid \pi(s_i-x,x)<0,x_i>0\}
\end{array}
\end{array}
\end{equation}

If $s_i=M(x)\in\mathcal{C}_1$, then $\pi(s_i,x)>\pi(x)$ and $s_i$ is the strongest infective vertex. In this case, an \textbf{infection} is performed by the invasion model (Equation~\ref{Eqn:invasion_model}) with $y=s_i$. This increases $\pi(x)$ by increasing the weight of infective vertex $s_i$. For example, in Figure~\ref{Fig:intuition_example}, invading $x$ with a weight of $\varepsilon$ of the strongest infective vertex $s_4$ transfers a weight of $\varepsilon$ from $\{s_1, s_2, s_3, s_5\}$ to $s_4$ and increases $\pi(x)$.

If $s_i=M(x)\in\mathcal{C}_2$, then $\pi(s_i,x)<\pi(x)$. Thus $x$ is immune against $s_i$ and $s_i$ is the weakest vertex in subgraph $x$. In such case, an \textbf{immunization} is performed by the invasion model (Equation~\ref{Eqn:invasion_model}) with $y=\overline{s_i(x)}$:
\begin{equation}
\label{Eqn:co_strategy}
\overline{s_i(x)}=\frac{x_i}{x_i-1}(s_i-x)+x
\end{equation}
Here, $\overline{s_i(x)}$ is named the \emph{co-vertex} of $s_i$, and represents an infective subgraph composed of all the vertexes in subgraph $x$ except $s_i$. Thus, invading $x$ with $y=\overline{s_i(x)}$ by the invasion model (Equation~\ref{Eqn:invasion_model}) reduces the weight of vertex $s_i$ and increases the weights of the other vertexes in subgraph $x$. For example, in Figure~\ref{Fig:intuition_example}, The co-vertex $\overline{s_5(x)}$ represents a subgraph composed of $\{s_1, s_2, s_3\}$. Invading subgraph $x$ with a weight of $\varepsilon$ of $\overline{s_5(x)}$ transfers a weight of $\varepsilon$ from $s_5$ to $\{s_1, s_2, s_3\}$ and increases $\pi(x)$.

Formally, the infective vertex (or co-vertex) $y$ of the invasion model (Equation~\ref{Eqn:invasion_model}) can be selected by $y=S(x)$:
\begin{equation}
\label{Eqn:selection_function}
S(x){=}\left\{
\begin{array}{l l}
s_i \ & \text{if  } s_i=M(x)\in\mathcal{C}_1 \\
\overline{s_i(x)} \ & \text{if  } s_i=M(x)\in\mathcal{C}_2
\end{array}\right.
\end{equation}

Bul{\`o} \textit{et al.}~\cite{IID} showed the following.

\begin{theorem}[\cite{IID}]
\label{theorem:nash_gamma_equal}
The following three statements are equivalent for $x\in\triangle^n$.
1) $x$ is immune against all vertexes $s_i\in\triangle^n$; 2) $\gamma(x)=\emptyset$; and 3) $x$ is a dense subgraph with local maximum $\pi(x)$.
\end{theorem}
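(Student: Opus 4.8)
The plan is to prove the cycle $(1)\Rightarrow(2)\Rightarrow(3)\Rightarrow(1)$, with $(1)\Leftrightarrow(2)$ as the combinatorial core and $(2)\Leftrightarrow(3)$ as the analytic part. The one computation behind everything is the quadratic identity obtained by pushing the invasion model (Equation~\ref{Eqn:invasion_model}) through $\pi$: for any $x,y\in\triangle^n$ and $\varepsilon\in[0,1]$, writing $z_\varepsilon=(1-\varepsilon)x+\varepsilon y=x+\varepsilon(y-x)$ and using the symmetry of $A$,
\[
\pi(z_\varepsilon)-\pi(x)=2\varepsilon\,\pi(y-x,x)+\varepsilon^{2}\,\pi(y-x,y-x).
\]
So $\pi(y-x,x)$ controls the first-order change of density when $x$ is invaded by $y$, while the $\varepsilon^{2}$ term is a bounded correction; I would derive this once and reuse it throughout.

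$(2)\Rightarrow(1)$ is immediate, since every vertex $s_i$ lies in $\triangle^n$, so $\gamma(x)=\emptyset$ forces $\pi(s_i-x,x)\le 0$ for all $i$. For $(1)\Rightarrow(2)$ I would use that $\pi(\cdot,x)$ is linear: any $y\in\triangle^n$ is a convex combination $y=\sum_i y_i s_i$, hence $\pi(y,x)=\sum_i y_i\,\pi(s_i,x)\le\pi(x)$, because each $\pi(s_i,x)\le\pi(x)$ by immunity and the $y_i$ are nonnegative and sum to one; thus $\pi(y-x,x)\le 0$ for every $y$, i.e.\ $\gamma(x)=\emptyset$. (In words: $y\mapsto\pi(y,x)$ is maximized over the simplex at a vertex, so immunity against all vertices is immunity against everything.) For $(3)\Rightarrow(1)$ I would argue contrapositively: if $x$ is not immune against some $s_i$ then $\pi(s_i-x,x)>0$, and taking $y=s_i$ in the identity gives $\pi(z_\varepsilon)-\pi(x)=2\varepsilon\,\pi(s_i-x,x)+\varepsilon^{2}\,\pi(s_i-x,s_i-x)>0$ for all small $\varepsilon>0$ while $z_\varepsilon\to x$, so $x$ is not a local maximizer of $\pi$ on $\triangle^n$.

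The hard part will be $(2)\Rightarrow(3)$. Assuming $\gamma(x)=\emptyset$, i.e.\ $\pi(y-x,x)\le 0$ for all $y\in\triangle^n$, a direction with $\pi(y-x,x)<0$ causes no trouble --- the linear term in the identity dominates for small $\varepsilon$, so $\pi(z_\varepsilon)<\pi(x)$. The obstacle is the \emph{flat} directions, those $y$ with $\pi(y-x,x)=0$, for which the identity only yields $\pi(z_\varepsilon)-\pi(x)=\varepsilon^{2}\,\pi(y-x,y-x)$, so I would need the quadratic remainder $\pi(y-x,y-x)$ to be $\le 0$. My plan here is to pass to the face of $\triangle^n$ spanned by the support of $x$ --- where $\gamma(x)=\emptyset$ pins down $\pi(s_j,x)=\pi(x)$ for every $j$ in the support --- and invoke the structure of the standard quadratic program in Equation~\ref{Eqn:StQP}, which is precisely the second-order half of the optimality characterization of Bul{\`o} \textit{et al.}~\cite{IID}, to conclude that $x$ is a local maximizer, i.e.\ a dense subgraph. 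I expect this flat-direction analysis to be the crux; the other implications are bookkeeping once the quadratic identity is in hand.
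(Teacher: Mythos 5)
The paper itself gives no proof of this theorem --- it is imported verbatim from Bul\`o \textit{et al.}~\cite{IID} --- so the comparison is against the cited result rather than an in-paper argument. Your quadratic identity, your $(1)\Leftrightarrow(2)$ via linearity of $y\mapsto\pi(y,x)$ over the simplex, and your contrapositive $(3)\Rightarrow(1)$ are all correct and are the standard arguments. The genuine gap is exactly where you located it, and it cannot be closed the way you plan: there is no ``second-order half'' of the optimality characterization that follows from $\gamma(x)=\emptyset$ alone. The condition $\gamma(x)=\emptyset$ is equivalent to $x$ being a symmetric Nash equilibrium, i.e.\ a first-order KKT point of the StQP in Equation~\ref{Eqn:StQP}; local maximizers form a strictly smaller class, characterized by the \emph{additional} (ESS-type) requirement that $\pi(y-x,y-x)\le 0$ on every flat direction with $\pi(y-x,x)=0$ --- precisely the inequality you acknowledge needing and do not derive.

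The implication $(2)\Rightarrow(3)$ is in fact false as literally stated, even within the paper's own affinity model. Take two tight pairs of points placed far apart and arranged symmetrically, so that $a_{12}=a_{34}=a$ and all four cross-affinities equal $c$ with $a>2c$ (e.g.\ $a=0.9$, $c=0.1$; this is exactly realizable with the Laplacian kernel of Equation~\ref{Eqn:exp_kernel_affinity}). The barycenter $x=(\tfrac14,\tfrac14,\tfrac14,\tfrac14)$ satisfies $\pi(s_i,x)=\tfrac{a+2c}{4}=\pi(x)$ for every $i$, hence is immune against all vertices and, by your own linearity argument, has $\gamma(x)=\emptyset$. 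Yet along the feasible direction $d=(t,t,-t,-t)$ one has $d^{T}Ax=0$ and
\[
\pi(x+d)-\pi(x)=d^{T}Ad=4(a-2c)\,t^{2}>0,
\]
so $x$ is not a local maximizer. The resolution is not a cleverer proof but a sharper statement: either read ``dense subgraph'' in (3) as ``Nash/KKT point of the StQP'' (which is all that Theorem~\ref{theorem:infection_immune} and the IID dynamics actually deliver), or strengthen (2) by the flat-direction condition before claiming local maximality. Your write-up should say this explicitly rather than defer the crux to \cite{IID}, where it is handled by weakening the conclusion, not by supplying the missing inequality.
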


According to Theorem~\ref{theorem:nash_gamma_equal}, IID searches for the global dense subgraph $x^*$ by iteratively shrinking $\gamma(x)$ until $\gamma(x)=\emptyset$.

Moreover, Bul{\`o} \textit{et al.}~\cite{Bulo50} showed the following.

\begin{theorem}
\label{theorem:infection_immune}
Let $y\in{\gamma(x)}$ and $z=(1-\varepsilon)x+\varepsilon{y}$, where $\varepsilon=\varepsilon_y(x)$ is defined as follows.
\begin{equation}
\label{Eqn:kesi_share}
\varepsilon_y(x)=\left\{
\begin{array}{c c}
\min\left[-\frac{\pi(y-x,x)}{\pi(y-x)},1\right] \ & \mbox{if} \; \pi(y-x)<0 \\[2mm]
1 \ & \mbox{otherwise}
\end{array}\right.
\end{equation}
Then, $y\not\in{\gamma(z)}$ and $\pi(z)>\pi(x)$.
\end{theorem}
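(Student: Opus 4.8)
The plan is to reduce the whole statement to the behaviour of a one-variable quadratic. Writing $z = x + \varepsilon(y-x)$ and using that $A$ is symmetric (since $a_{ij}=a_{ji}$ by Equation~\ref{Eqn:exp_kernel_affinity}), I would expand
$$f(\varepsilon) := \pi(z) = \pi(x) + 2\varepsilon\,\pi(y-x,x) + \varepsilon^2\,\pi(y-x),$$
so that $f$ is a parabola in $\varepsilon$ with $f(0)=\pi(x)$ and $f'(0)=2\,\pi(y-x,x)$; the hypothesis $y\in\gamma(x)$ together with Equation~\ref{Eqn:gamma_x} gives $\pi(y-x,x)>0$, hence $f'(0)>0$. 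Likewise, since $y-z=(1-\varepsilon)(y-x)$, I would compute
$$\pi(y-z,z) = (1-\varepsilon)\bigl[\pi(y-x,x)+\varepsilon\,\pi(y-x)\bigr] = \tfrac12(1-\varepsilon)\,f'(\varepsilon),$$
which links the test ``$y\in\gamma(z)$'' directly to the sign of $f'(\varepsilon)$ (for $\varepsilon<1$). I would also note that $z\in\triangle^n$, being a convex combination of $x,y\in\triangle^n$ with $\varepsilon\in[0,1]$, so that ``$y\in\gamma(z)$'' and the density $\pi(z)$ are well posed.

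The remaining work is a case split on the sign of $\pi(y-x)$, matching the definition of $\varepsilon_y(x)$ in Equation~\ref{Eqn:kesi_share}. If $\pi(y-x)\ge 0$, then $\varepsilon=1$ and $z=y$; here $f$ is convex (or linear) and $f'(\varepsilon)\ge f'(0)>0$ for all $\varepsilon\ge0$, so $f$ is strictly increasing on $[0,1]$ and $\pi(z)=f(1)>f(0)=\pi(x)$, while $y\notin\gamma(z)=\gamma(y)$ holds trivially since $\pi(y-y,y)=\pi(0,y)=0$. If $\pi(y-x)<0$, then $f$ is strictly concave with unconstrained maximiser $\varepsilon^\star=-\pi(y-x,x)/\pi(y-x)>0$ and $\varepsilon=\min(\varepsilon^\star,1)$. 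For $\varepsilon^\star\le 1$ we take $\varepsilon=\varepsilon^\star$; a direct substitution gives $f(\varepsilon^\star)-f(0)=-\pi(y-x,x)^2/\pi(y-x)>0$, while the choice of $\varepsilon^\star$ forces $\pi(y-x,x)+\varepsilon^\star\pi(y-x)=0$, hence $\pi(y-z,z)=0$ and $y\notin\gamma(z)$. For $\varepsilon^\star>1$ we take $\varepsilon=1$, $z=y$; strict concavity makes $f$ strictly increasing on $[0,\varepsilon^\star)\supseteq[0,1]$, so $\pi(z)=f(1)>\pi(x)$, and again $y\notin\gamma(y)$.

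I do not anticipate a genuine difficulty — the argument is careful bookkeeping of a parabola. The main point requiring attention is strictness: the strict gain $\pi(z)>\pi(x)$ must be traced back to the strict inequality $\pi(y-x,x)>0$ supplied by $y\in\gamma(x)$, and the boundary case $\varepsilon=1$ (so $z=y$, making the factor $1-\varepsilon$ vanish) together with the degenerate case $\pi(y-x)=0$ must be checked explicitly so that the ``otherwise'' branch of $\varepsilon_y(x)$ in Equation~\ref{Eqn:kesi_share} is fully covered.
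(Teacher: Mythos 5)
Your proposal is correct. The paper does not prove this theorem itself --- it cites it from Bul\`o \emph{et al.} --- and your argument (expanding $\pi(z)$ as the quadratic $f(\varepsilon)=\pi(x)+2\varepsilon\,\pi(y-x,x)+\varepsilon^2\pi(y-x)$, relating $\pi(y-z,z)$ to $(1-\varepsilon)f'(\varepsilon)$, and splitting on the sign of $\pi(y-x)$ so that $\varepsilon_y(x)$ is either the concave maximiser or the endpoint $1$) is exactly the standard proof of that cited result, with the strictness and the boundary/degenerate cases handled correctly.
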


According to Theorem~\ref{theorem:infection_immune}, any infective subgraph $y\in\gamma(x)$ can be excluded from $\gamma(z)$ by invading $x$ with weight $\varepsilon=\varepsilon_y(x)$ of $y$. This monotonously reduces the volume of $\gamma(z)$ and guarantees the convergence of IID.

IID needs the full affinity matrix to find the dense subgraph by solving the StQP problem (Equation~\ref{Eqn:StQP}). Therefore, its scalability is largely limited due to the $\mathcal{O}(n^2)$ time and space complexity in computing the complete affinity matrix $A$. Although forcing $A$ sparse can reduce such expensive time and space cost to some extent~\cite{PSC}, the enforced sparsity breaks the intrinsic cohesiveness of dense subgraphs, thus inevitably impairs the detection quality. Both the scalability of DS~\cite{DS} and SEA~\cite{SEA} are limited due to the same reason, since they need the complete affinity matrix $A$ as well.


\section{The ALID Approach}
\label{sec:ALID}
In this section, we introduce our {\ALID} method.  The major idea of {\ALID} is to confine the computation of infection and immunization in small local ranges within the Region of Interest (ROI), so that only small submatrices of the affinity matrix need to be computed. As a result, {\ALID} largely avoids the affinity matrix computation and significantly reduces both the time and space complexity.

The framework of {\ALID} is an iterative execution of the following three steps, as summarized in Figure~\ref{Fig:flowchart_ALID}.

\begin{figure}[h]
\centering
\includegraphics[height=47mm]{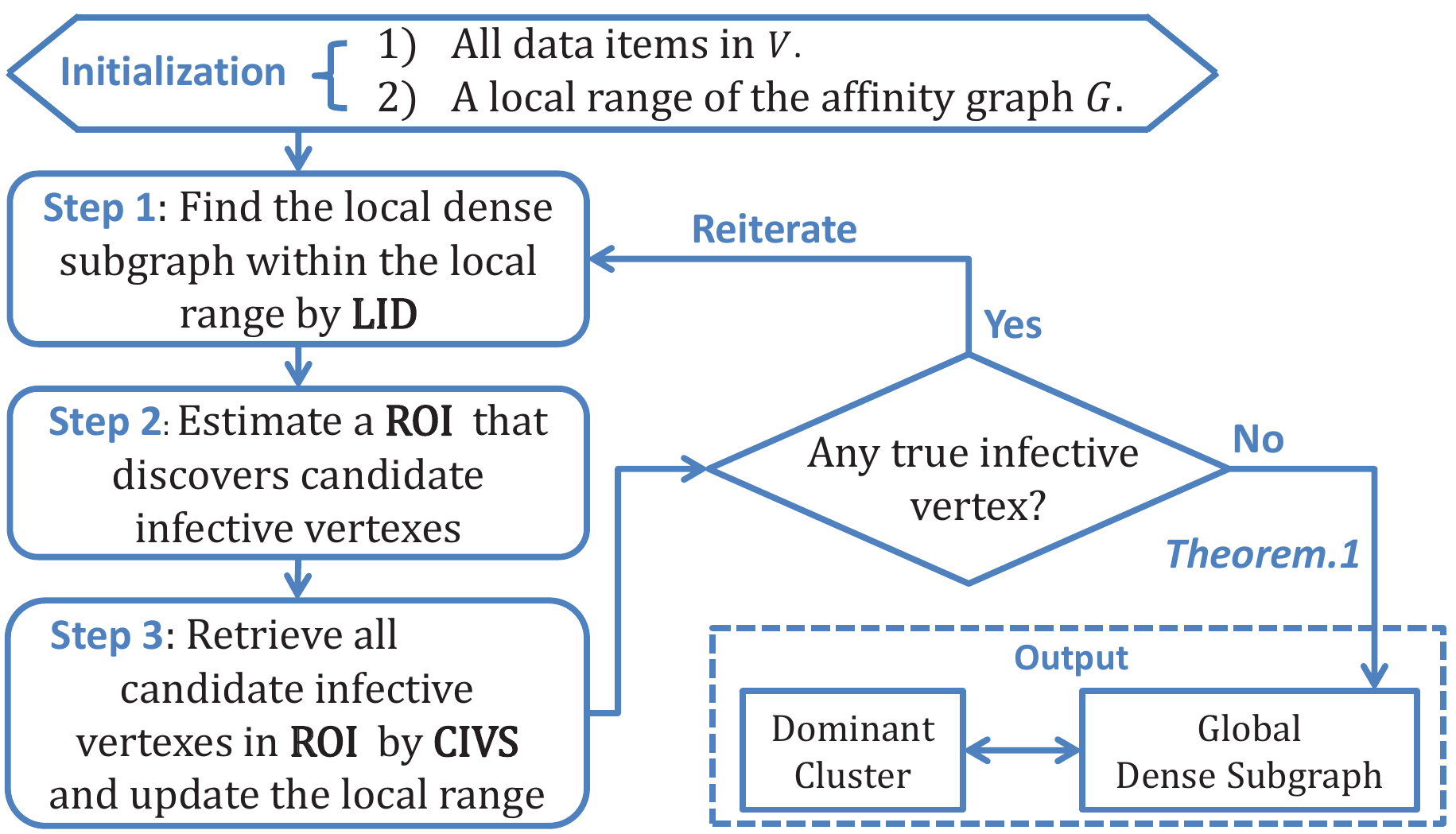}
\caption{{\ALID} framework overview.}
\label{Fig:flowchart_ALID}
\vspace{-3mm}
\end{figure}

\begin{description}
\item [Step 1] \emph{Finding local dense subgraph by Localized Infection Immunization Dynamics (LID)}. We propose a new algorithm LID that confines all infection immunization iterations in a small local range to find the local dense subgraph. LID avoids computing the full affinity matrix by selectively computing a few columns of the local affinity matrix.

\nop{
As illustrated in Figure~\ref{Fig:intuition_example}, given a local range, LID is able to find the local dense subgraph $\hat{x}$ composed of vertexes $\{s_1, s_2, s_3\}$. $\hat{x}$ does not contain vertex $s_4$, since $s_4$ is not covered by the local range.
}

\item [Step 2] \emph{Estimating a Region of Interest (ROI).} The local dense subgraph found in the LID step may or may not be a global dense subgraph, since there may still be global infective vertexes that are not covered by the current local range, as indicated by Theorem~\ref{theorem:nash_gamma_equal}.
    \nop{
    For example, the local dense subgraph $\hat{x}$ within the local range in Figure~\ref{Fig:intuition_example} is not a global dense subgraph, since vertex $s_4$ is still infective against $\hat{x}$.
    }
    Therefore, we estimate a ROI to identify the candidate infective vertexes in the global range, so that the current local range can be further updated to cover the global dense subgraph.

\item [Step 3] \emph{Candidate Infective Vertex Search (CIVS).} CIVS efficiently retrieves the candidate infective vertexes within a ROI and uses them to update the local range for the next iteration.
\end{description}

The iterative process of {\ALID} terminates when the local dense subgraph found in the last round of iteration is immune against all vertexes in the global range. According to Theorem~\ref{theorem:nash_gamma_equal}, such a subgraph is a global dense subgraph that identifies a true dominant cluster.
We explain the details of the three steps in the first three subsections as follow.

\subsection{Localized Infection Immunization Dynamics (Step 1)}
\label{Section:stage1}
The key idea of Localized Infection Immunization Dynamics (LID) is that the dense subgraph on a small local range of the affinity graph can be detected  by selectively computing only a few columns of the corresponding local affinity submatrix. Denote by $\beta\subset{I}$ the local range of the affinity graph, which is the index set of a local group of graph vertexes. LID finds the local dense subgraph $\hat{x}$ that maximizes $\pi(x)$ in the local range $\beta$ by localizing the infection immunization process on the selectively computed submatrix $A_{\beta\alpha}$ (see Figure~\ref{Fig:LID_matrix}). Here, $\alpha\triangleq\{i\in\beta \mid x_i>0\}$ is the \emph{support} of the subgraph $x\in\triangle_\beta^n$, where $\triangle_\beta^n=\{x\in \triangle^n \mid \sum_{i\in \beta}x_i=1\}$ represents the set of all possible subgraphs within local range $\beta$. In this way, the computation of the full matrix $A_{\beta\beta}$ can be effectively avoided.  Consequently, LID is more efficient than directly running IID in the local range.

\begin{figure}[h]
\centering
\includegraphics[width=38mm]{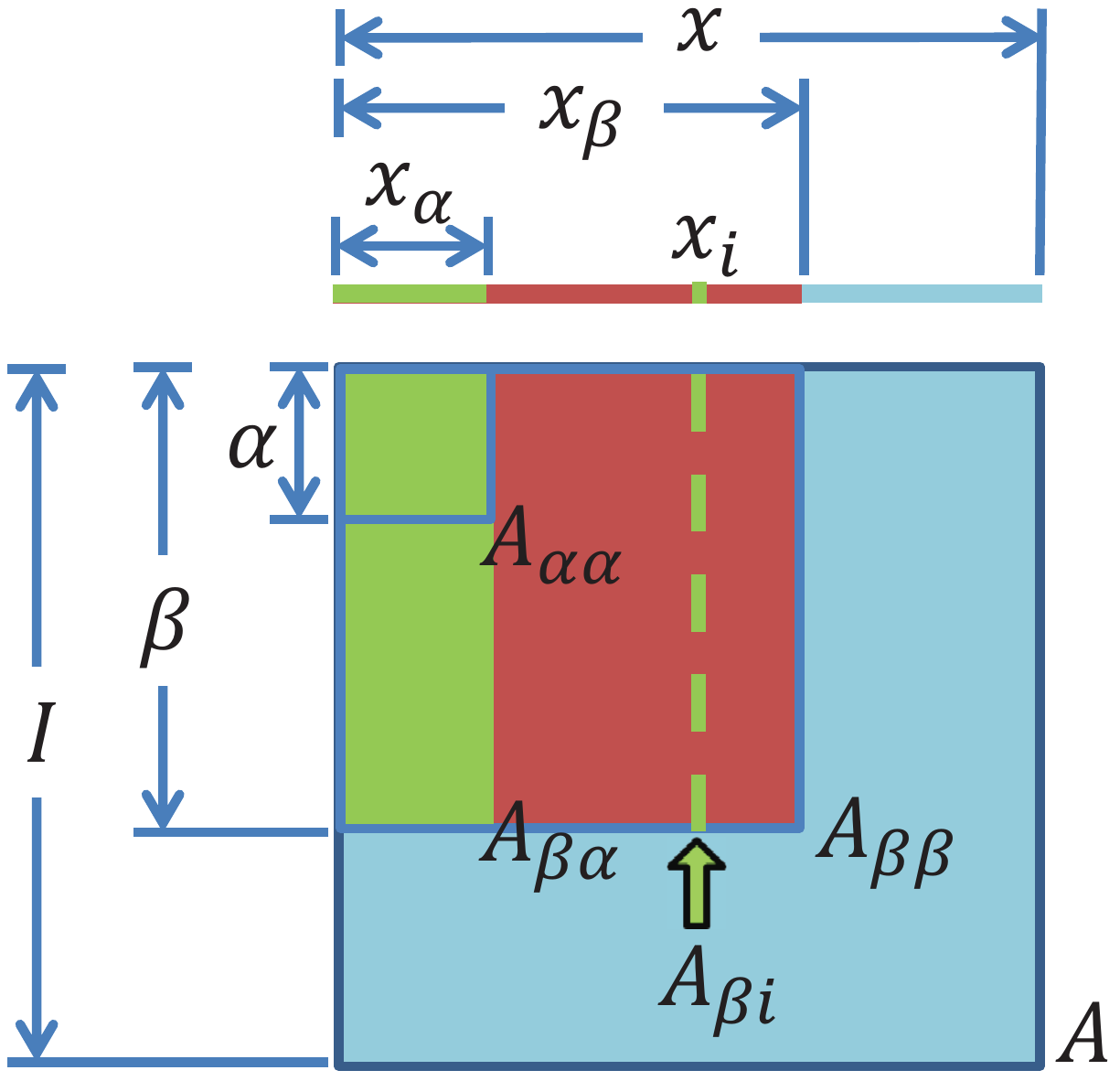}
\vspace{-2mm}
\caption{The affinity matrix with ordered data item indexes. $A$ is the global affinity matrix, $A_{\beta\beta}$ is the affinity matrix in the local range $\beta$, and $A_{\beta\alpha}$ is the group of columns in $A_{\beta\beta}$ corresponding to the support $\alpha$. The green dashed line denotes the new column $A_{\beta i}$ : $i\in{(\beta-\alpha)}$. Only the green parts ($A_{\beta\alpha}$ and $A_{\beta{i}}$) are involved in the LID iteration.}
\label{Fig:LID_matrix}
\vspace{-3mm}
\end{figure}

Algorithm~\ref{alg:the_core_iteration} shows the steps in a LID iteration, which takes $\left[x^{(t)}, (A_{\beta\alpha}x_\alpha)^{(t)}\right]$ as input and obtains the output $\left[x^{(t+1)}, (A_{\beta\alpha}x_\alpha)^{(t+1)}\right]$ in linear time and space with respect to the size of $\beta$. The superscripts $(t)$ and $(t+1)$ indicate the number of iterations. For the interest of simplicity, we omit $(t)$ and illustrate the details of Algorithm~\ref{alg:the_core_iteration} as follows.

First, we select the infective vertex (or co-vertex) $S(x)\in\triangle_\beta^n$ in the local range $\beta$ by Equation~\ref{Eqn:best_mutant_strategy} and \ref{Eqn:selection_function}. Recall that $x\in\triangle_\beta^n$, the component $\pi(s_i-x,x)$ can be computed as
\begin{equation}
\label{Eqn:local_payoff_diff}
\begin{array}{l l}
\pi(s_i-x,x) & =(s_i-x_\beta)^T A_{\beta\beta} x_\beta \\
                   & =(A_{\beta\alpha}x_\alpha)_i-\sum\limits_{i\in\alpha}{(A_{\beta\alpha}x_\alpha)_ix_i}
\end{array}
\end{equation}
where only the graph vertexes $s_i\in\triangle_\beta^n$ are considered.

Second, we compute the invasion share $\varepsilon_y(x)$ by Equation~\ref{Eqn:kesi_share}, whose value depends on $y=S(x)$ in two cases:
\begin{description}
  \item [Case 1]: \emph{Infection} $(y=S(x)=s_i)$. The key components $\pi(y-x,x)$ and $\pi(y-x)$ in Equation~\ref{Eqn:kesi_share} are computed by Equations~\ref{Eqn:local_payoff_diff} and~\ref{Eqn:proof_linearity_case_ei_kesi}, respectively.
      \begin{equation}
      \label{Eqn:proof_linearity_case_ei_kesi}
      \begin{array}{l l}
      \pi(s_i-x) & =(s_i-x_\beta)^T A_{\beta\beta} (s_i-x_\beta) \\[1mm]
                       & =-2(A_{\beta\alpha}x_\alpha)_i+\sum\limits_{i\in\alpha}{(A_{\beta\alpha}x_\alpha)_ix_i}
      \end{array}
      \end{equation}
  \item [Case 2]: \emph{Immunization} $(y=S(x)=\overline{s_i(x)})$. $\varepsilon_y(x)$ is computed by plugging Equation~\ref{Eqn:proof_linearity_case_barei_kesi} into Equation~\ref{Eqn:kesi_share}.
      \begin{equation}
      \label{Eqn:proof_linearity_case_barei_kesi}
      \left\{
      \begin{array}{l}
      \pi(\overline{s_i(x)}-x,x)=\frac{x_i}{x_i-1}\pi(s_i-x,x) \\[1mm]
      \pi(\overline{s_i(x)}-x)=(\frac{x_i}{x_i-1})^2\pi(s_i-x)
      \end{array}\right.
      \end{equation}

\end{description}

Last, the new subgraph $x^{(t+1)}$ is obtained by Equation~\ref{Eqn:next_iterate_on_beta} and $(A_{\beta\alpha}x_\alpha)^{(t+1)}$ is computed by Equation~\ref{Eqn:next_iterate_Ax} in linear time and space for the next iteration.
\begin{equation}
\label{Eqn:next_iterate_on_beta}
x^{(t+1)} = (1-\varepsilon_y(x))x+\varepsilon_y(x)S(x)
\end{equation}
\begin{equation}
\label{Eqn:next_iterate_Ax}
\begin{array}{l}
(A_{\beta\alpha}x_\alpha)^{(t+1)} = (Ax^{(t+1)})_\beta = \\[2mm]
\left\{
\begin{array}{l l}
\varepsilon_y(x)[A_{\beta i}{-}A_{\beta\alpha}x_\alpha]{+}A_{\beta\alpha}x_\alpha & \, y{=}s_i \\[1mm]
(\frac{x_i}{x_i-1})\varepsilon_y(x)[A_{\beta i}{-}A_{\beta\alpha}x_\alpha]{+}A_{\beta\alpha}x_\alpha & \, y{=}\overline{s_i(x)}
\end{array}\right.
\end{array}
\end{equation}

\begin{algorithm}[t]
\label{alg:the_core_iteration}
\caption{A single period of LID iteration}

\KwIn{$x^{(t)}$, $(A_{\beta\alpha}x_\alpha)^{(t)}$}
\KwOut{$x^{(t+1)}$, $(A_{\beta\alpha}x_\alpha)^{(t+1)}$}
\BlankLine

\begin{algorithmic}[1]
\STATE Select the infective vertex $y={S(x)}$ by Equation~\ref{Eqn:selection_function}

\STATE Calculate the invasion share $\varepsilon_y(x)$ by Equation~\ref{Eqn:kesi_share}

\STATE Update $x^{(t+1)}$ by the invasion model of Equation~\ref{Eqn:next_iterate_on_beta}

\STATE Update $(A_{\beta\alpha}x_\alpha)^{(t+1)}$ by Equation~\ref{Eqn:next_iterate_Ax}

\RETURN $x^{(t+1)}$, $(A_{\beta\alpha}x_\alpha)^{(t+1)}$
\end{algorithmic}
\end{algorithm}

\rmv{
\begin{algorithm}[t]
\label{alg:find_local_Nash}
\caption{\rmv{The entire LID iteration}}

\KwIn{$x$, $A_{\beta\alpha}x_\alpha$}
\KwOut{$\hat{x}$, $A_{\beta\alpha}\hat{x}_{\alpha}$}
\BlankLine

\begin{algorithmic}[1]
\REPEAT

\STATE  $\left[x^{(t)},(A_{\beta\alpha}x_\alpha)^{(t)}\right] \rightarrow \left[x^{(t+1)},(A_{\beta\alpha}x_\alpha)^{(t+1)}\right]$ \\ by {Algorithm~\ref{alg:the_core_iteration}}.
\STATE Index update: $t\leftarrow{t+1}$.
\UNTIL{$\gamma_\beta(x^{(t)})=\emptyset$, or $t>T$.}

\RETURN $\hat{x}=x^{(t)}$, $A_{\beta\alpha}\hat{x}_{\alpha}=(A_{\beta\alpha}x_\alpha)^{(t)}$
\end{algorithmic}
\end{algorithm}
}

Each LID iteration in Algorithm~\ref{alg:the_core_iteration} is guaranteed by {Theorem~\ref{theorem:infection_immune}} to shrink the size of the local infective subgraph set $\gamma_\beta(x)=\{y\in\triangle_\beta^n \mid \pi(y-x,x)>0\}$.
Thus, we obtain the local dense subgraph $\hat{x}\in\triangle_\beta^n$ in the local range $\beta$ by repeating Algorithm~\ref{alg:the_core_iteration} to shrink $\gamma_\beta(x)$ until $\gamma_\beta(x)=\emptyset$.
According to {Theorem~\ref{theorem:nash_gamma_equal}}, $\gamma_\beta(\hat x)=\emptyset$ indicates that $\hat x$ is immune against all vertexes $s_i\in\triangle_\beta^n$, thus $\hat x$ is a local dense subgraph with local maximum $\pi(\hat x)$ in the local range $\beta$. In practice, we stop the LID iteration when $\pi(x)$ is stable or the total number of iterations exceeds an upper limit $T$. Let $b=|\beta|$ be the size of $\beta$, the time and space complexities of LID method are $\mathcal{O}(Tb)$ and $\mathcal{O}(b)$, respectively, not including the time and space cost in computing the affinity matrix $A$. The initialization of $\left[x,A_{\beta\alpha}x_\alpha\right]$ will be discussed in Section~\ref{Section:Stage3}.

All {\ALID} iterations are restricted on the dynamically computed submatrix $A_{\beta\alpha}$, where the new matrix column $A_{\beta i}$ only needs to be computed and stored when $i\in(\beta-\alpha)$ (see Figure~\ref{Fig:LID_matrix}). We will discuss how LID effectively reduces the original $\mathcal{O}(n^2)$ time and space complexity of affinity matrix computation in Section~\ref{Section:Complexity_Analysis}.

\subsection{Estimating ROI (Step 2)}
\label{sec:Estimate_ROI}
A local dense subgraph $\hat{x}\in\triangle_\beta^n$ may not be a global dense subgraph in $\triangle^n$, since $\beta$ may not fully cover the true dense subgraph in the global range $I$. Therefore, there may still be graph vertexes in the complementary range $U=I-\beta$, which are infective against the local dense subgraph $\hat{x}$. Thus, the current local range $\beta$ should be updated to include such infective vertexes in $U$, so that the true dense subgraph with maximum graph density can be detected by LID.

A natural way to find the global dense subgraph $x^*$ is to keep invading $\hat{x}$ using the infective vertexes in $U$ until no infective vertex exists in the global range $I$.
However, fully scanning $U$ for infective vertexes leads to an overall time complexity of $\mathcal{O}(n^2)$ in detecting all dominant clusters, since $U$ contains all the remaining vertexes in $I$ with an overwhelming proportion of irrelevant vertexes.
To tackle this problem, we estimate a Region of Interest (ROI) from $\hat{x}$ to include all the infective vertexes and exclude most of the irrelevant ones. Only the limited amount of vertexes inside the ROI are used to update $\beta$, which largely reduces the amount of vertexes to be considered and effectively reduces the time and space complexity of {\ALID}.

Before estimating the ROI, we first construct a \textbf{double-deck hyperball} $H(D,R_{in},R_{out})$ from $\hat{x}$, where $D\in{R^d}$ is the ball center and $R_{in},R_{out}$ are the radiuses of the inner and outer balls, respectively, which are defined as follows.
\begin{equation}
\label{Eqn:define_hyperball}
\left\{
\begin{array}{l}
D=\sum\limits_{i\in\alpha}{v_i\hat x_i},\;where\;v_i\in{V}\;are\;the\;data\;items. \\[3mm]
R_{in}=\frac{1}{k}\ln(\frac{\lambda_{in}}{\pi(\hat{x})}),\;where\;\lambda_{in}{=}\sum\limits_{i\in\alpha}{\hat x_ie^{-k\vnorm{v_i-D}}} \\[1mm]
R_{out}=\frac{1}{k}\ln(\frac{\lambda_{out}}{\pi(\hat{x})}),\;where\;\lambda_{out}{=}\sum\limits_{i\in\alpha}{\hat x_ie^{k\vnorm{v_i-D}}}
\end{array}\right.
\end{equation}
where $k$ is the positive scaling factor in Equation~\ref{Eqn:exp_kernel_affinity}. Proposition~\ref{prop:triangle_inequality_hyperball} gives two important properties of the double-deck hyperball $H(D,R_{in},R_{out})$.

According to the two properties in Proposition~\ref{prop:triangle_inequality_hyperball}, the surfaces of the inner ball $H(D,R_{in})$ and the outer ball $H(D,R_{out})$ are two boundaries, which guarantee that every data entry inside the inner ball corresponds to an infective vertex and the ones outside the outer ball are non-infective.

\begin{proposition}
\label{prop:triangle_inequality_hyperball}
Given the local dense subgraph $\hat{x}$, the double-deck hyperball $H(D,R_{in},R_{out})$ has the following properties:
\begin{enumerate}
  \item $\forall{j}\in{I}$ and $\vnorm{v_j-D}<R_{in}$, $\pi(s_j-\hat{x},\hat{x})>0$; and
  \item $\forall{j}\in{I}$ and $\vnorm{v_j-D}>R_{out}$, $\pi(s_j-\hat{x},\hat{x})<0$.
\end{enumerate}
\end{proposition}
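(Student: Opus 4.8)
The plan is to derive both items from the triangle inequality for the $L_p$-norm together with the monotonicity of $t\mapsto e^{-kt}$. Write $\pi(s_j-\hat{x},\hat{x})=\pi(s_j,\hat{x})-\pi(\hat{x})$ with $\pi(s_j,\hat{x})=\sum_{i\in\alpha}a_{ji}\hat{x}_i$. Since $\alpha\subseteq\beta$ and the ROI is only ever used to scan vertexes of $U=I-\beta$, we may take $j\notin\alpha$, so that $a_{ji}=e^{-k\vnorm{v_i-v_j}}$ for every $i\in\alpha$. The entire argument then consists of sandwiching $\pi(s_j,\hat{x})$ between $e^{-k\vnorm{v_j-D}}\lambda_{in}$ and $e^{-k\vnorm{v_j-D}}\lambda_{out}$ and comparing with $\pi(\hat{x})$ via the definitions of $R_{in}$ and $R_{out}$ in Equation~\ref{Eqn:define_hyperball}.

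For item 1, I would use the ordinary triangle inequality $\vnorm{v_i-v_j}\le\vnorm{v_i-D}+\vnorm{v_j-D}$; multiplying by $-k$ and exponentiating gives $a_{ji}\ge e^{-k\vnorm{v_j-D}}e^{-k\vnorm{v_i-D}}$, and summing against the weights $\hat{x}_i$ over $i\in\alpha$ yields $\pi(s_j,\hat{x})\ge e^{-k\vnorm{v_j-D}}\lambda_{in}$. If $\vnorm{v_j-D}<R_{in}=\frac{1}{k}\ln(\lambda_{in}/\pi(\hat{x}))$, then $e^{-k\vnorm{v_j-D}}>\pi(\hat{x})/\lambda_{in}$, and since $\lambda_{in}>0$ this gives $\pi(s_j,\hat{x})>\pi(\hat{x})$, i.e. $\pi(s_j-\hat{x},\hat{x})>0$.

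For item 2, the symmetric step is the reverse triangle inequality $\vnorm{v_i-v_j}\ge\vnorm{v_j-D}-\vnorm{v_i-D}$, which after the same exponentiation yields the per-term upper bound $a_{ji}\le e^{-k\vnorm{v_j-D}}e^{k\vnorm{v_i-D}}$; summing against $\hat{x}_i$ gives $\pi(s_j,\hat{x})\le e^{-k\vnorm{v_j-D}}\lambda_{out}$ (this bound in fact holds for every $j\in I$, since when $j\in\alpha$ we only drop the non-negative term $a_{jj}\hat{x}_j=0$). If $\vnorm{v_j-D}>R_{out}=\frac{1}{k}\ln(\lambda_{out}/\pi(\hat{x}))$, then $e^{-k\vnorm{v_j-D}}<\pi(\hat{x})/\lambda_{out}$, hence $\pi(s_j,\hat{x})<\pi(\hat{x})$ and $\pi(s_j-\hat{x},\hat{x})<0$.

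There is no deep obstacle here; the computation is only a few lines. The two points that need care are: (i) well-definedness of the radii, i.e. $\pi(\hat{x})>0$ and $\lambda_{in},\lambda_{out}>0$ so that the logarithms in Equation~\ref{Eqn:define_hyperball} are meaningful, which holds whenever $\hat{x}$ is a non-degenerate dense subgraph (support of size at least two, which LID returns in practice); and (ii) selecting the correct orientation of the triangle inequality in each case, so that the per-term estimate lines up with $\lambda_{in}$ (a lower bound, inner ball) versus $\lambda_{out}$ (an upper bound, outer ball). Both forms of the triangle inequality hold for every $L_p$-norm with $p\ge1$, which is exactly the setting of Equation~\ref{Eqn:exp_kernel_affinity}, so the ``law of triangle inequality'' argument goes through verbatim.
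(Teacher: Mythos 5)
Your proof is correct and follows essentially the same route as the paper's: the paper defines $f_{in}(v_j)=\lambda_{in}e^{-k\vnorm{v_j-D}}$ and $f_{out}(v_j)=\lambda_{out}e^{-k\vnorm{v_j-D}}$, sandwiches $\pi(s_j,\hat{x})$ between them via the two orientations of the triangle inequality, and concludes by monotonicity of the exponential together with the definitions of $R_{in}$ and $R_{out}$ --- exactly your argument with the bounds written out explicitly. Your explicit handling of the zero-diagonal convention ($a_{jj}=0$, so the lower bound needs $j\notin\alpha$ while the upper bound holds for all $j\in I$) is in fact slightly more careful than the paper, which silently writes $\pi(s_j,\hat{x})=\sum_{i\in\alpha}\hat{x}_ie^{-k\vnorm{v_j-v_i}}$ for all $j$.
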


Proposition~\ref{prop:triangle_inequality_hyperball} is proved in Section~\ref{sec:double_hyperball_proof_appendix} of appendix in ~\cite{appendix_pdf}.

\rmv{
\begin{proof}
Let
\begin{equation}
\label{Eqn:f_vi_inout}
\left\{
\begin{array}{l}
f_{in}(v_j)=\lambda_{in}e^{-k\vnorm{v_j-D}} \\[1mm]
f_{out}(v_j)=\lambda_{out}e^{-k\vnorm{v_j-D}}
\end{array}\right.
\end{equation}
By plugging $\lambda_{in}$ and $\lambda_{out}$ (Equation~\ref{Eqn:define_hyperball}) into Equation~\ref{Eqn:f_vi_inout}, we have:
\begin{equation}
\label{Eqn:f_vi_inout_specific}
\left\{
\begin{array}{l}
f_{in}(v_j)=\sum\limits_{i\in\alpha}x_ie^{-k(\vnorm{v_j-D}+\vnorm{v_i-D})} \\
f_{out}(v_j)=\sum\limits_{i\in\alpha}x_ie^{-k(\vnorm{v_j-D}-\vnorm{v_i-D})}
\end{array}\right.
\end{equation}
Recall that the scaling factor $k>0$ is always positive (Equation~\ref{Eqn:exp_kernel_affinity}). Applying the triangle inequality to Equation~\ref{Eqn:f_vi_inout_specific}, we obtain
\begin{equation}
\label{Eqn:hyperball_final_proof}
\left\{
\begin{array}{l}
\forall{j}\in{I}, \pi(s_j,x)\geq{f_{in}(v_j)} \\
\forall{j}\in{I}, \pi(s_j,x)\leq{f_{out}(v_j)}
\end{array}\right.
\end{equation}
where $\pi(s_j,x)=(Ax)_j=\sum\limits_{i\in\alpha}x_ie^{-k\vnorm{v_j-v_i}}$.

For any vertex $v_j$ that satisfies $\vnorm{v_j-D}=R_{in}$, we have $f_{in}(v_j)=\pi(\hat{x})$ by plugging Equation~\ref{Eqn:define_hyperball} into Equation~\ref{Eqn:f_vi_inout}.
Since $f_{in}(v_j)$ monotonously decreases with respect to $\vnorm{v_j-D}$, we can derive $\forall{j}\in{I}$ and $\vnorm{v_j-D}<R_{in}$, $f_{in}(v_j)>\pi(\hat{x})$.
Therefore, considering the first inequation of Equation~\ref{Eqn:hyperball_final_proof}, we can derive that $\forall{j}\in{I}$ and $\vnorm{v_j-D}<R_{in}$, $\pi(s_j,\hat{x})\geq f_{in}(v_j)>\pi(\hat{x})$, which proves the first property of Proposition~\ref{prop:triangle_inequality_hyperball}. The second property can be proved similarly by considering $\vnorm{v_j-D}=R_{out}$.
\end{proof}
}

\hlit{
The \textbf{ROI} is defined as a growing hyperball $H_c(D,R)$, whose surface starts from the inner ball and gradually approaches the outer ball as the {\ALID} iteration continues. The radius $R$ is defined as
\begin{equation}
\label{Eqn:ROI_final_def}
R=R_{in}+\theta(c)(R_{out}-R_{in}) \\[1mm]
\end{equation}
where $\theta(c)=\frac{1}{1+e^{(4-c/2)}}$ is a shifted logistic function to control the growing speed and $c$ is the current number of {\ALID} iteration. When $c$ grows large, we have $\theta(c)\approx 1$ and $R\approx R_{out}$, thus, the ROI is guaranteed to coincide with the outer ball as $c$ grows. Since the outer ball $H(D,R_{out})$ is guaranteed by \hlit{Proposition~\ref{prop:triangle_inequality_hyperball}} to contain all infective vertexes in the global range $I$, the finally found local dense subgraph $\hat{x}$ within the ROI is guaranteed to be immune against all vertexes in $I$, thus $\hat{x}$ is a global dense subgraph according to Theorem~\ref{theorem:nash_gamma_equal}.
Moreover, starting the ROI from the small inner ball can effectively reduce the number of vertexes to be scanned in the first several {\ALID} iterations.
}

\subsection{Candidate Infective Vertex Search (Step 3)}
\label{Section:Stage3}
The hyperball of ROI identifies a small local region in the $d$-dimensional space $R^d$. The data items $v_i\in{V}$ inside the ROI correspond to the candidate graph vertexes, which are probably infective against the current local dense subgraph $\hat{x}$ and may further increase the graph density $\pi(\hat{x})$. Therefore, we carefully design the Candidate Infective Vertex Searching (CIVS) method to efficiently retrieve such data items inside the ROI and use them to update the local range $\beta^{(c)}$. The variable $c$ is the current number of {\ALID} iteration.

Retrieving the data items inside the ROI $H_c(D,R)$ is equivalent to a \emph{fixed-radius near neighbor problem} in the $d$-dimensional space $R^d$. For the interest of scalability, we adopt the Locality Sensitive Hashing (LSH) method~\cite{LSH} to solve this problem.
However, LSH can only retrieve data items within a small Locality Sensitive Region (LSR) of a query (Figure~\ref{Fig:hive}).
\hlit{
Thus, when using the hyperball center $D$ as a single LSH query (Figure~\ref{Fig:hive} (a)), the corresponding LSR may fail to find all data items within the ROI, which would prevent {\ALID} from converging to the optimal result.
}

To solve this problem, we propose CIVS to guarantee that {\ALID} converges to the optimal result. The convergence of {\ALID} is proved in Section~\ref{sec:convergence_proof_appendix} of the appendix in~\cite{appendix_pdf}. CIVS applies multiple LSH queries using all the supporting data items of $\hat{x}^{(c)}$ (i.e., $\{v_i\in{V} \mid \hat{x}_i^{(c)}>0\}$). As shown in Figure~\ref{Fig:hive} (b), the benefit of CIVS is that multiple LSRs effectively cover most of the ROI, thus most of the data items within the ROI can be retrieved. Specifically, CIVS first collect all the new data items that is retrieved by any of the supporting data items of $\hat{x}^{(c)}$, using LSH. Then, we retrieve at most $\delta$ new data items within the ROI that are the nearest to the ball center $D$. The index set of the retrieved data items is denoted by $\psi=\{i \mid i\in{(I-\alpha^{(c)})}, v_i\in H_c(D,R)\},|\psi|\leq{\delta}$.

\begin{figure}[t]
\centering
\includegraphics[width=60mm]{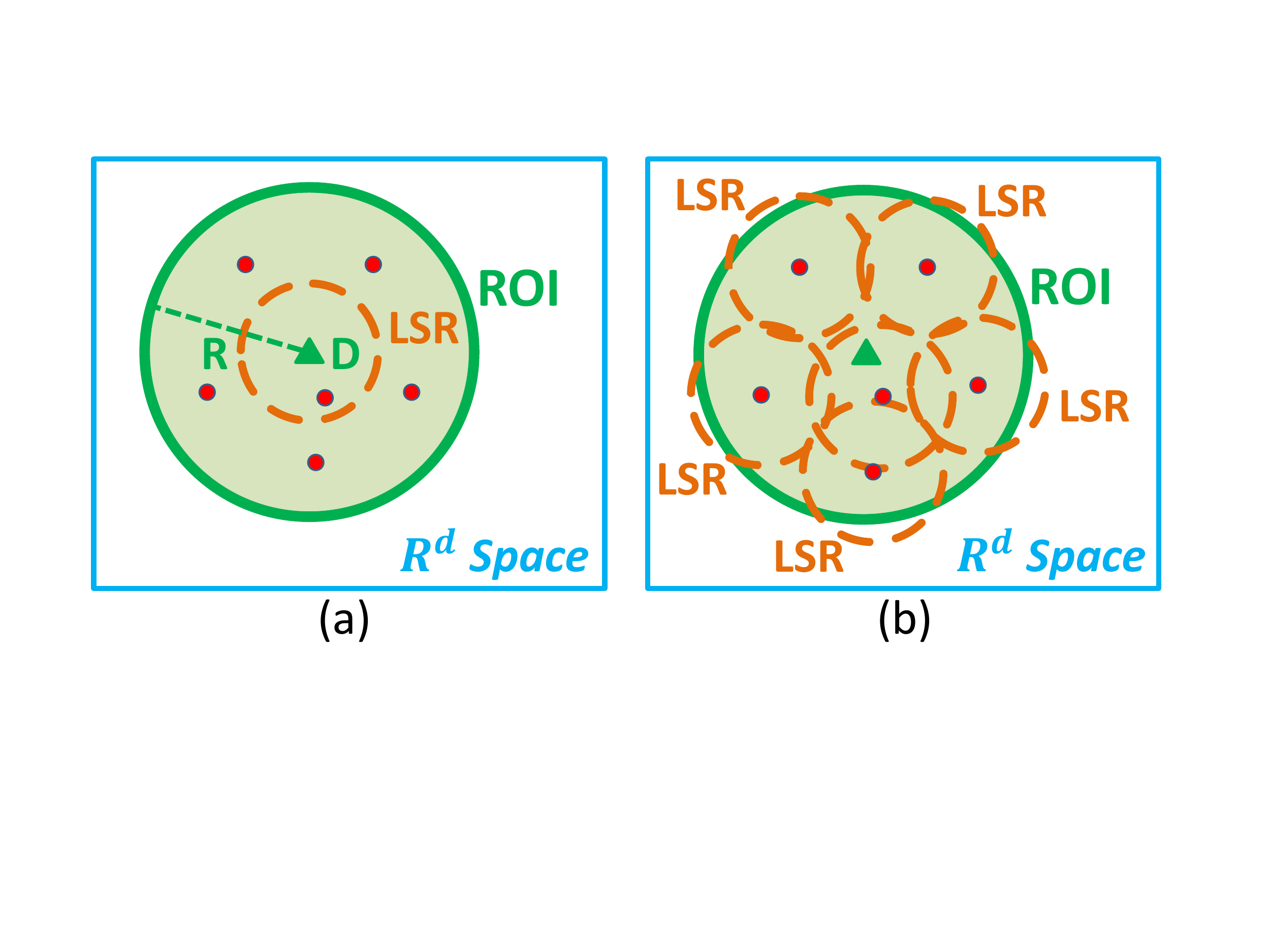}
\vspace{-3mm}
\caption{Each dashed circle is a locality sensitive region (LSR). The red points show the supporting data items of the current local dense subgraph $\hat{x}^{(c)}$. (a) A single LSR cannot cover the ROI. \hlit{(b) CIVS covers most of the ROI by multiple LSRs.}}
\label{Fig:hive}
\vspace{-3mm}
\end{figure}

\rmv{
\begin{enumerate}
  \item Find all the new data items that collide with any of the supporting data items of $\hat{x}^{(c)}$ by LSH.
  \item Retrieve at most $\delta$ new data items within the ROI that are the nearest to the ball center $D$. The index set of the retrieved data items is denoted by $\psi=\{i \mid i\in{(I-\alpha^{(c)})}, v_i\in H_c(D,R)\},|\psi|\leq{\delta}$.
\end{enumerate}
}

\hlit{
The retrieved data $\{v_i \mid i\in\psi\}$ are used to perform update: $\left[\hat{x}^{(c)},(A_{\beta\alpha}\hat{x}_\alpha)^{(c)}\right] \rightarrow \left[x^{(c+1)},(A_{\beta\alpha}x_\alpha)^{(c+1)}\right]$ as follow.
\begin{equation}
\label{Eqn:update_local_opt_result}
\left\{
\begin{array}{c}
  x^{(c+1)}=\hat{x}^{(c)}\\

  (A_{\beta\alpha}x_\alpha)^{(c+1)}=
  \begin{bmatrix}
  (A_{\alpha\alpha}\hat{x}_\alpha)^{(c)} \\
  (A_{\psi\alpha}\hat{x}_\alpha)^{(c)}
  \end{bmatrix}
\end{array}
\right.
\end{equation}
where $(A_{\psi\alpha}\hat{x}_\alpha)^{(c)}=A_{\psi\alpha}\hat{x}_\alpha^{(c)}$
}

\rmv{
\begin{enumerate}
  \item Compute submatrix $A_{\psi\alpha}$ and $(A_{\psi\alpha}\hat{x}_\alpha)^{(c)}=A_{\psi\alpha}\hat{x}_\alpha^{(c)}$.
  \item Update: $\left[\hat{x}^{(c)},(A_{\beta\alpha}\hat{x}_\alpha)^{(c)}\right] \rightarrow \left[x^{(c+1)},(A_{\beta\alpha}x_\alpha)^{(c+1)}\right]$ by
      \begin{equation}
      \label{Eqn:update_local_opt_result}
      \left\{
      \begin{array}{c}
          x^{(c+1)}=\hat{x}^{(c)}\\

          (A_{\beta\alpha}x_\alpha)^{(c+1)}=
          \begin{bmatrix}
          (A_{\alpha\alpha}\hat{x}_\alpha)^{(c)} \\
          (A_{\psi\alpha}\hat{x}_\alpha)^{(c)}
          \end{bmatrix}
      \end{array}
      \right.
      \end{equation}
      which updates the local range by $\beta^{(c+1)}{=}\alpha^{(c)} \cup \psi$, where $\psi$ involves new infective vertexes against $x^{(c+1)}$.
\end{enumerate}
}

\hlit{
Equation~\ref{Eqn:update_local_opt_result} updates the local range by $\beta^{(c+1)}{=}\alpha^{(c)} \cup \psi$, where $\psi$ involves new infective vertexes against $x^{(c+1)}$.
Then, we can re-run LID (i.e., Step 1) with the initialization of $[x^{(c+1)},(A_{\beta\alpha}x_\alpha)^{(c+1)}]$ to find the local dense subgraph $\hat{x}^{(c+1)}$ in the new range $\beta^{(c+1)}$.
Since $\hat{x}^{(c+1)}$ is guaranteed by {Theorem~\ref{theorem:nash_gamma_equal}} to be immune against all vertexes in $\psi\subset{U}$, the number of infective vertexes in the global range $I$ is further reduced.
}

The time and space complexity for building the hash tables are linear with respect to $n$.  Specifically, the time complexity to build $l$ hash tables by $\mu$ hash functions is $\mathcal{O}(ndl\mu)$. The space complexity consists of $\mathcal{O}(nd)$ space for all the $d$ dimensional data items, $\mathcal{O}(nl)$ space for an inverted list that maps each data item to their buckets and $\mathcal{O}(nl)$ space for $l$ hash tables~\cite{LSH}. Since all possible LSH queries are built into the hash tables, we check the inverted list to retrieve neighbor data items and do not store the hash keys.

\subsection{Summarization of ALID}
\label{Section:Summarization_of_ALID}
The entire iteration of {\ALID} is summarized in {Algorithm~\ref{alg:ALID_final}}. The LID in Step 1 makes the local dense subgraph immune against all vertexes within a local range of the ROI. The ROI and CIVS in Step 2 and Step 3 update the local range by the new infective vertexes retrieved from global range. In this way, the number of infective vertexes in global range is guaranteed to be iteratively reduced to zero. Then, according to {Theorem~\ref{theorem:nash_gamma_equal}}, the last found local dense subgraph is a global one that identifies a dominant cluster~\cite{Weibull}. {Algorithm~\ref{alg:ALID_final}} stops when a global dense subgraph is found or the total number of iterations exceeds an upper limit $C$. Since Algorithm~\ref{alg:ALID_final} is initialized with $A_{\beta\alpha}x_\alpha=0$, which cannot be used to compute the radius of ROI $H_{c=1}(D,R)$ (Equation~\ref{Eqn:define_hyperball} and ~\ref{Eqn:ROI_final_def}), thus we set $R=0.4$ for the first iteration $c=1$.

In order to fairly compare with the other affinity-based methods, {\ALID} adopts the same \emph{peeling method} as DS~\cite{DS} and IID~\cite{IID} do to detect all dominant clusters. The \emph{peeling method} peels off the detected cluster and reiterates on the remaining data items to find another one until all data items are peeled off. Then, the clusters with large values of $\pi(x)$ (e.g., $\pi(x)\geq 0.75$) are selected as the final result.

\begin{algorithm}[t]
\label{alg:ALID_final}
\caption{The entire {\ALID} iteration}
\SetKwRepeat{doWhile}{do}{while}
\KwIn{An initial vertex index $i\in{I}$}
\KwOut{A global dense subgraph $x^*$ in global range $I$}
\BlankLine

\begin{algorithmic}[1]

\STATE \textbf{Set}
$\alpha=\beta=i$, $x=s_i$, $A_{\beta\alpha}x_\alpha=a_{ii}=0$, $c=1$

\REPEAT
    \STATE \textbf{Step 1}: $\left[x^{(c)},(A_{\beta\alpha}x_\alpha)^{(c)}\right] \rightarrow \left[\hat{x}^{(c)},(A_{\beta\alpha}\hat{x}_\alpha)^{(c)}\right]$ \\
    Find the local dense subgraph $\hat{x}^{(c)}$ by LID in Step 1 

    \STATE \textbf{Step 2}: $\hat{x}^{(c)} \rightarrow H_c(D,R)$ \\
    Estimate ROI from the local dense subgraph $\hat{x}^{(c)}$ 

    \STATE \textbf{Step 3}: $\left[\hat{x}^{(c)},(A_{\beta\alpha}\hat{x}_\alpha)^{(c)}\right] \rightarrow \left[x^{(c+1)},(A_{\beta\alpha}x_\alpha)^{(c+1)}\right]$ \\
     Apply CIVS to retrieve candidate vertexes within the ROI and update the local range $\beta^{(c)}$ by Equation~\ref{Eqn:update_local_opt_result} for the next iteration 

    \STATE Index update: $c\leftarrow{c+1}$

\UNTIL{$\hat{x}^{(c)}$ is a global dense subgraph, or $c>C$}

\RETURN $x^*=\hat{x}^{(c)}$
\end{algorithmic}
\end{algorithm}

\subsection{Complexity Analysis}
\label{Section:Complexity_Analysis}
The time and space complexities of {\ALID} mainly consist of three parts:
\begin{description}
\item 1) The time and space complexities of LID in Step 1 are $\mathcal{O}(Tb)$ and $\mathcal{O}(b)$, respectively, where $b=|\beta|<n$ is the size of the local range $\beta$ and $T$ is the constant limit of the number of LID iterations.
    \vspace{-1mm}
\item 2) The time and space complexities for the hash tables of CIVS are $\mathcal{O}(ndl\mu)$ and $\mathcal{O}(n(2l+d))$, respectively, where $d,l,\mu$ are constant LSH parameters.
    \vspace{-1mm}
\item 3) The time and space complexities for the affinity matrix $A$ are $\mathcal{O}(C(a^*+\delta)n)$ and $\mathcal{O}(a^*(a^*+\delta))$, respectively, which are analyzed in detail as follows.
\end{description}

Since all {\ALID} iterations are restricted by $A_{\beta\alpha}$, the time and space complexities for the affinity matrix are determined by the size of $A_{\beta\alpha}$.
Let $(A_{\beta\alpha})^c_i$ be the submatrix computed in the $c$-th iteration in {Algorithm~\ref{alg:ALID_final}} when detecting the $i$-th cluster.  Denote by $a^c_i$ and $b^c_i$, respectively, the column and row sizes of $(A_{\beta\alpha})^c_i$.
Since the maximum number of iterations of Algorithm~\ref{alg:ALID_final} is $C$, the overall time cost for detecting the $i$-th cluster is $Time(i)<\sum^C_{c=1}{a^c_ib^c_i}$, which is a \emph{loose bound} since many matrix entries of different $(A_{\beta\alpha})^c_i$ are duplicate and only computed once. Then, we can derive
\begin{equation}
\label{Eqn:time_i}
Time(i)<{Ca^C_i(a^C_i+\delta)}
\end{equation}
from the following observations:
\begin{itemize}
  \item $a^c_i\leq{a^C_i}$, since more and more matrix columns (i.e., $A_{\beta i}:i\in(\beta-\alpha)$ in Figure~\ref{Fig:LID_matrix}) are computed.
  \item $b^c_i\leq(a^c_i+\delta)\leq{(a^C_i+\delta)}$, since the size of $\beta$ is strictly limited by the ROI, where at most $\delta$ data items can be retrieved by CIVS.
\end{itemize}

\hlit{
Since $Time(i)$ is the time cost of affinity matrix computation for detecting the $i$-th dominant cluster, then the overall cost in time of detecting all dominant clusters is $\sum_i Time(i)$. We can derive from Equation~\ref{Eqn:time_i} that
\begin{equation}
\label{Eqn:time_i_sum}
\sum_i Time(i)<\sum_i {Ca^C_i(a^C_i+\delta)}
\end{equation}

Recall that {\ALID} adopts the \emph{peeling method} (see Section~\ref{Section:Summarization_of_ALID}), which peels off one detected cluster and reiterates on the remaining data items to find another one until all the $n$ data items are peeled off. Therefore, we can derive
\begin{equation}
\label{Eqn:sum_ac_i}
\sum_i{a^C_i}=n
\end{equation}
from the fact that $a^C_i$ is the size of the $i$-th detected cluster.

Define $a^*=\max_i\{a^C_i\}$ and $b^*=\max_i\{b^C_i\}$, where $b^*\leq(a^*+\delta)$ due to the restriction of ROI.
Then, we can derive from Equation~\ref{Eqn:time_i_sum} and Equation~\ref{Eqn:sum_ac_i} that the overall cost in time of computing the affinity matrix is
\begin{equation}
\label{Eqn:time_i_final}
\sum_i{Time(i)}<\sum_i{Ca^C_i(a^*+\delta)}=C(a^*+\delta)n
\end{equation}

The maximum cost in space is $a^*b^*\leq a^*(a^*+\delta)$, since all submatrices $(A_{\beta\alpha})^c_i$ are released when the $i$-th cluster is peeled off. As a result, the time and space complexities for the affinity matrix of {\ALID} are $\mathcal{O}(C(a^*+\delta)n)$ and $\mathcal{O}(a^*(a^*+\delta))$, respectively.
}

Recall that $a^*=\max_i\{a^C_i\}$ is the size of the largest (single) dominant cluster. We summarize in Table~\ref{Table:complexity_bounds} the three typical cases how $a^*$ affects the time and space complexities of the affinity matrix.
\hlit{
The data items belonging to the largest dominant cluster with size $a^*$ is referred to as ``positive data'', data items that do not belong to any dominant cluster are regarded as ``noise data'' and the size of the entire data set is denoted by $n$.
}

First, for clean data source, the amount of positive data is in constant proportion of the entire \hlit{data set}. Thus, we have $a^*=\omega{n}$ and $\omega\leq{1}$ is the constant proportion.
In this case, {\ALID} reduces the original $\mathcal{O}(n^2)$ time and space complexities of the affinity matrix to $\mathcal{O}(C(\omega n^2 + \delta n))$ and $\mathcal{O}(\omega^2n^2+\delta\omega n)$, respectively.

\begin{table}[t]
  \centering
  \caption{The complexity of the affinity matrix}
  \label{Table:complexity_bounds}
  \begin{tabular}{|c|c|c|}
    \hline
    Typical Cases & Time Complexity & Space Complexity  \\ \hline
    $a^*=\omega{n}\;(\omega\leq{1})$ & $\mathcal{O}(C(\omega n^2 + \delta n))$ & $\mathcal{O}(\omega^2n^2+\delta\omega n)$ \\ \hline
    $a^*=n^\eta\;(\eta<1)$ & $\mathcal{O}(C(n^{1+\eta}+\delta n))$ & $\mathcal{O}(n^{2\eta}+\delta n^\eta)$ \\ \hline
    $a^*\leq{P}$ & $\mathcal{O}(C((P+\delta)n))$ & $\mathcal{O}(P^2+\delta P)$ \\ \hline
  \end{tabular}
\end{table}

Second, for noisy data source (e.g., tweet-streams and user comments) that generates noise data faster than positive data, the growth rate of $a^*$ is slower than $n$. In this case, we have $a^*=n^\eta\;(\eta<1)$, thus {\ALID} reduces the $\mathcal{O}(n^2)$ time and space complexities to $\mathcal{O}(C(n^{1+\eta}+\delta n))$ and $\mathcal{O}(n^{2\eta}+\delta n^\eta)$, respectively.


\hlit{
Third, for noisy data source with size-limited dominant clusters, there is a constant upper bound $P$ for $a^*$ (i.e., $a^*\leq P$). Typical data sources of such kind consist of phone books and email contacts, where the largest number of people in a stable social group (i.e., dominant cluster) is limited by the \emph{Dunbar's number}
\footnote{\emph{Dunbar's number} was proposed by Robin Dunbar, who found that the social group size is upper bounded due to the limited brain size of human. Therefore, the size of a single dominate cluster (i.e., social group) in a social data source is upper bounded by a constant. \url{http://en.wikipedia.org/wiki/Dunbar's_number}} ~\cite{Dunbar_number}.
In such a case, we have $a^*\leq P$, thus, the time and space complexities of the affinity matrix are reduced to $\mathcal{O}(C(P+\delta)n)$ and $\mathcal{O}(P^2+\delta P)$, respectively.
Note that, since the infection immunization process converges quickly in finite steps~\cite{Bulo50}, a small value of $C=10$ is adequate for the convergence of {\ALID}.
}

\hlit{
We conducted extensive experiments on three synthetic data sets to simulate the three typical cases of $a^*=\omega n$, $a^*=n^\eta$ and $a^*\leq{P}$. The experimental results in Figure~\ref{Fig:SA_SYN_NDI} of Section~\ref{Section:SA} are consistent with the time and space complexities summarized in Table~\ref{Table:complexity_bounds}.
}

\rmv{
In summary, the overall time complexity of {\ALID} under the three typical cases of $a^*$ are: $\mathcal{O}(n^2)$ when $a^*=\omega n$, $\mathcal{O}(n^{1+\eta})$ when $a^*=n^\eta$ and $\mathcal{O}(n)$ when $a^*\leq{P}$. The overall space complexity are: $\mathcal{O}(n^2)$ when $a^*=\omega n$, $\mathcal{O}(n^{2\eta})$ when $a^*=n^\eta$ and $\mathcal{O}(n)$ when $a^*\leq{P}$. Comparing with the other affinity-based methods,  {\ALID} is more efficient on clean data and is much more scalable on noisy data.
}

\subsection{Parallel ALID}
{\ALID} is very suitable for parallelization in the MapReduce framework~\cite{MapReduce}, since multiple tasks of {\ALID} can be concurrently run in independent local ranges of the affinity graph.
We introduce the parallel ALID ({\PALID}) in {Algorithm}~\ref{alg:ALID_parallel} and provide an illustrative example in Figure~\ref{Fig:MapReduce}.

\begin{figure}[h]
\hspace{-2mm}
\includegraphics[width=85mm]{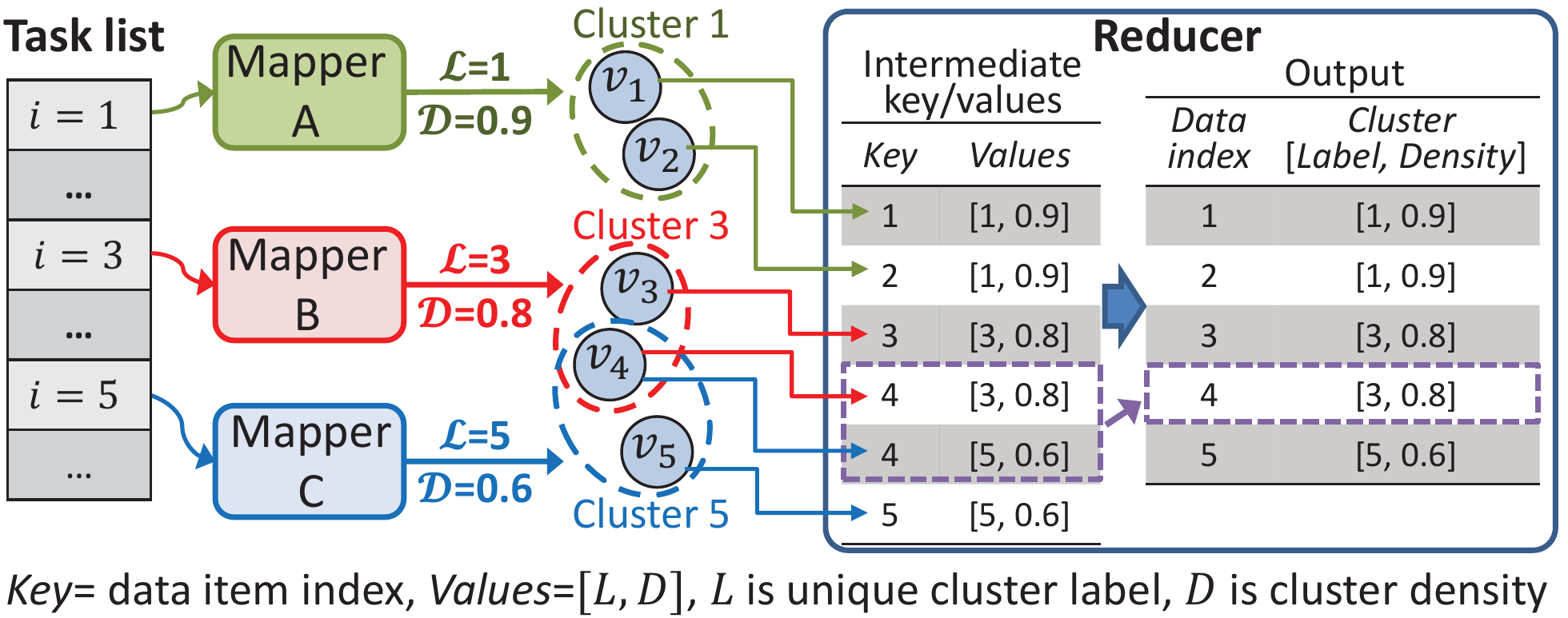}
\vspace{-4mm}
\caption{An illustrative example of {Algorithm}~\ref{alg:ALID_parallel}. Each mapper runs {Algorithm}~\ref{alg:ALID_final} independently with a different initial vertex. The reducer assigns each data item to the cluster with maximum density.}
\label{Fig:MapReduce}
\end{figure}

As shown in Figure~\ref{Fig:MapReduce}, three initial graph vertex indexes $i=\{1,3,5\}$ are assigned to three Mappers (A,B,C). Each Mapper runs {Algorithm}~\ref{alg:ALID_final} to detect a cluster independently. Once a Mapper detects a cluster, it produces a list of intermediate key/value pairs $(Key,Values{=}[\mathcal{L}, \mathcal{D}])$. \emph{Key} is the index of one data item belonging to the cluster, $\mathcal{L}$ is the unique cluster label for the detected cluster and $\mathcal{D}$ is the density of the cluster.
In case of overlapping clusters, such as clusters 3 and 5 in the figure, we simply assign the overlapped data item $v_4$ to cluster 3 of maximum density, which can be easily handled by a reducer.

\hlit{
Since data items belonging to the same dominant cluster are highly similar with each other, such data items are likely to be mapped to the same set of LSH buckets. Therefore, large-sized buckets reveal the potential data items of dominant clusters.
As a result, {\PALID} uniformly samples the initial graph vertexes from every LSH hash bucket that contains more than 5 data items. The sample rate is 20\%.
}

The hash tables and data items are stored in a server database and accessed via the network. The communication overhead is low due to the following reasons: 1) the LSH queries only transport data item indexes, which consumes ignorable bandwidth; 2) each mapper only needs to access a few data items to detect one dominant cluster due to the local property of {\ALID}.

In summary, {\ALID} is highly parallelizeable in the MapReduce framework, which further improves its capability in handling massive data in real world applications.

\begin{algorithm}[t]
\label{alg:ALID_parallel}
\caption{The parallel ALID ({\PALID})}
\SetKwRepeat{doWhile}{do}{while}
\KwIn{$V=\{v_i \mid i\in{I=[1,n]}\}$}
\KwOut{Cluster labels and cluster densities of each data item in the detected clusters}
\textbf{Tasklist}: A list of initial graph vertex indexes


\textbf{Map}(\emph{Key}, \emph{Value})\\
\small{
$\backslash\backslash$\emph{Key}: An initial vertex index $i$ for {Algorithm}~\ref{alg:ALID_final} \\
$\backslash\backslash$\emph{Value}: A unique cluster label $\mathcal{L}$ for the detected cluster
}\\[0.5mm]
\normalsize
\begin{algorithmic}[1]
\STATE Call {Algorithm}~\ref{alg:ALID_final} to find a global dense subgraph $x^*$, which identifies dominant cluster $\mathcal{L}$ with density $\mathcal{D}=\pi(x^*)$
\FORALL {data item index $h$ in $I=[1,n]$}
    \IF {$x_{h}^*>0$}
        \STATE Emit($h$, $[\mathcal{L}, \mathcal{D}]$) $\backslash\backslash$ $v_h$ belongs to cluster $\mathcal{L}$
    \ENDIF
\ENDFOR
\end{algorithmic}


\textbf{Reduce}(\emph{Key}, \emph{Values}) \\
\small{
$\backslash\backslash$\emph{Key}: The index $h$ of a single data item $v_h\in{V}$ \\
$\backslash\backslash$\emph{Values}: A list of $[\mathcal{L}, \mathcal{D}]$ w.r.t. the clusters containing $v_h$
}\\[0.5mm]
\normalsize
\begin{algorithmic}[1]
\STATE Find $[\mathcal{L}^*,\mathcal{D}^*]$ in \emph{Values} with maximum density $\mathcal{D}^*$
\STATE Emit($h$, [$\mathcal{L}^*,\mathcal{D}^*]$) $\backslash\backslash$ Assign $v_h$ to cluster $\mathcal{L}^*$
\end{algorithmic}
\end{algorithm}


\section{Experimental Results}\label{sec:exp}

\hlit{
In this section, we empirically examine and analyze the performances of {\ALID} and {\PALID}. The following state-of-the-art affinity-based methods are analyzed as well:
1) Affinity Propagation (AP)~\cite{AP}; 2) the Shrinking Expansion Algorithm (SEA)~\cite{SEA}; 3) Infection Immunization Dynamics (IID)~\cite{IID}. We use the published source codes of AP \cite{AP_code} and SEA \cite{SEA_code}; as the code for IID is unavailable, we implemented it in MATLAB.
\hlit{All compared methods are carefully tuned to their best performances. The parameter $\delta$ in Step 3 of {\ALID} ({\PALID}) is fixed as $\delta=800$.}
}

The detection quality is evaluated by the Average $\text{F}_1$ score (AVG-F), which is the same criterion as Chen~\textit{et~al.}~\cite{Fscore_cite} used.  AVG-F is obtained by averaging the $\text{F}_1$ scores on all the true dominant clusters. \rmv{The F-score is defined as
\begin{equation}
\label{Eqn:F_score}
\text{F-score} = \frac{Precision*Recall}{Precision + Recall}
\end{equation}
}
A higher $\text{F}_1$ score means a smaller deviation between the detected and the true dominant clusters, thus indicates a better detection quality. Besides, as Chen~\textit{et~al.}~\cite{Fscore_cite} showed, since the data items are partially clustered in this task, traditional evaluation criteria, such as entropy and normalized mutual information, are not appropriate in evaluating the detection quality.

The detection efficiency is measured from two perspectives: 1) the runtime of each method including the time to compute the affinity matrix; 2) the memory overhead, including the memory to store the affinity matrix.

We use a PC computer with a Core i-5 CPU, 12 GB main memory and a 7200 RPM hard drive, running Microsoft Windows 7 operating system.
All experiments using single computer are conducted using \emph{MATLAB}, since the methods to be compared were implemented on the same platform. We will explicitly illustrate the parallel experiments of {\PALID} in Section~\ref{Section:PALID}. {\PALID} was implemented in Java on \emph{Apache Spark} to efficiently process up to 50 million data items.

The following data sets are used: 1) the news articles data set (NART); 2) the near duplicate image data set (NDI); 3) three synthetic data sets; 4) the SIFT-50M data set that consists of 50 million SIFT features \cite{SIFT}. Details about the three synthetic data sets and the SIFT-50M data set are illustrated in Section~\ref{Section:SA} and Section~\ref{Section:PALID}, respectively. Details of NART and NDI are illustrated as follow. For all data sets, the pairwise distance and affinity are calculated using Euclidean distance and Equation~\ref{Eqn:exp_kernel_affinity} ($p=2$), respectively.

\emph{The news articles data set} (NART) is built by crawling 5,301 news articles from \url{news.sina.com.cn}.
It contains 13 real world ``hot'' events happened from May 2012 to June 2012, each of which corresponds to a dominant cluster of news articles.
\hlit{
All 734 news articles of the 13 dominant clusters are manually labeled as ground truth by 3 volunteers without professional background.
}
The remaining 4,567 articles are daily news that do not form any dominant cluster. Each article is represented by a normalized 350-dimensional vector generated by standard Latent Dirichlet Allocation (LDA)~\cite{LDA}.

\emph{The near duplicate image data set} (NDI) contains 109,815 images crawled from \url{images.google.com.hk}. It includes a labeled set of ground truth of 57 dominant clusters and 11,951 near duplicate images, where images with similar contents are grouped
as one dominant cluster. The remaining 97,864 images with diverse contents are regarded as background noise data. Each image is represented by a 256-dimensional GIST feature~\cite{GIST} that describes the global texture of the image content.

\subsection{Sparsity Influence Analysis}
\label{Section:SIA}
As mentioned in Section~\ref{sec:related-work}, the scalability of canonical affinity-based methods (i.e., IID, SEA, AP) is limited by the $\mathcal{O}(n^2)$ time and space complexity to fully compute and store the affinity matrix. Although the computational efficiency can be improved by sparsifying the affinity matrix~\cite{PSC}, the enforced sparsity also breaks the high cohesiveness of dense subgraphs, which inevitably weakens the noise resistance capability and impairs the detection quality.

In this section, we specifically analyze how the sparse degree of sparsified affinity matrix affects the detection quality and runtime of all compared methods.
The sparse degree is defined as the ratio of the number of entries in the matrix taking value $0$ over the total number of entries in the matrix.

\hlit{
All experiment results are obtained on NART and Sub-NDI data sets. Sub-NDI is a subset of the NDI data set, it contains 6 clusters of 1420 ground truth images and 8520 noise images. We use Sub-NDI instead of NDI, since AP cannot deal with the entire NDI data set with 12GB RAM.
}

\hlit{
Chen~\textit{et~al.}~\cite{PSC} provided two approaches to sparsify the affinity matrix: the exact nearest neighbors (ENN) method and the approximate nearest neighbors (ANN) method. The ENN method is  expensive on large data sets, while the ANN method can be efficient by employing LSH~\cite{LSH} and Spill-Tree~\cite{spill_tree}. In our experiments, we sparsify the affinity matrix by LSH due to its efficiency.
}

For AP, IID and SEA, we directly apply LSH to sparsify the affinity matrix, where only the affinities between the nearest neighbors are computed and stored. The same LSH module is utilized by CIVS in {\ALID}. To remove possible uncertainties caused by the LSH approximation, the parameter settings of LSH are kept exactly the same for all the compared methods, including {\ALID}.

Standard LSH projects each data item onto an equally segmented real line. The line segment length $r$ controls the recall of LSH, and thus affects the sparse degree of the affinity matrix. Figure~\ref{Fig:SIA_COMP} shows that the sparse degree of all affinity-based methods, including {\ALID}, decreases when $r$ increases. However, the sparse degree of {\ALID} remains high, since {\ALID} only computes small submatrices corresponding to the vertexes within the ROI.

\begin{figure}[t]
\centering
\mbox{
\hspace{-7mm}
\subfigure[NART]{\includegraphics[width=45mm]{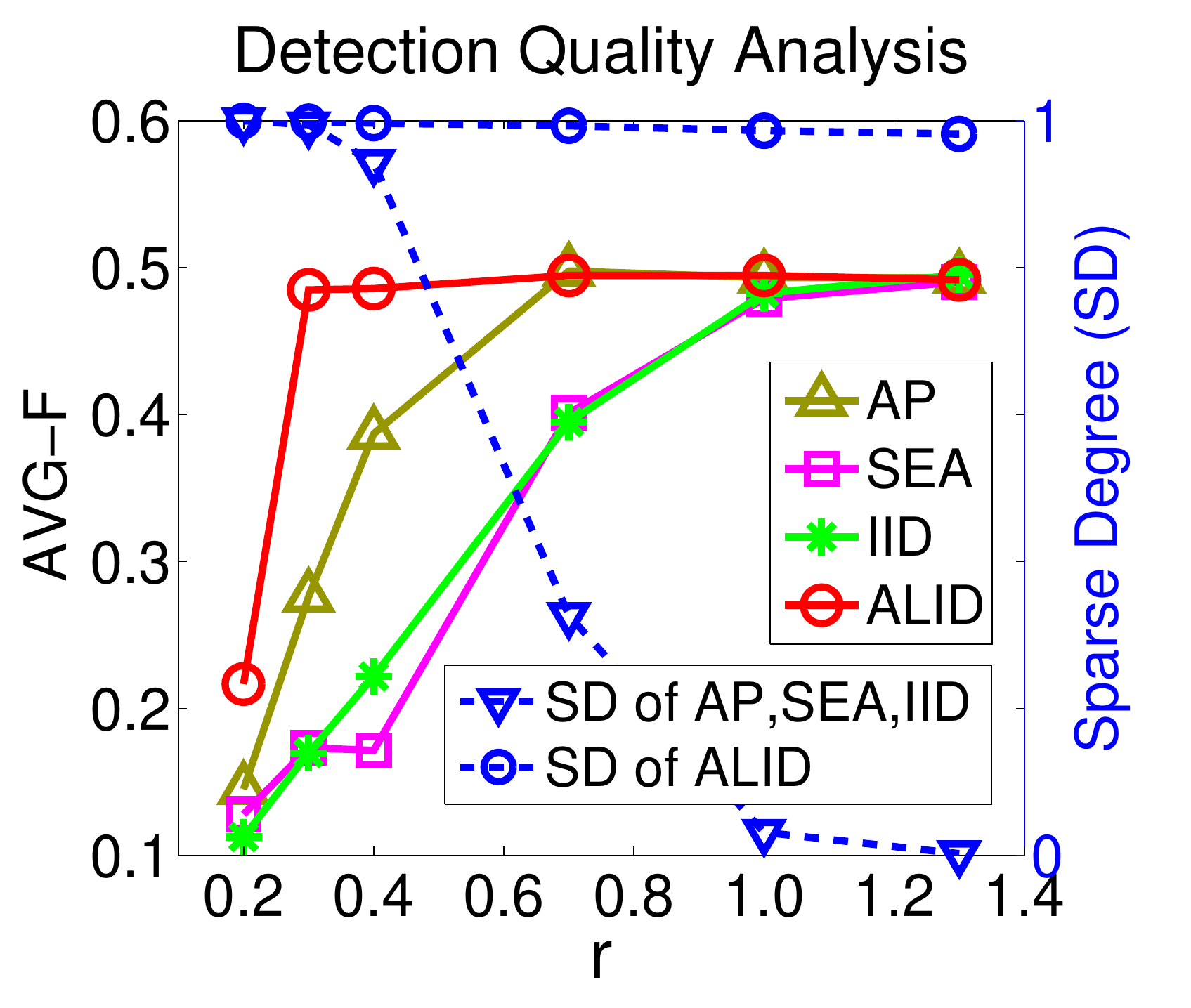}}
\subfigure[Sub-NDI]{\includegraphics[width=45mm]{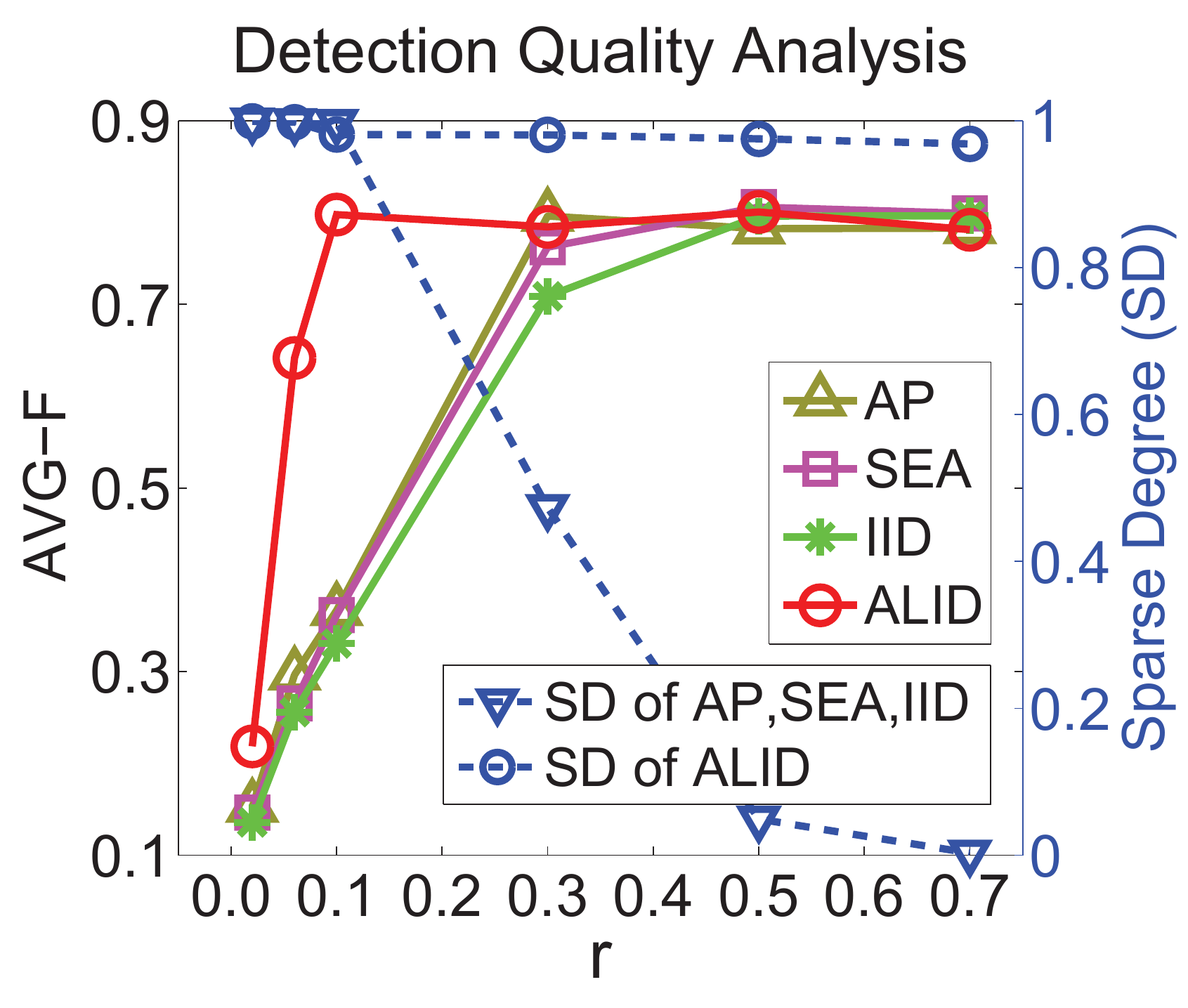}}
}
\mbox{
\hspace{-7mm}
\subfigure[NART]{\includegraphics[width=45mm]{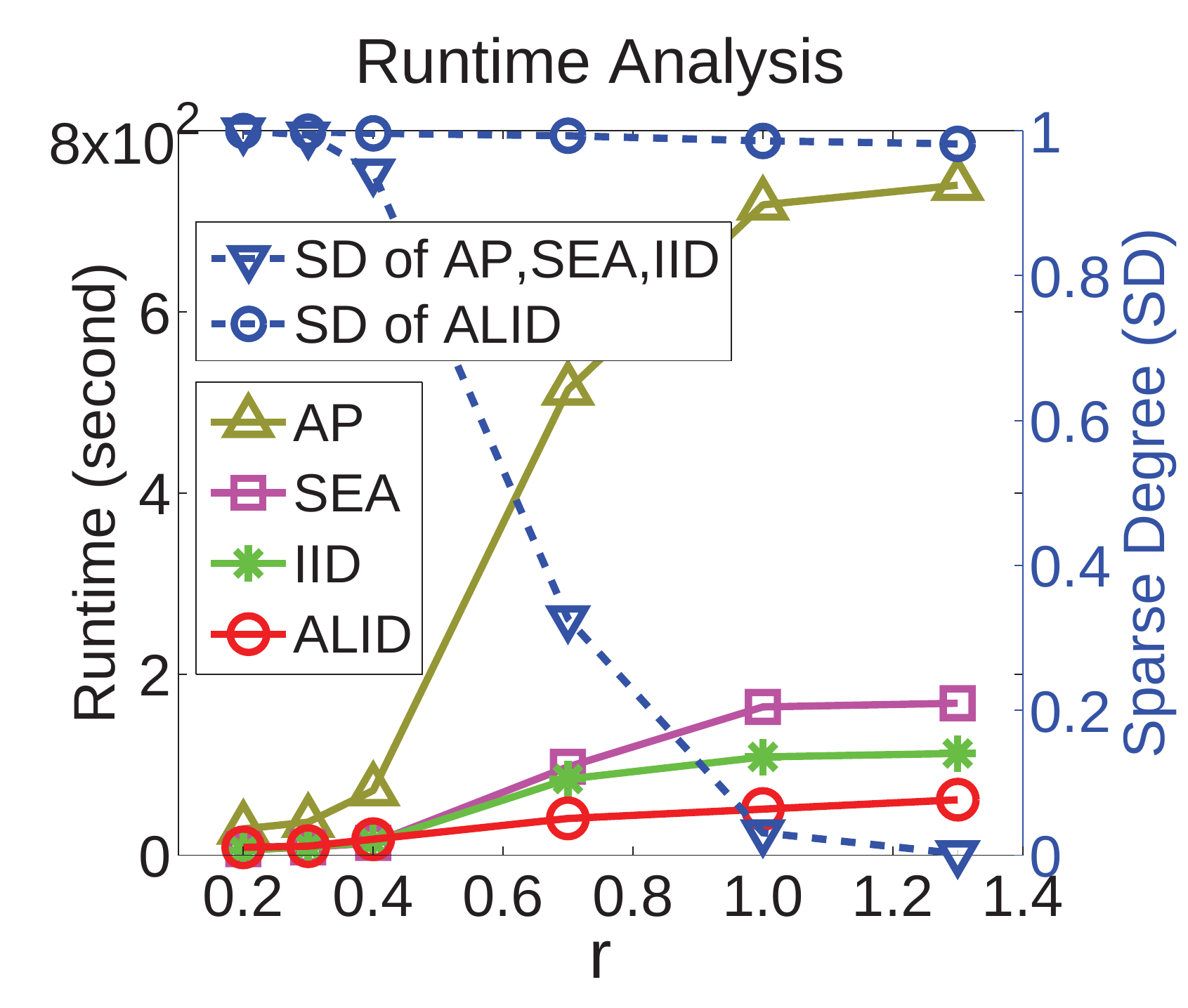}}
\subfigure[Sub-NDI]{\includegraphics[width=45mm]{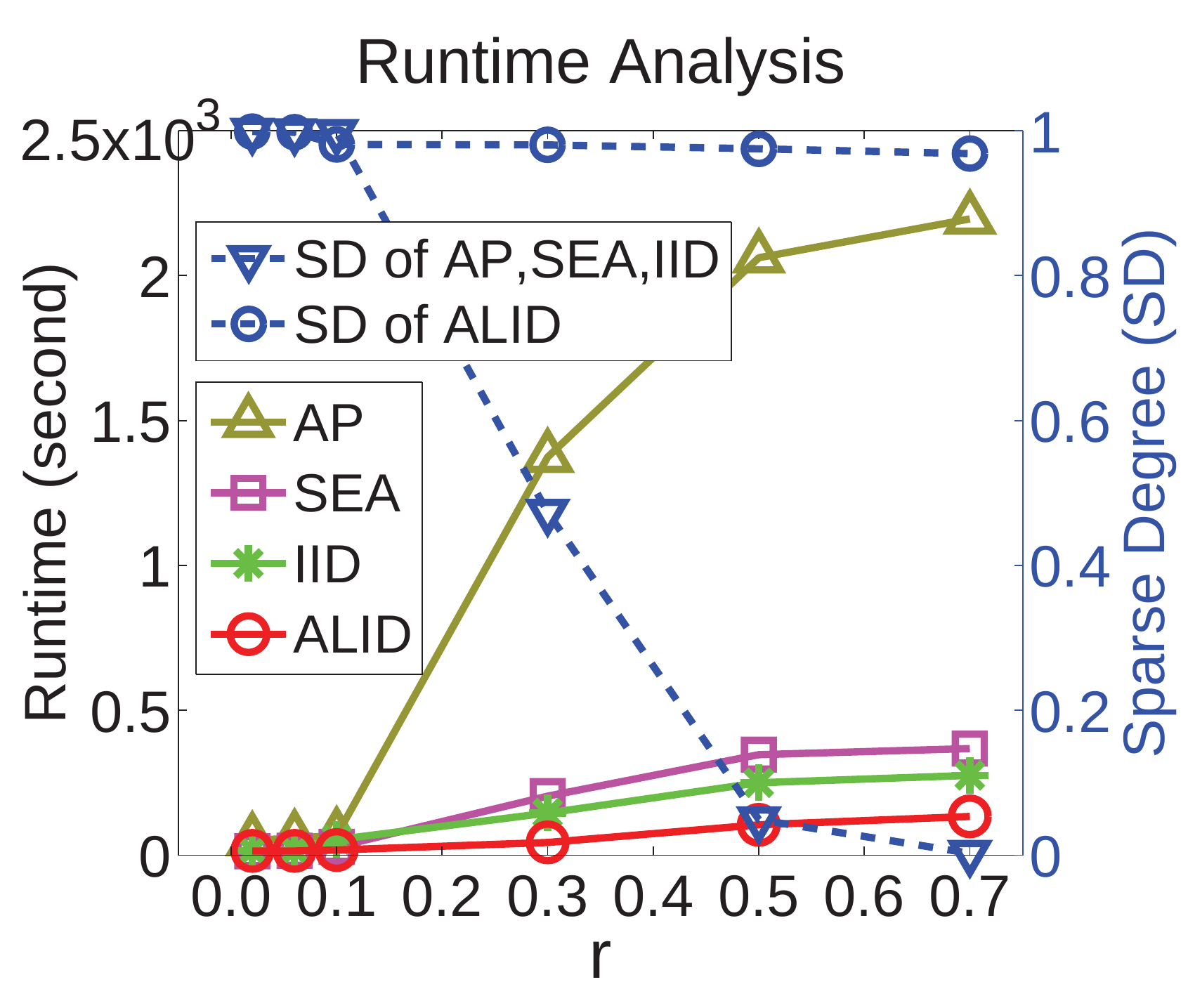}}
}
\vspace{-4mm}
\caption{The results on NART and Sub-NDI. (a)-(b) How sparse degree affects AVG-F. (c)-(d) How sparse degree affects runtime. For LSH, we use 40 projections per hash value and 50 hash tables. $r$ is the length of the equally divided segments of LSH.}
\vspace{-4mm}
\label{Fig:SIA_COMP}
\end{figure}

Figures~\ref{Fig:SIA_COMP}(a) and~\ref{Fig:SIA_COMP}(c) show the experiment results on data set NART. The AVG-F of all methods increases to a stable level as sparse degree decreases. This is because the cohesiveness of dense subgraphs are better retained as sparse degree decreases. For AP, SEA and IID, when sparse degree approaches zero, the original subgraph cohesiveness are maximally preserved by a full affinity matrix, and thus they all approach their best performances. Since most dense subgraphs exist in small local ranges, the relative local submatrices are good enough to retain their cohesiveness.
{\ALID} largely preserves such cohesiveness by accurately estimating the local range of true dense subgraphs and fully computing the relative local affinity matrices.
Consequently, {\ALID} achieves a good AVG-F performance under an extremely high sparse degree of $0.998$ ($r=0.3$), which indicates that the calculation and storage of $99.8\%$ matrix entries are effectively pruned. Such situation is rationale, since the useful matrix entries that correspond to the 13 true clusters of 734 data items in data set NART only take $734^2/(13 \times 5301^2)=0.147\%$ of the entire affinity matrix.

The results in Figure~\ref{Fig:SIA_COMP}(c) demonstrates that sparsifying the affinity matrix reduces the runtime of the affinity-based methods. The runtime of all methods are comparably low when sparse degree is high.  However, when sparse degree decreases, the differences in runtime among the methods become significant. When $r=1.3$, AP is significantly slower than the other methods due to its expensive message passing overheads.  Moreover, SEA is much slower than IID due to the time consuming replicator dynamics~\cite{Weibull}. {\ALID} is the fastest, since it effectively prunes the computation of $99.8\%$ affinity matrix entries. Similar results are also observed on data set Sub-NDI (Figures~\ref{Fig:SIA_COMP}(b) and~\ref{Fig:SIA_COMP}(d)).

\newcommand{\mywidth}{34mm}
\newcommand{\myhspace}{7mm}
\begin{figure*}[t]
\centering
\subfigure[Synthetic $a^*{=}\frac{\omega n}{20}$]{\includegraphics[width=\mywidth]{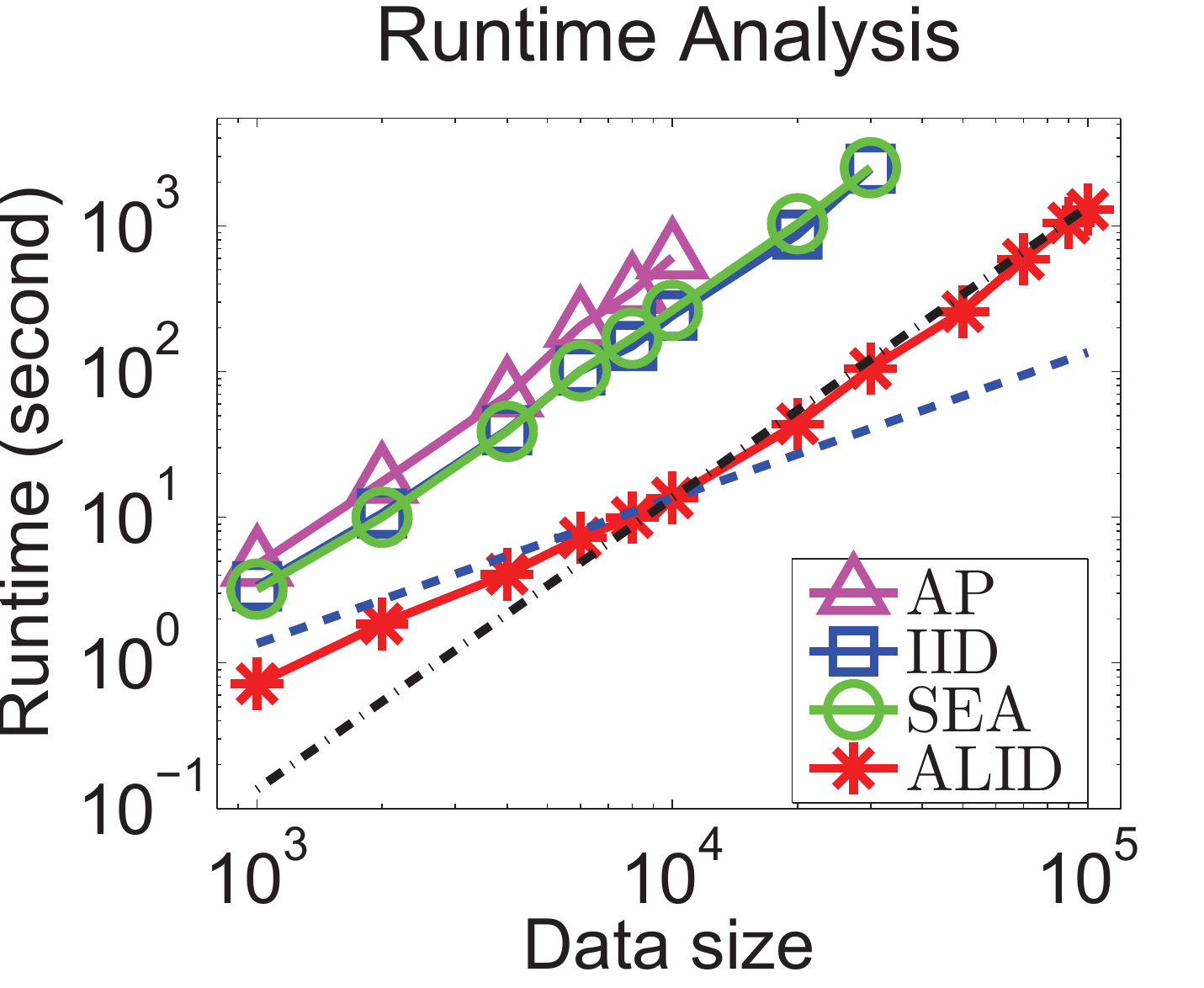}}
\hspace{\myhspace}
\subfigure[Synthetic $a^*{=}\frac{n^\eta}{20}$]{\includegraphics[width=\mywidth]{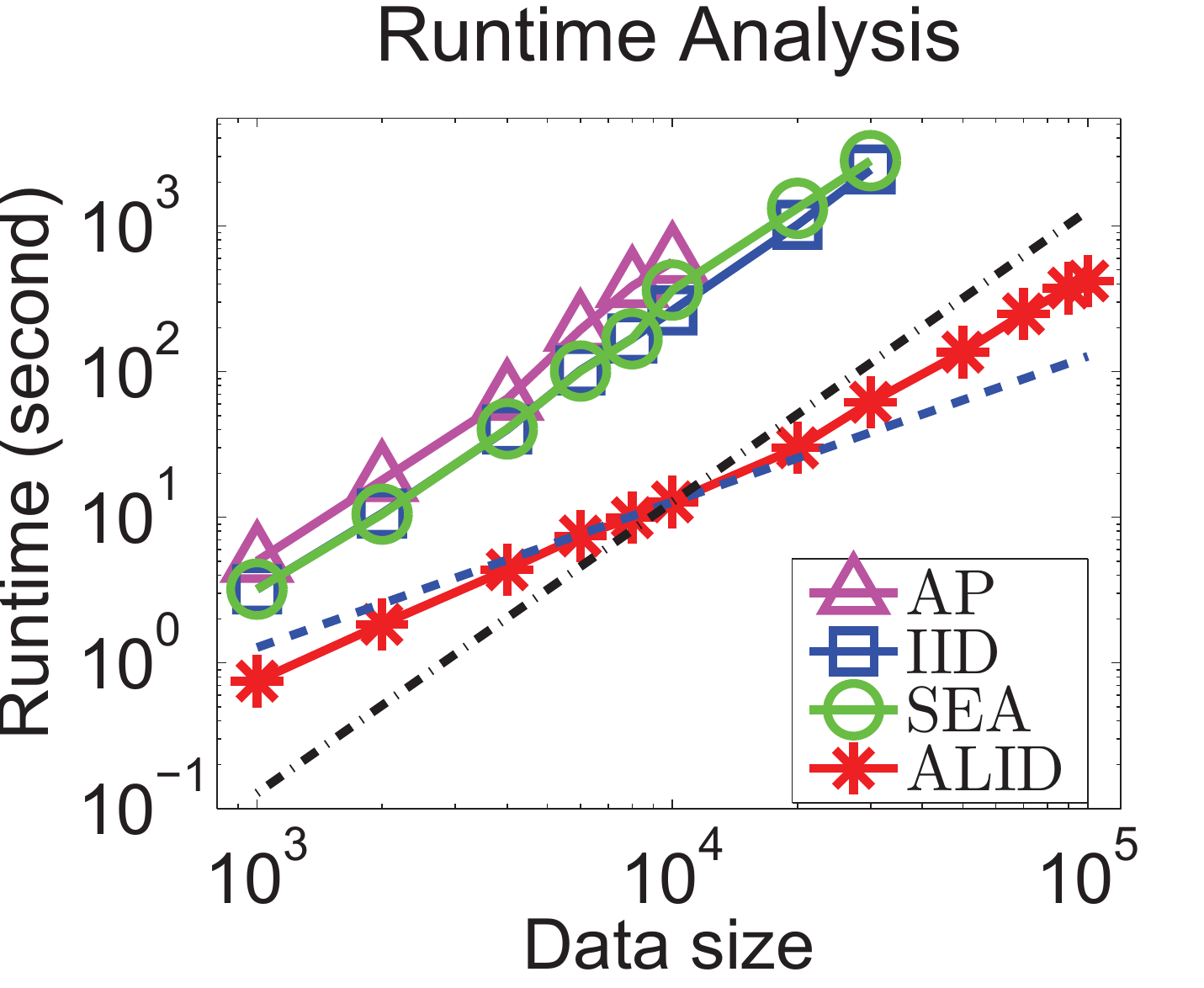}}
\hspace{\myhspace}
\subfigure[Synthetic $a^*{=}\frac{P}{20}$]{\includegraphics[width=\mywidth]{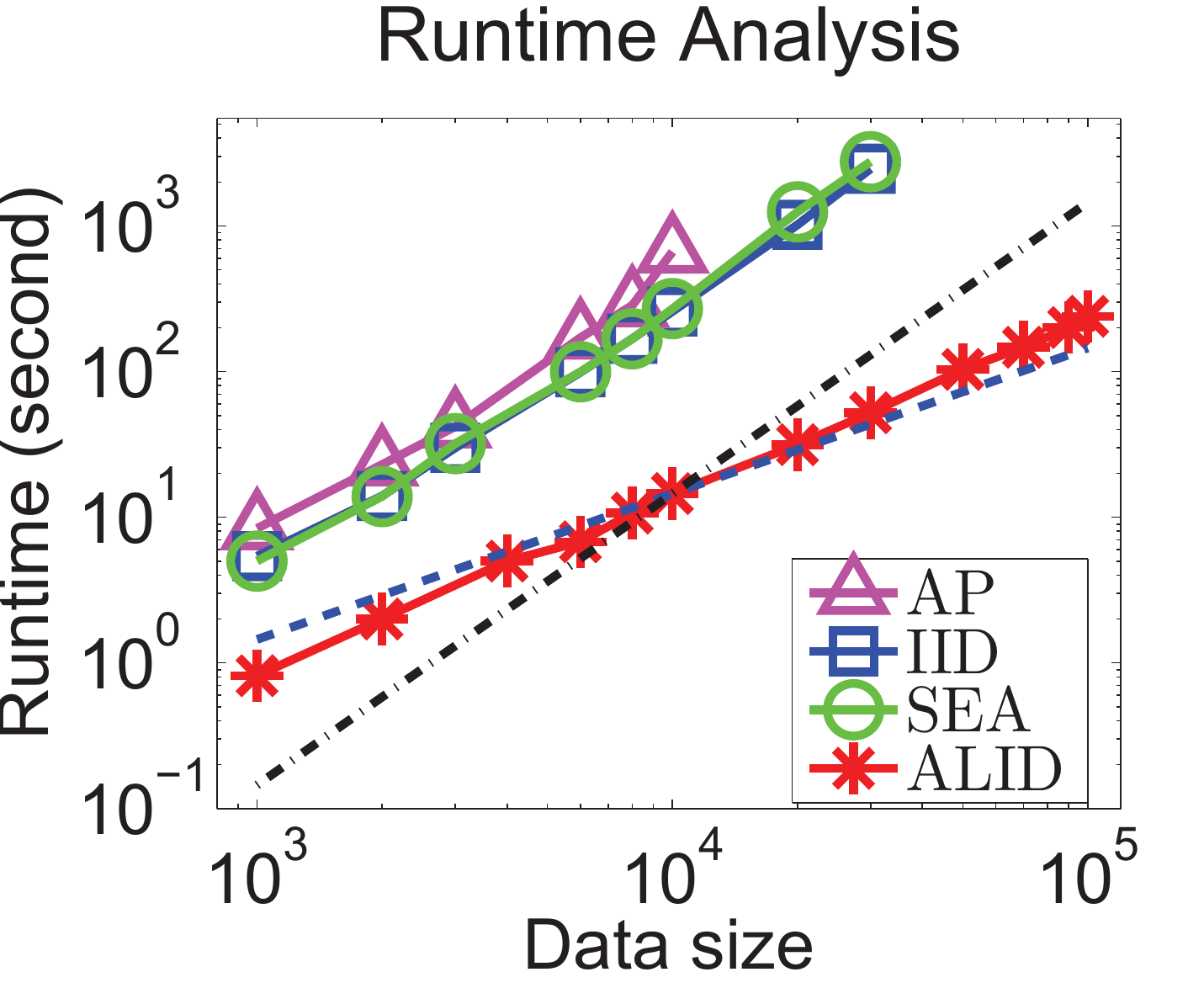}}
\hspace{\myhspace}
\subfigure[NDI data set]{\includegraphics[width=\mywidth]{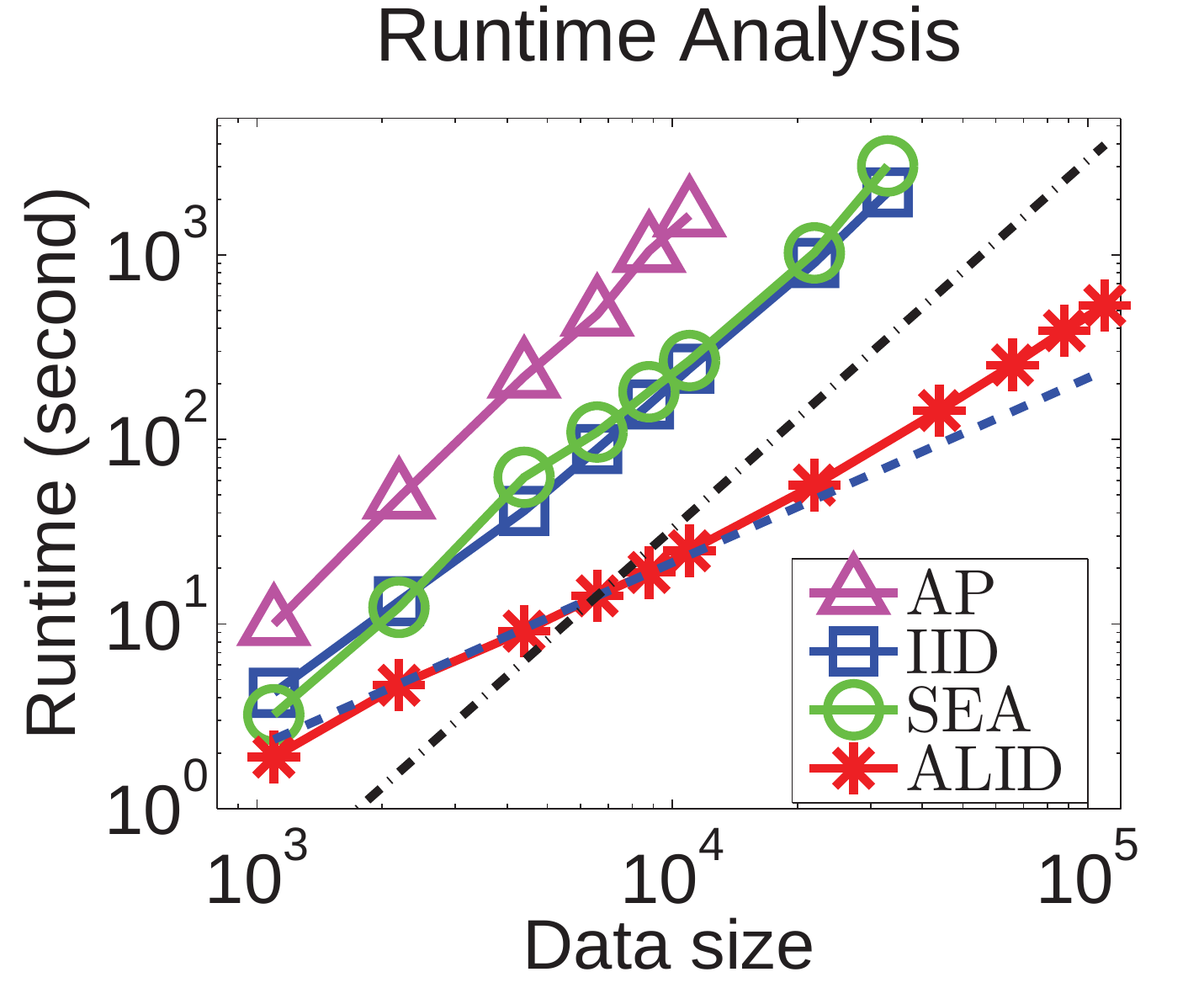}}
\subfigure[Synthetic $a^*{=}\frac{\omega n}{20}$]{\includegraphics[width=\mywidth]{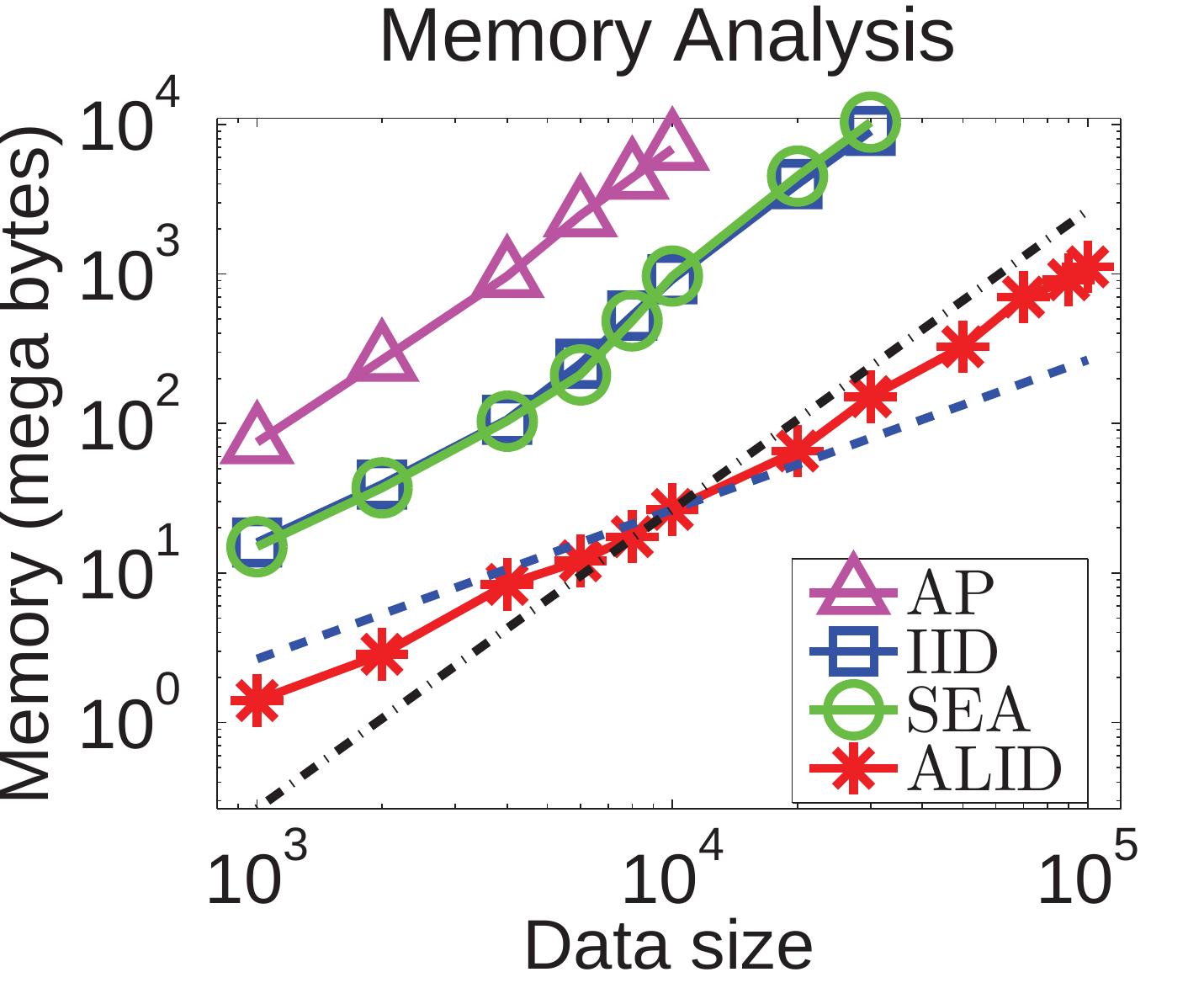}}
\hspace{\myhspace}
\subfigure[Synthetic $a^*{=}\frac{n^\eta}{20}$]{\includegraphics[width=\mywidth]{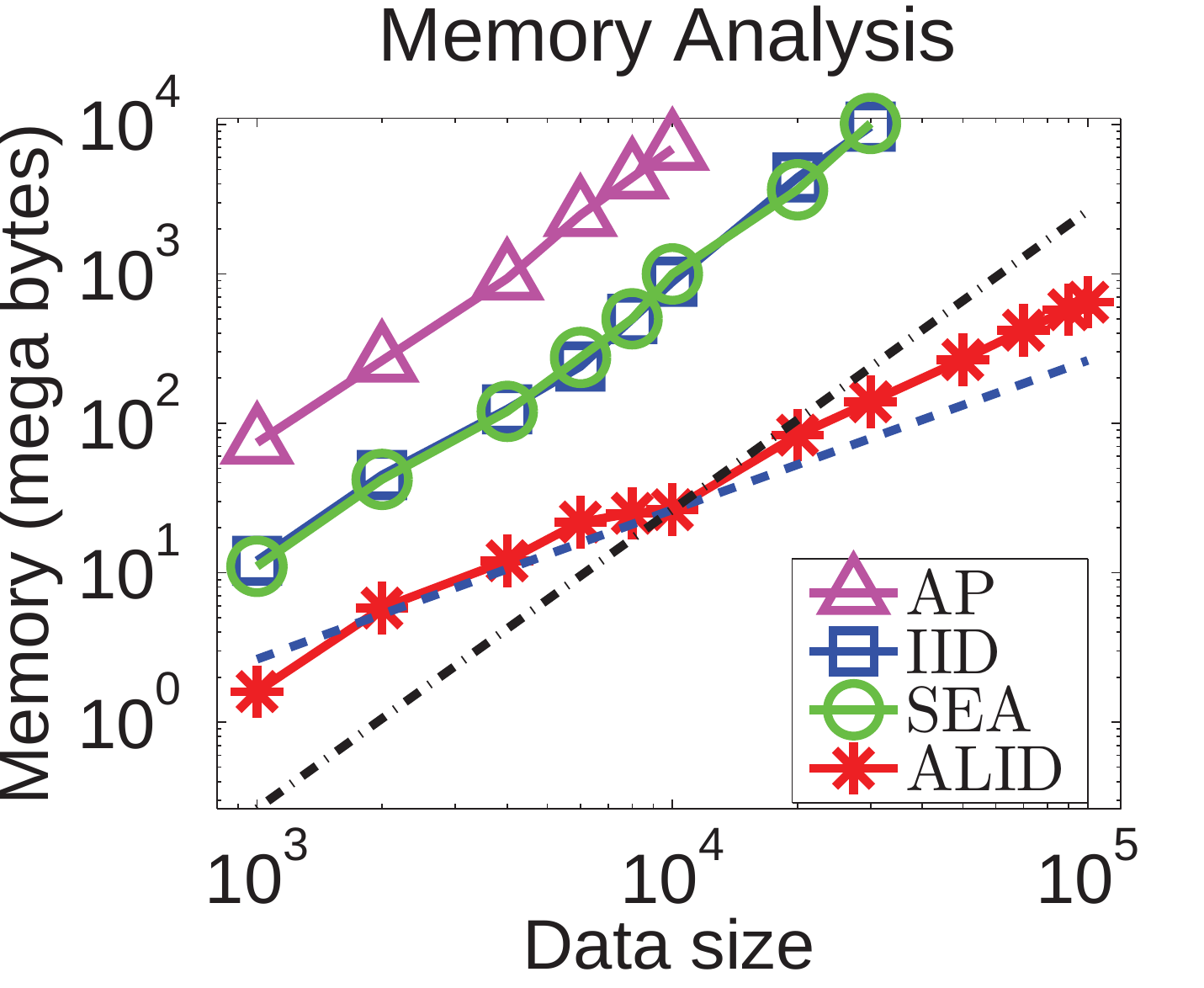}}
\hspace{\myhspace}
\subfigure[Synthetic $a^*{=}\frac{P}{20}$]{\includegraphics[width=\mywidth]{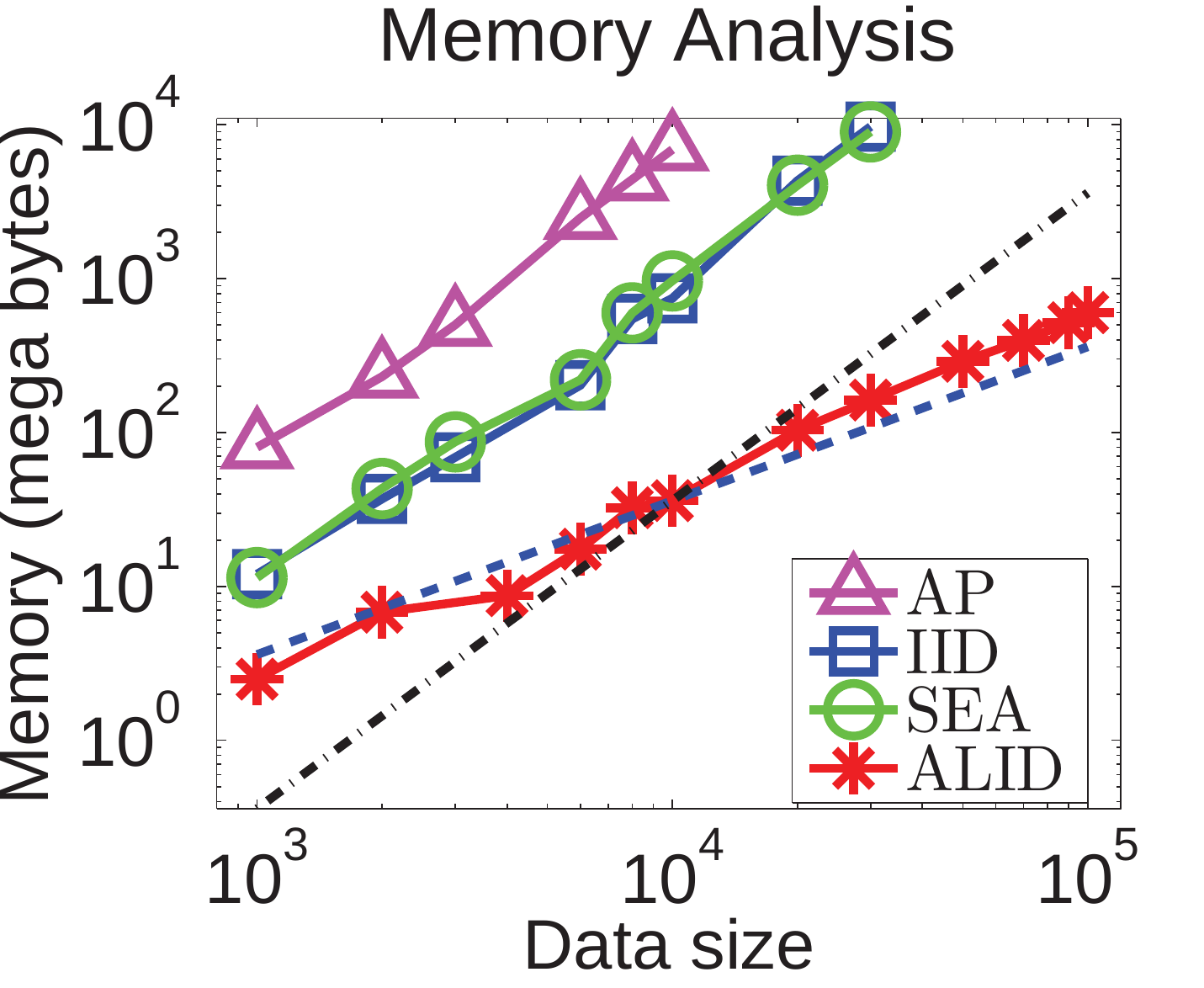}}
\hspace{\myhspace}
\subfigure[NDI data set]{\includegraphics[width=\mywidth]{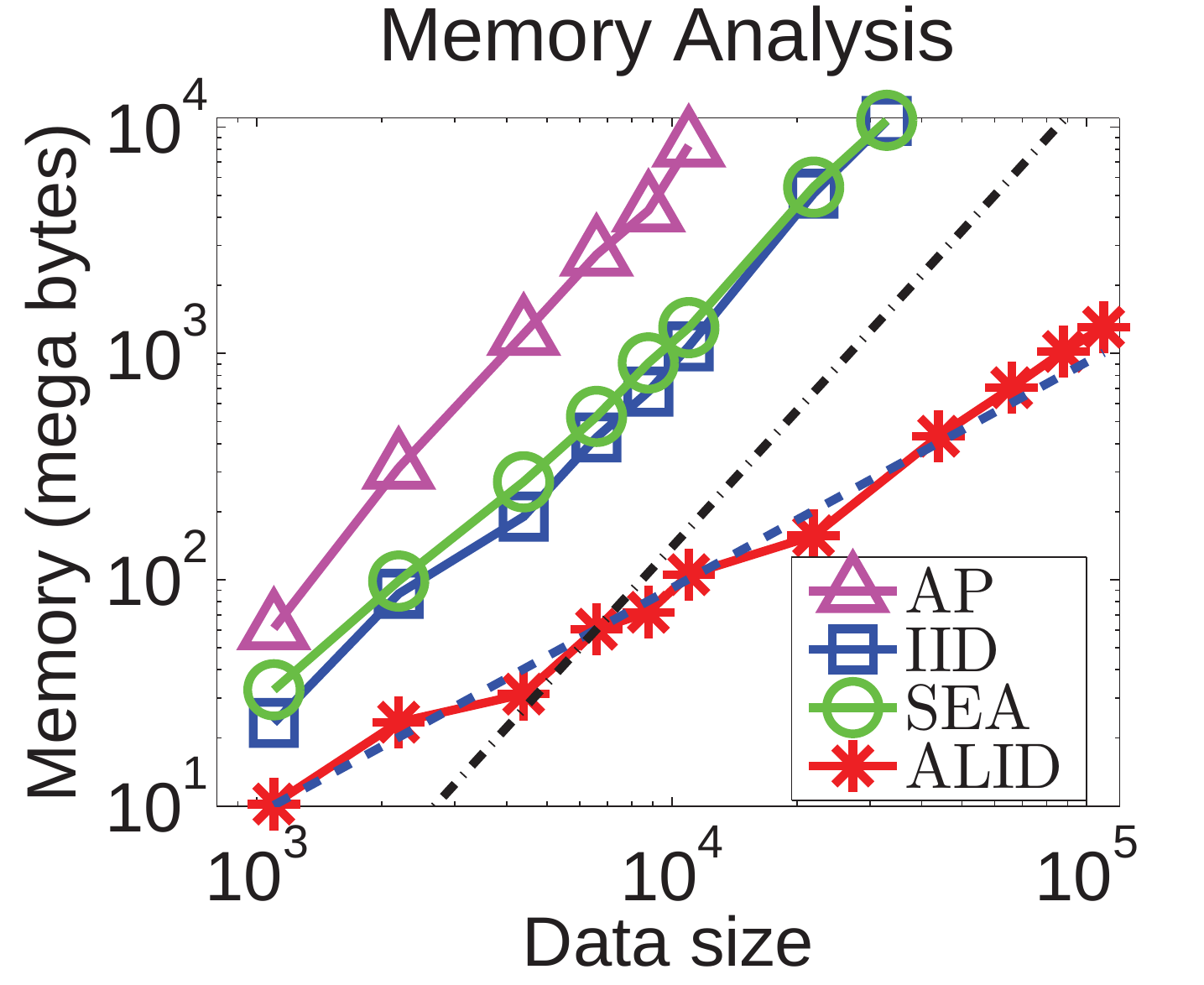}}
\subfigure[Synthetic $a^*{=}\frac{\omega n}{20}$]{\includegraphics[width=\mywidth]{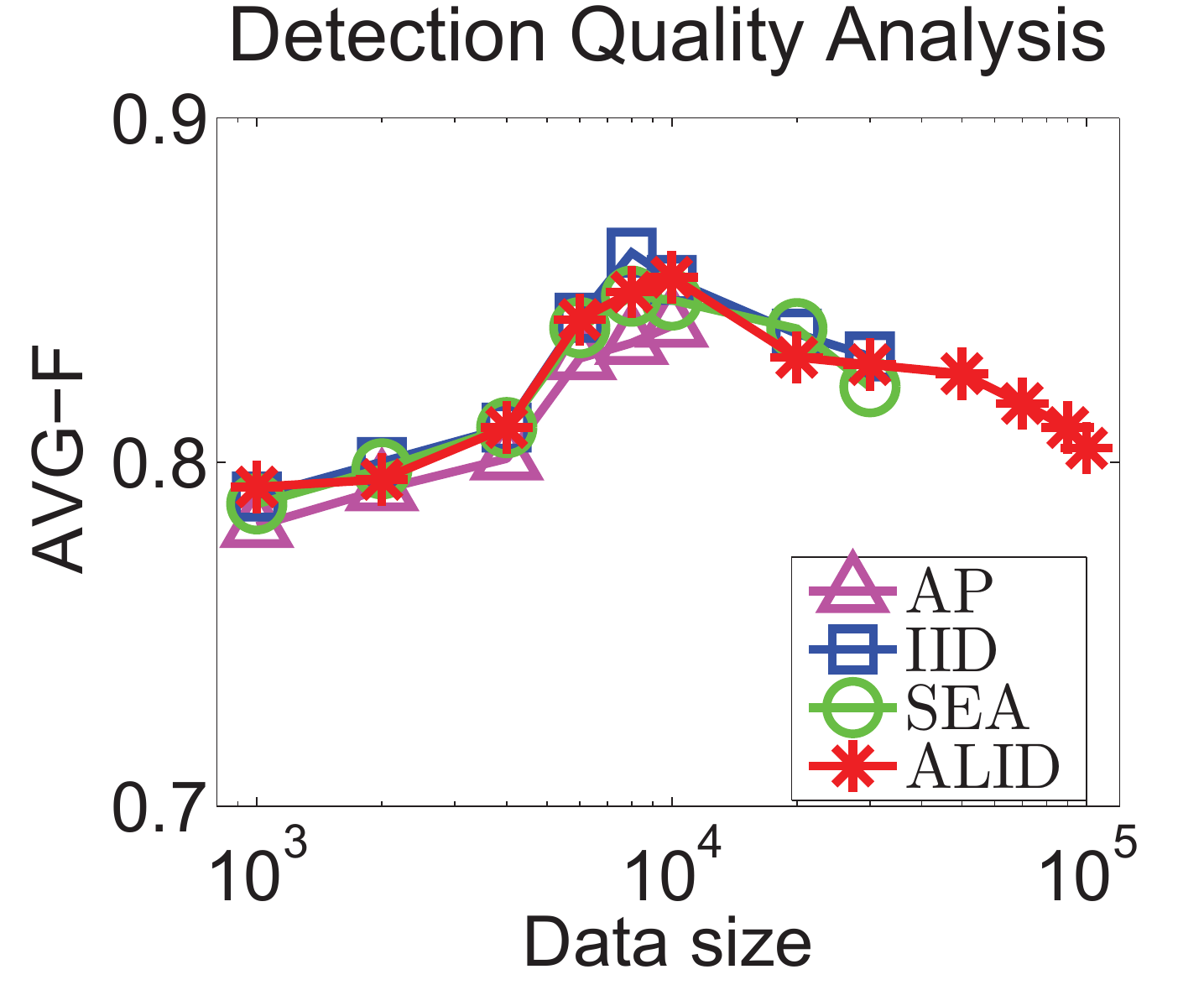}}
\hspace{\myhspace}
\subfigure[Synthetic $a^*{=}\frac{n^\eta}{20}$]{\includegraphics[width=\mywidth]{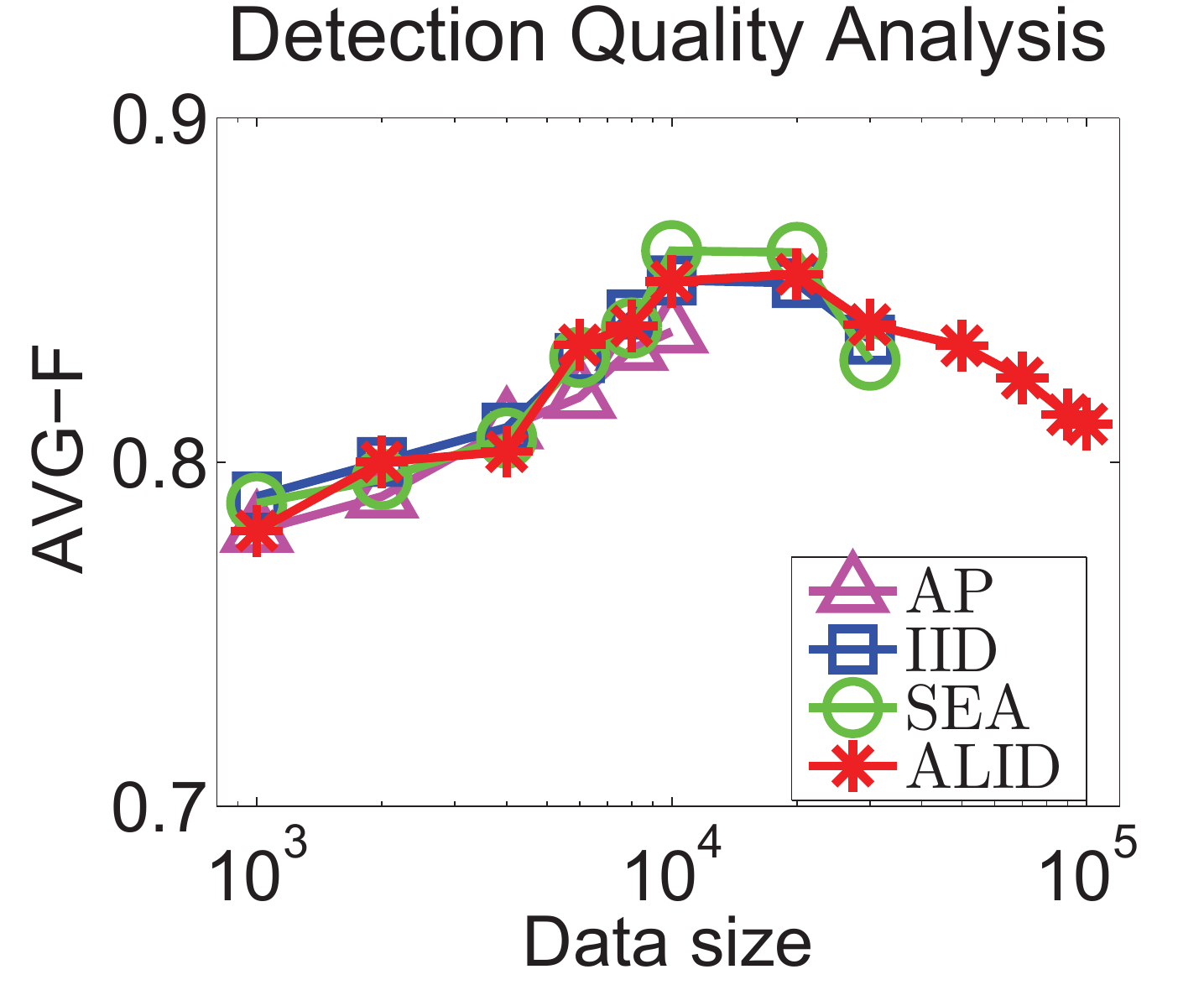}}
\hspace{\myhspace}
\subfigure[Synthetic $a^*{=}\frac{P}{20}$]{\includegraphics[width=\mywidth]{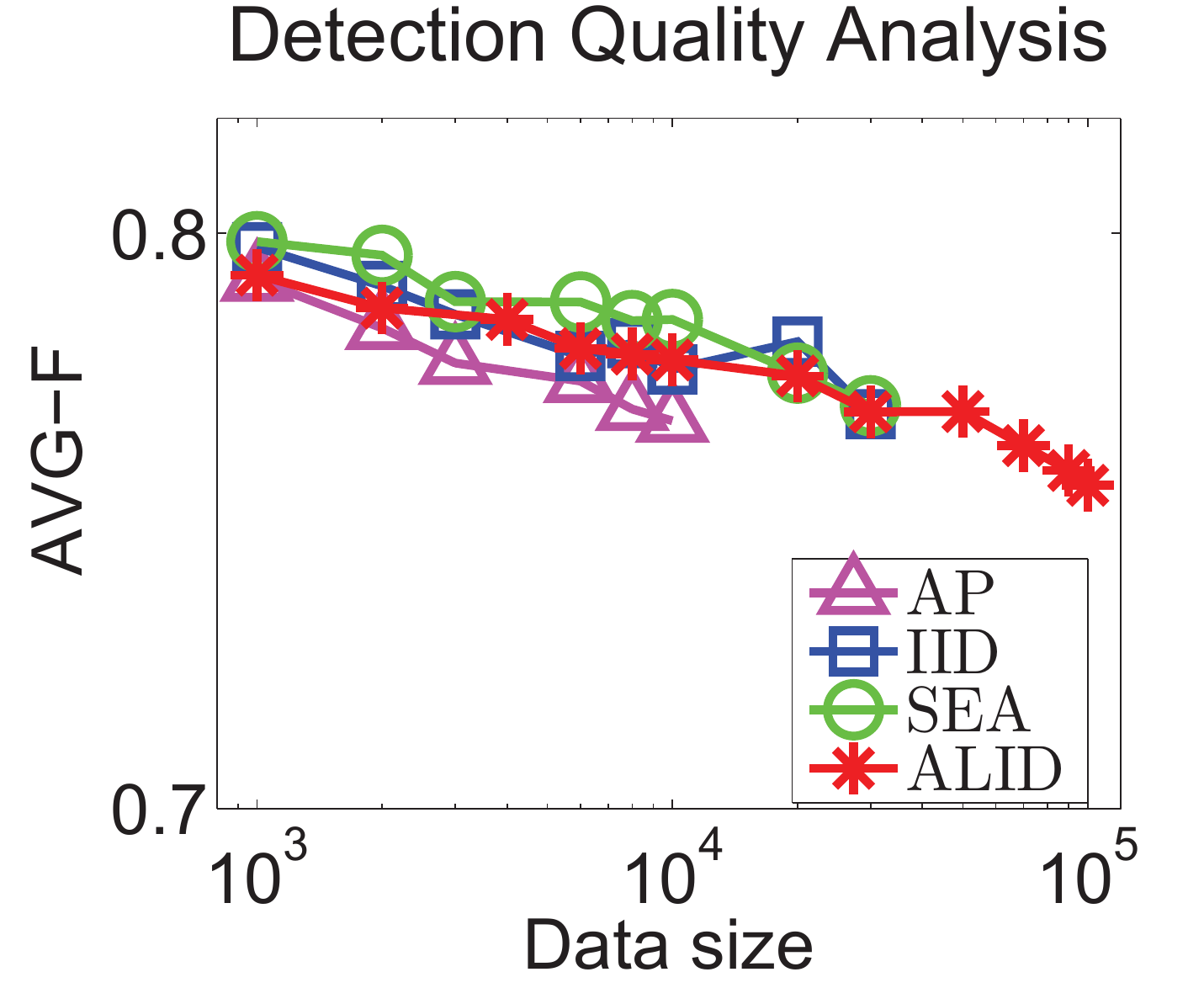}}
\hspace{\myhspace}
\subfigure[NDI data set]{\includegraphics[width=\mywidth]{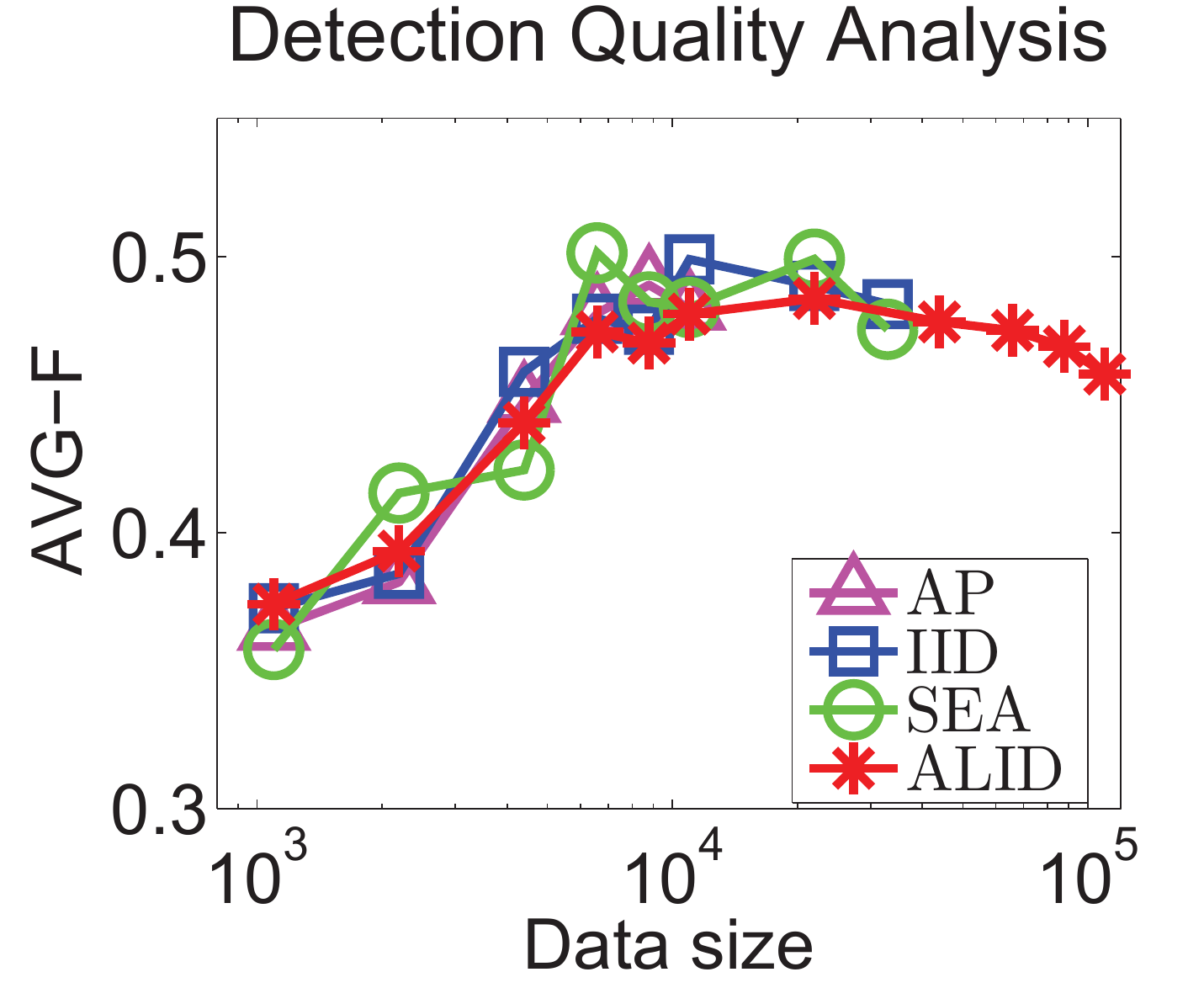}}
\caption{Scalability analysis on synthetic data sets with 3 typical cases of $a^*$ and the real world data set NDI. Parameters for the 3 synthetic data sets are: $\omega=1.0$, $\eta=0.9$ and \hlit{$P=1000$}.
\hlit{
The blue and black dashed lines are reference lines with slope 1 and 2, respectively. Under double logarithmic coordinate system, the slope of performance curves indicate the orders of growth with respect to the size of data set.}
}
\label{Fig:SA_SYN_NDI}
\end{figure*}

\subsection{Scalability Analysis}
\label{Section:SA}
In this section, we analyze the scalability of the affinity-based methods on four data sets, which consist of the real world data set NDI and the other three synthetic data sets.

The \emph{synthetic data sets} are made up by sampling $n$ 100-dimensional data items from 20 different multivariate gaussian distributions as dominant clusters and one uniform distribution as the background noise. To better simulate typical properties of real world data, we make some gaussian distributions partially overlapped by setting their mean vectors close to each other and variate the shapes of all gaussian distributions by different diagonal covariance matrices with elements ranged in $[0,10]$.
Then, we sample $a^*$ data items from each gaussian distribution as ground truth and $(n-20a^*)$ data items from the surrounding uniform distribution as noise.
\hlit{
Since all the 20 clusters are sampled in equal size, $a^*$ is the largest size of dominant clusters, which is consistent with the definition of $a^*$ in Section~\ref{Section:Complexity_Analysis}.
}

The \emph{synthetic data sets} are mainly used to test the efficiency and scalability of {\ALID}, thus we simulate the three typical cases of $a^*$ analyzed in Table~\ref{Table:complexity_bounds} by controlling the amount of sampled data items with $a^*{=}\frac{\omega n}{20}$, $a^*{=}\frac{n^\eta}{20}$ and $a^*{=}\frac{P}{20}$, where the constant denominator $20$ does not affect the complexity of {\ALID}. We use different values of
\hlit{
$n\in{[1{\times}10^3,1{\times}10^5]}$
}
to generate data sets of different sizes.
For the experiment on the NDI data set, we generate subsets of different sizes by randomly sampling the original NDI data set. The performances are evaluated in AVG-F, runtime and memory overheads of methods SEA, IID, AP and {\ALID}.


We adopt the double logarithmic coordinate system to draw all the performance curves of runtime and memory overheads. In this way, the empirical orders of runtime growth with respect to the data set size (i.e., $n$) can be easily observed from the slopes of the performance curves. Take $runtime=n^2$ as an example, the slope of the runtime curve under double logarithmic coordinate system is $\log(runtime)/\log(n)=2$, which is consistent with the theoretical order of runtime growth. We can observe the empirical order of memory growth in the same way.

As it is shown in Figures~\ref{Fig:SA_SYN_NDI}(a)-\ref{Fig:SA_SYN_NDI}(c), the empirical orders of runtime growth on the synthetic data sets are $\log(runtime)/\log(n)\approx2$ when $a^*{=}\frac{\omega n}{20}$ ($\omega=1.0$), $\log(runtime)/\log(n)\approx1.7$, when $a^*{=}\frac{n^\eta}{20}$ ($\eta=0.9$) and $\log(runtime)/\log(n)\approx1$, when $a^*{=}\frac{P}{20}$, respectively.
This result is consistent with the orders of time complexity (see Table~\ref{Table:complexity_bounds}) analyzed in Section~\ref{Section:Complexity_Analysis}. The results on the real world data set NDI (Figure~\ref{Fig:SA_SYN_NDI}(d)) also demonstrate that the empirical order of runtime growth of {\ALID} is is substantially lower than the other affinity-based methods.

As shown in Figure~\ref{Fig:SA_SYN_NDI}(e)-\ref{Fig:SA_SYN_NDI}(g), the empirical order of memory growth of {\ALID} on the three typical synthetic data sets are consistent with the theoretical space complexity (see Table~\ref{Table:complexity_bounds}) analyzed in Section~\ref{Section:Complexity_Analysis}.
The results on NDI (Figure~\ref{Fig:SA_SYN_NDI}(h)) further demonstrates the superior memory performance of {\ALID}, which consumes about 90\% less memory to process 3.3 times larger data size with a substantially lower empirical order of memory growth than the other affinity-based methods.

The experimental results in Figures~\ref{Fig:SA_SYN_NDI}(i)-\ref{Fig:SA_SYN_NDI}(l) show that {\ALID} achieves comparable AVG-F performance with the other affinity-based methods on all the four data sets. The AVG-F curves in Figures~\ref{Fig:SA_SYN_NDI}(i),\ref{Fig:SA_SYN_NDI}(j),\ref{Fig:SA_SYN_NDI}(l) have similar trends as the data set size increases: the AVG-F grows first due to the increasing dense subgraph cohesiveness caused by the growing amount of ground truth data (i.e., $a^*{=}\frac{\omega n}{20}$ and $a^*{=}\frac{n^\eta}{20}$).
Then, the AVG-F decreases due to the growing amount of ground truth data in overlapping dominant clusters and the increasing influence of noise.
The AVG-F curves in Figure~\ref{Fig:SA_SYN_NDI}(k) monotonously decreases, since the dense subgraph cohesiveness does not increase when the amount of ground truth data is fixed to $a^*{=}\frac{P}{20}$.

In summary, {\ALID} achieves remarkably better scalability than the other affinity-based methods, and at the same time retains high detection quality. The high scalability is mainly achieved by limiting all {\ALID} iterations within the ROI, which largely prevents unnecessary storage and computation of the entire affinity matrix. 

\subsection{Parallel Experiments of PALID}
\label{Section:PALID}
{\PALID} is the parallel implementation of {\ALID}. It was implemented in Java on the parallel platform of \emph{Apache Spark} (\url{http://spark.apache.org/}) on operating system Ubuntu. In this section, we evaluate the parallel performance of {\PALID} on a cluster of 5 PC computers each with an i7 CPU, 32 GB RAM and a 7200 RPM hard drive. We set 1 machine as the master and the other 4 machines as workers. Each worker is assigned with 2 executor processes, where each executor takes a single CPU core. The hash tables and the data items are stored in a MongoDB (\url{http://www.mongodb.org/}) server on the master.  All machines are connected by a 1000 Mbit/s ethernet.
The performance of {\PALID} is evaluated on the SIFT-50M data set, which is an unlabeled data set containing 50 million SIFT features~\cite{SIFT}. The SIFTs are extracted from the IPDID/1000k image data set~\cite{NP_COP} using the VLFeat toolbox~\cite{vlfeat}.

\begin{figure}[h]
\centering
\includegraphics[width=83mm]{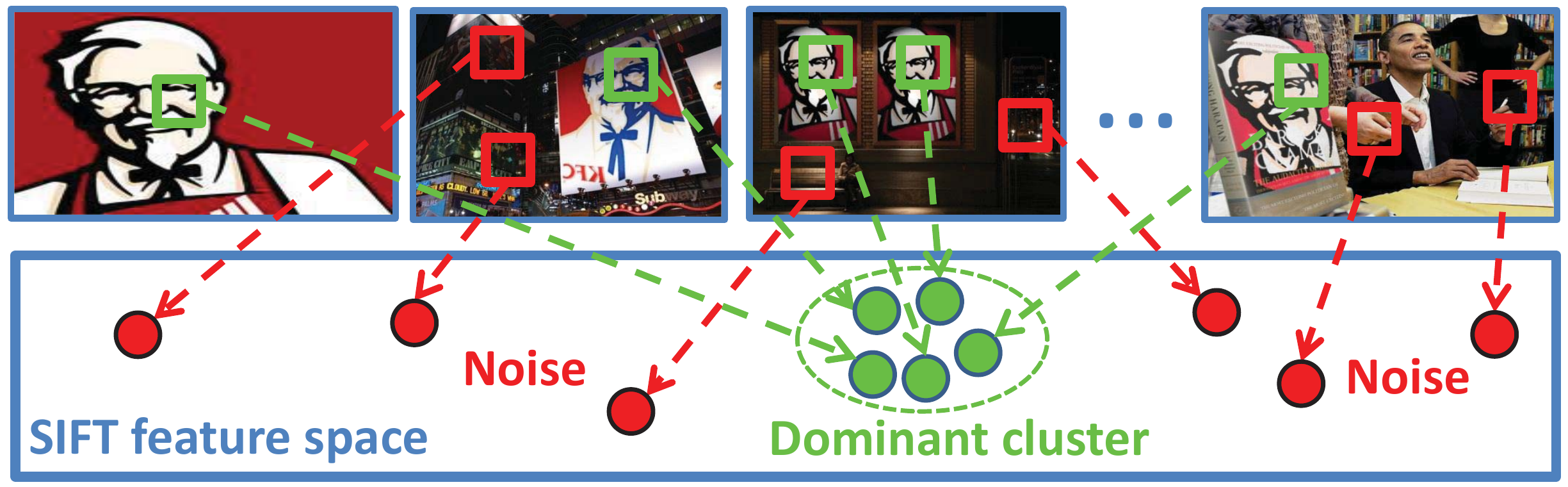}
\vspace{-3mm}
\caption{Illustration of SIFT dominant cluster.}
\label{Fig:SIFT_cluster}
\vspace{-1mm}
\end{figure}

Scale Invariant Feature Transform (SIFT)~\cite{SIFT} is a $L_2$ normalized 128-dimensional vector describing the texture of a small image region. In the field of computer vision, it is a standard procedure to represent similar image regions by highly cohesive SIFT dominant clusters named ``\emph{visual words}''~\cite{BOW}. As shown in Figure~\ref{Fig:SIFT_cluster}, since partial duplicate images always share a common image content (e.g., KFC grandpa), the SIFTs~\cite{SIFT} extracted from similar image regions are highly similar to each other and naturally form a dominant cluster (i.e., visual word). However, the number of visual words is unknown and there are also a large proportion of noisy SIFTs extracted from the random non-duplicate regions (i.e., red points in Figure~\ref{Fig:SIFT_cluster} and Figure~\ref{Fig:IPDID-SIFT_results}), leading to a high degree of background noise.
As a result, the scalability and strong resistance against noise of {\PALID} are very suitable for visual word generation.

Table~\ref{Table:SA_SIFT_50M} shows the parallel performance of {\PALID}. {\PALID} is able to process 50 million SIFT features in $2.29$ hours, achieving a speedup ratio of 7.51 with 8 executors. This demonstrates the promising parallel performance of {\PALID}.

\hlit{
Since there is no parallel solution for the other affinity-based methods, such as IID\cite{IID}, SEA\cite{SEA} and AP\cite{AP}, we fairly compare them with {\ALID} on the SIFT-50M data set, using the same single-machine experimental settings as Section~\ref{Section:SA} used.
Figure~\ref{Fig:SA_IPDID_SIFT} shows the memory and runtime performances of the affinity-based methods on uniformly sampled subsets of SIFT-50M, where all experiments are stopped when the 12GB RAM limit is reached.
We can see that the empirical orders of runtime and memory growth of {\ALID} are significantly lower than the other methods. Especially, {\ALID} consumes 10 GB memory to process 1.29 million SIFTs in $4.4$ hours on a standard PC. In contrast, such a large amount of data is far beyond the capability of the other affinity-based methods, which can at most deal with $0.04$ million SIFTs on the same platform.
}

\begin{table}[t]
  \centering
  \caption{Performance of PALID on SIFT-50M}
  \label{Table:SA_SIFT_50M}
  \begin{tabular}{|c|c|c|c|c|}
    \hline
     Methods       &Executors &    Runtime   &  Speadup Ratio       \\ \hline
     PALID-1Exec   &     1    &     17.2 hours&     1                \\ \hline
     PALID-2Exec   &     2    &     8.96 hours&     1.92             \\ \hline
     PALID-4Exec   &     4    &     4.48 hours&     3.84             \\ \hline
     PALID-8Exec   &     8    &     2.29 hours&     7.51             \\ \hline
  \end{tabular}
\end{table}

\begin{figure}[t]
\centering
\includegraphics[width=43mm]{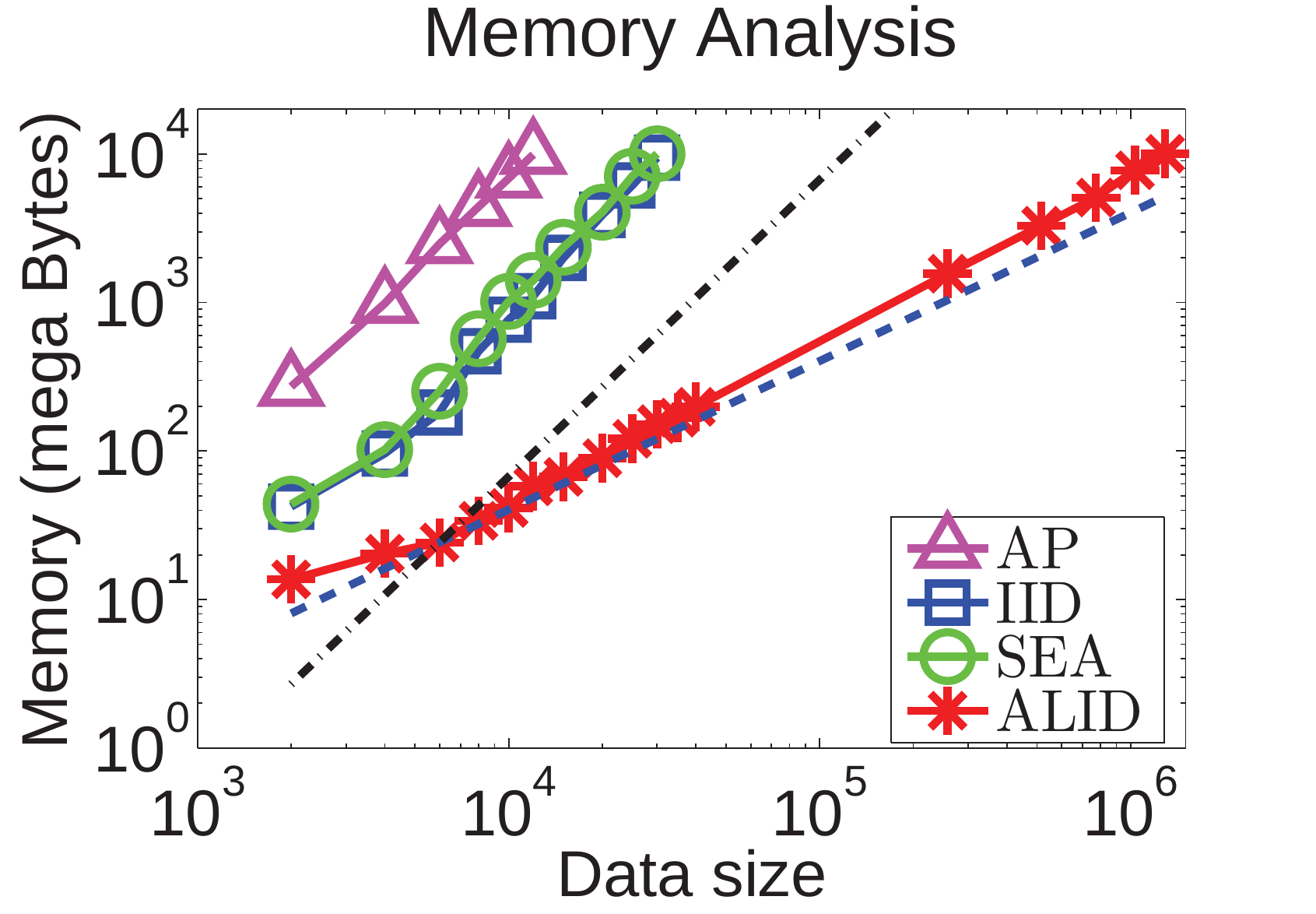}
\hspace{-4mm}
\includegraphics[width=43mm]{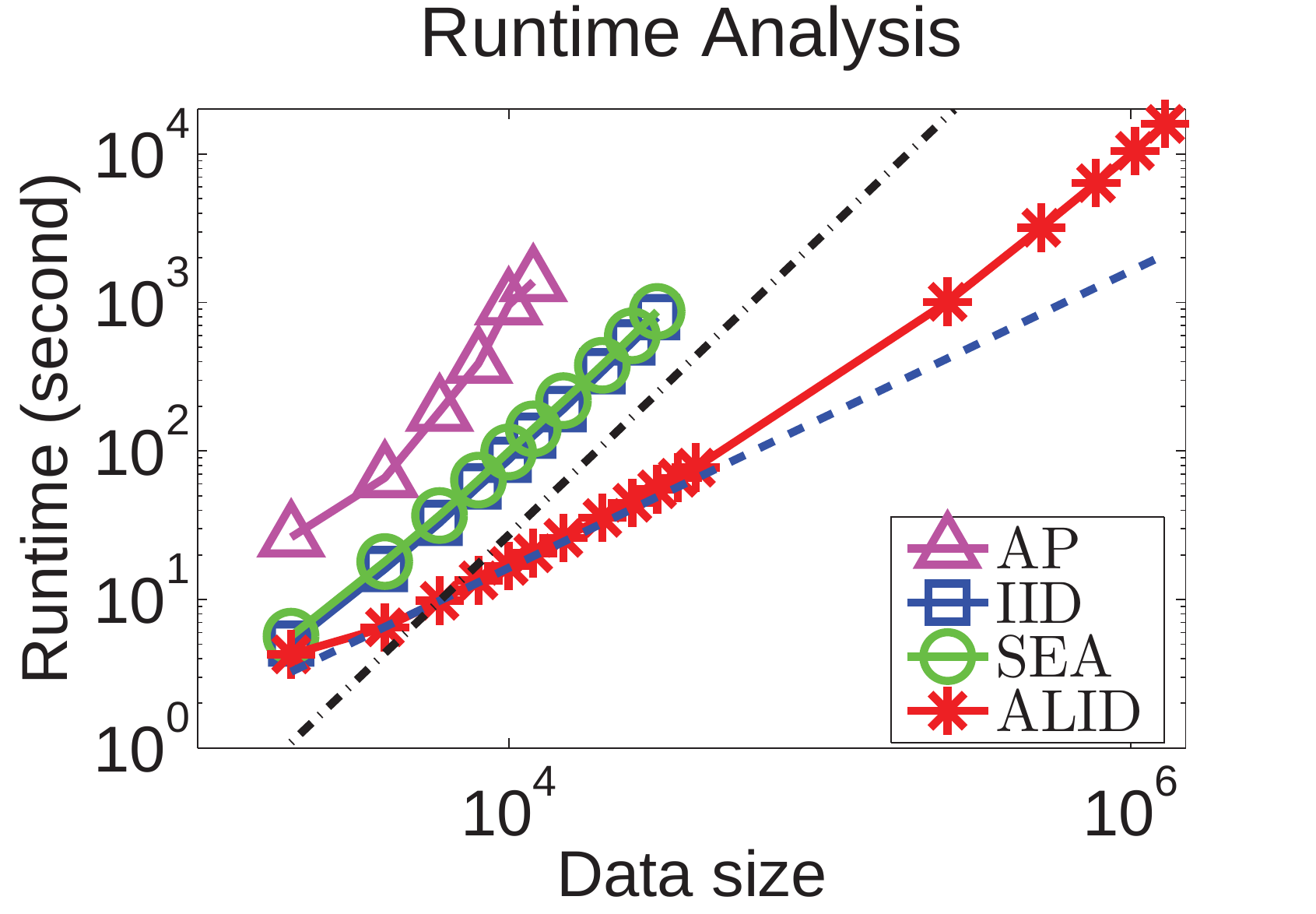}
\vspace{-5mm}
\caption{Scalability analysis on SIFT-50M subset.}
\label{Fig:SA_IPDID_SIFT}
\end{figure}

\newcommand{\mfigwidth}{25mm}
\begin{figure}[h]
\centering
\hspace{-3mm}
\subfigure[Original]{\includegraphics[width=\mfigwidth]{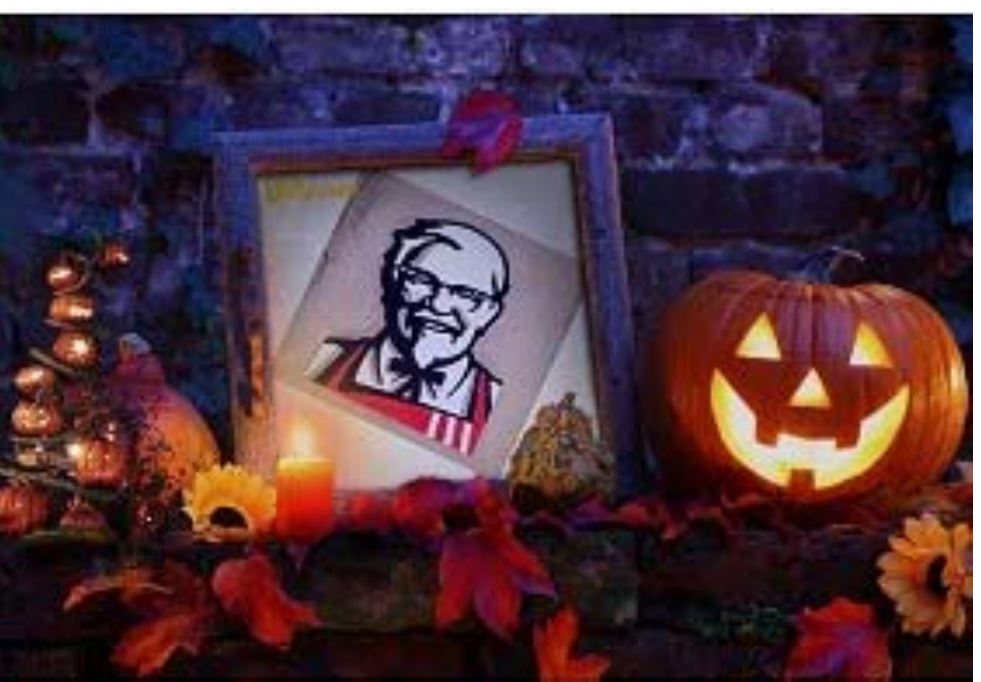}}
\subfigure[{\PALID}]{\includegraphics[width=\mfigwidth]{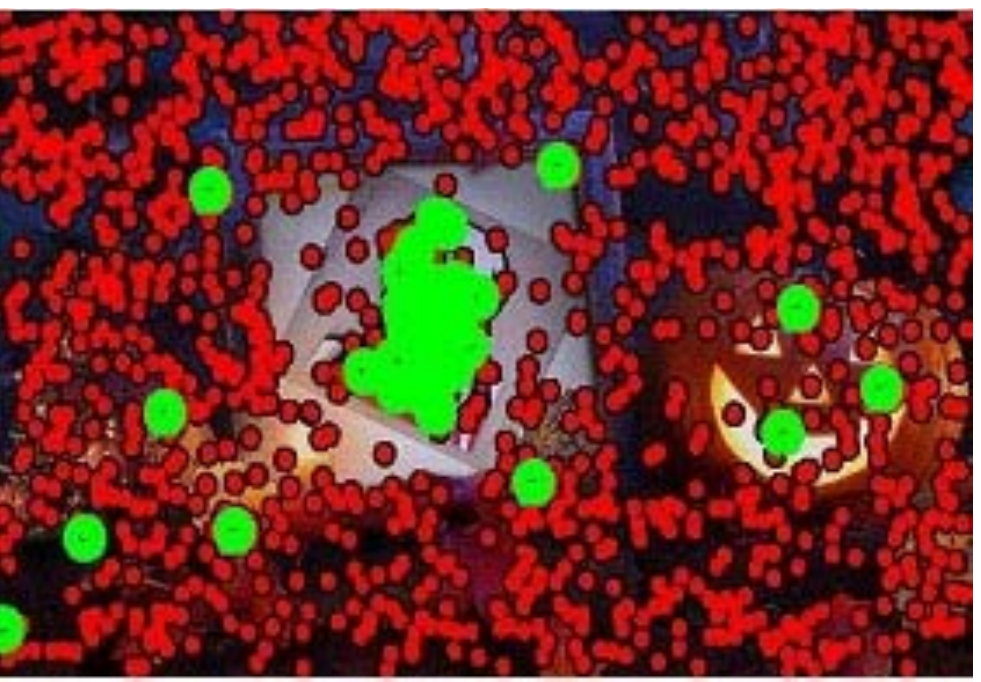}}
\subfigure[{\ALID}]{\includegraphics[width=\mfigwidth]{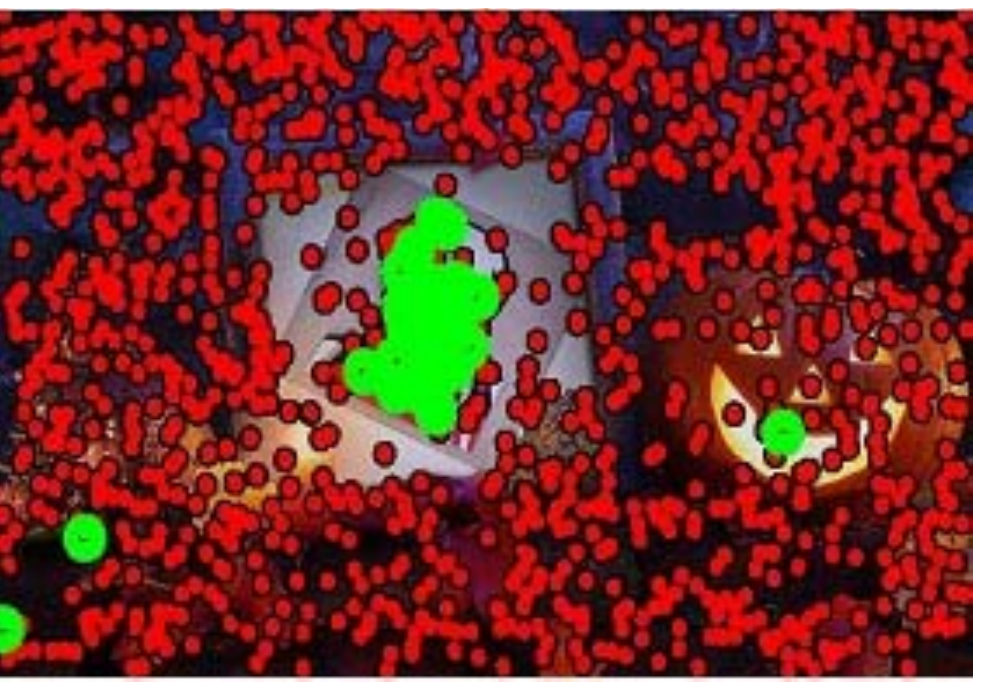}}
\subfigure[IID]{\includegraphics[width=\mfigwidth]{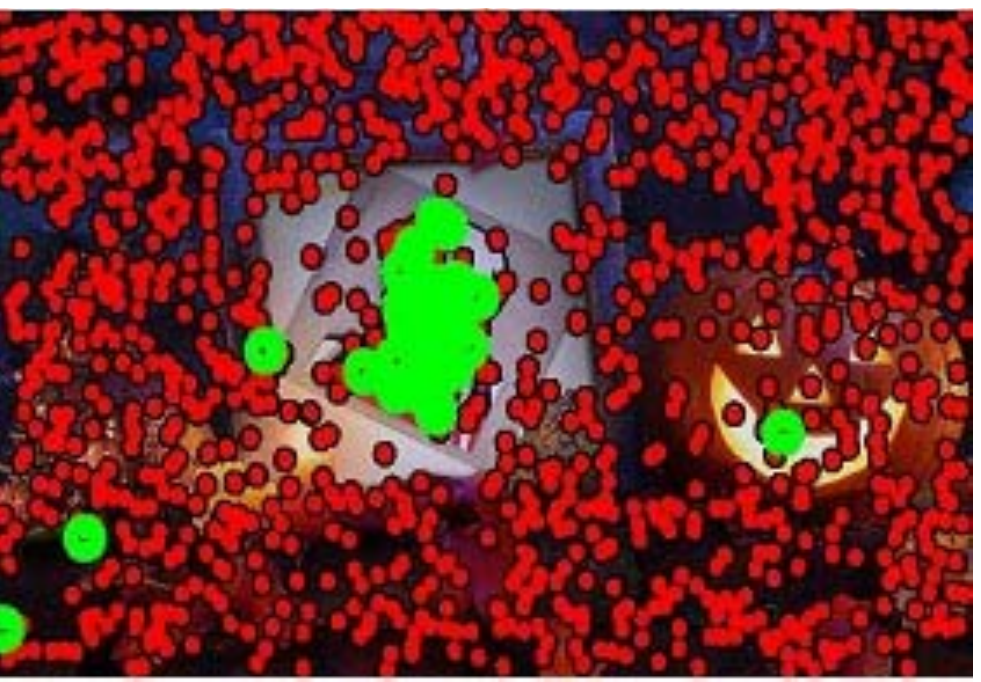}}
\subfigure[SEA]{\includegraphics[width=\mfigwidth]{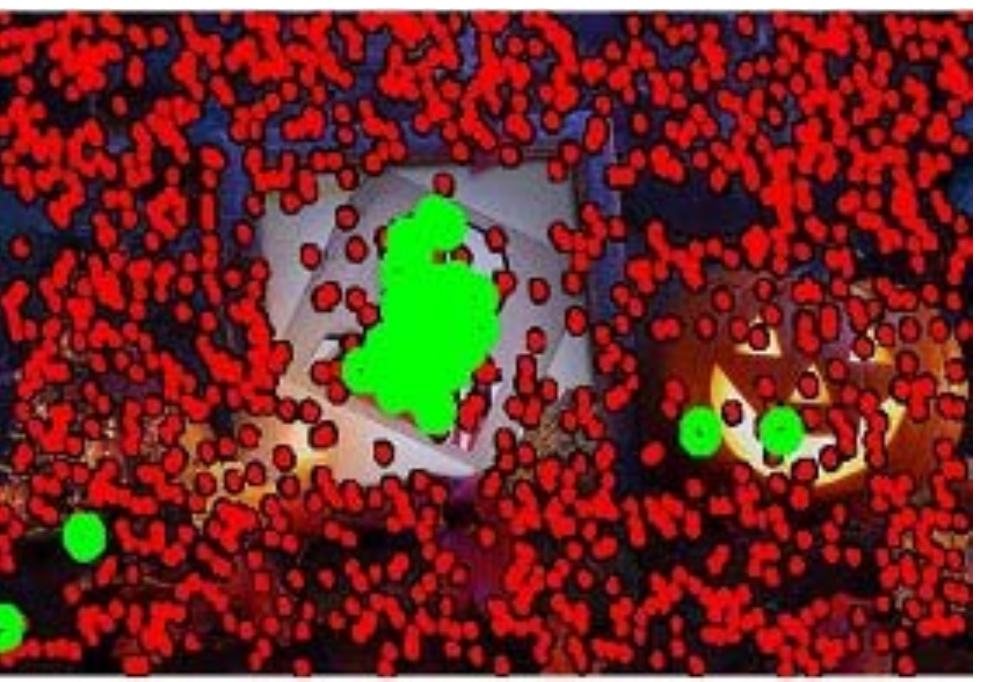}}
\subfigure[AP]{\includegraphics[width=\mfigwidth]{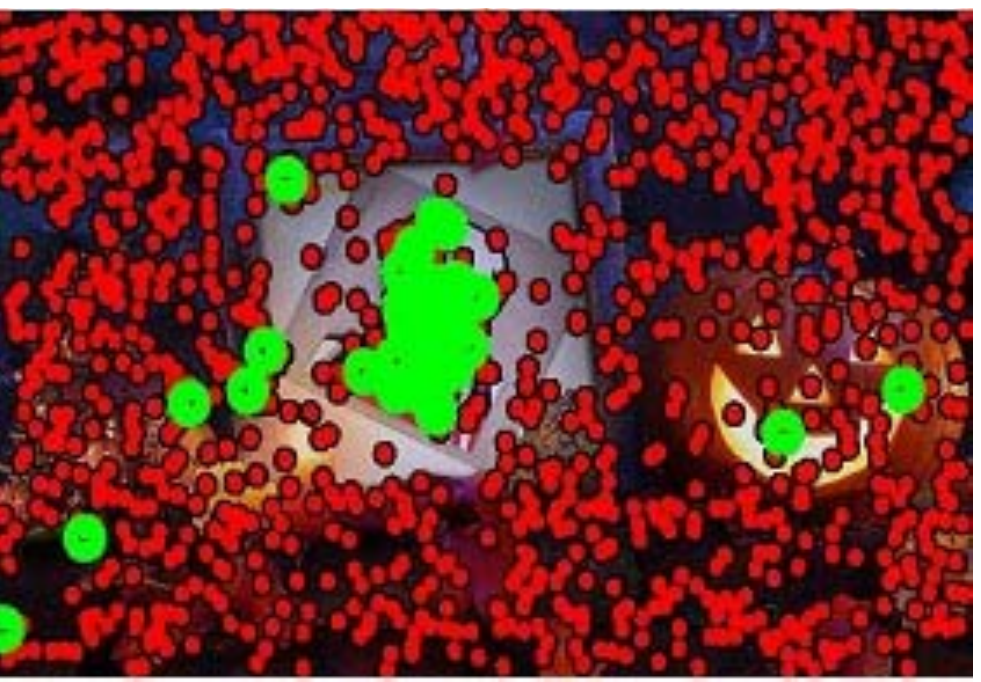}}
\nop{
\subfigure[KM]{\includegraphics[width=\mfigwidth]{FINAL_FIGS/kfc_KM}}
\subfigure[SC-FL]{\includegraphics[width=\mfigwidth]{FINAL_FIGS/kfc_SC-FL}}
\subfigure[SC-NYS]{\includegraphics[width=\mfigwidth]{FINAL_FIGS/kfc_SC-NYS}}
}
\vspace{-3mm}
\caption{Detection quality on (a) ``KFC grandpa'' of the IPDID/1000k data set. The green points are SIFTs from dominant clusters with high densities ($\pi{(x)}>0.75$). The red points in (b)-(f) are noise SIFTs filtered out by the affinity-based methods.
\nop{
The partitioning-based methods of (g) KM, (h) SC-FL and (i) SC-NYS cannot filter out noise SIFTs on background image regions, thus all SIFTs are plotted as green points.}}
\label{Fig:IPDID-SIFT_results}
\end{figure}

Despite that SIFT-50M is too large to be manually labeled, the dominant cluster detection quality can still be qualitatively assessed by the results in Figure~\ref{Fig:IPDID-SIFT_results}. As shown in Figures~\ref{Fig:IPDID-SIFT_results}(b)-\ref{Fig:IPDID-SIFT_results}(f), the affinity-based methods effectively detect most of the SIFT features (i.e., green points) extracted from the similar image regions of ``KFC grandpa'', since such SIFTs can naturally form dense subgraphs with high cohesiveness.
At the same time, the large proportion of noisy SIFTs (i.e., red points) extracted from the random non-duplicate background image regions are filtered out. The results demonstrate the effectiveness of affinity-based methods in resisting overwhelming amount of noisy SIFTs.
Additionally,  we also evaluate the AVG-F performance of {\PALID} on the labeled data sets of NART and NDI. The resulting AVG-F performances are consistent with {\ALID}.

\nop{
In constrast, since the partitioning-based methods, such as KM~\cite{KM_Lloyd}, SC-FL~\cite{SC_tradition} and SC-NYS~\cite{NYS_PAMI04}, force assign every SIFT to a cluster, they are not able to distinguish the dominant cluster from noise data. Therefore, all SIFTs in Figures~\ref{Fig:IPDID-SIFT_results}(g)-\ref{Fig:IPDID-SIFT_results}(i) are plotted as green points.
}


\section{Conclusions}\label{sec:con}
In this paper, we proposed {\ALID}, a scalable and effective dominant cluster detection approach against high background noise.
{\ALID} demonstrates remarkable noise resistance capability, achieves significant scalability improvement over the other affinity-based methods and is highly parallelizable in MapReduce framework.
As future work, we will further extend {\ALID} towards the online version to efficiently process streaming data sources.
\fussy

\bibliographystyle{abbrv}

\balance
\bibliography{references}

%
%

\newpage

\begin{table*}[t]
\centering
\caption{The definitions of symbols and acronyms.}
\label{table:lookup}
\newcommand{\verline}{\vrule width 1.5pt}
\begin{tabular}{|c|p{65mm}!{\verline}c|p{65mm}|}\hline

\textbf{Symbols}    & \textbf{Definitions} & \textbf{Acronyms}  & \textbf{Definitions} \\ \hline

$n$       & The total number of graph vertexes. &
{\ALID}   & The proposed method in this paper. \\ \hline

$G$       & The global affinity graph $G=(V,I,A)$. &
{\PALID}  & The parallel implementation of {\ALID} \\ \hline

$V$       & The set of all graph vertexes $V=\{v_i\in R^d \mid i\in I\}$. $R^d$ is $d$-dimensional real number space. &
LID       & Localized infection immunization dynamics (Step 1 of Algorithm~\ref{alg:ALID_final}). \\ \hline

$I$       & The ``global range'', which is the index set of all graph vertexes $I=[1,n]$. &
ROI       & Region of interest (Step 2 of Algorithm~\ref{alg:ALID_final}). \\ \hline

$A$       & The global affinity matrix of $G$. &
CIVS      & Candidate infective vertex search (Step 3 of Algorithm~\ref{alg:ALID_final}). \\ \hline

$v_i$     & The $d$-dimensional data item that uniquely corresponds to the $i$-th graph vertex.  &
LSH       & The locality sensitive hashing method~\cite{LSH}. \\ \hline

$s_i$     & The $n$-dimensional index vector that uniquely represents the $i$-th graph vertex.  &
DS        & The dominant set method~\cite{DS}. \\ \hline

$x$       & The $n$-dimensional vector to represent any subgraph of $G$. &
RD        & The replicator dynamics method~\cite{Weibull}. \\ \hline

$x_i$     & The $i$-th dimension of $x$. $x_i$ is also the weight of vertex $s_i$ in subgraph $x$. &
StQP      & Standard quadratic optimization problem (Equation~\ref{Eqn:StQP}). \\ \hline

$\gamma_\beta(x)$   &  The set of infective subgraphs that are infective against $x$ and are within local range $\beta$. $\gamma_\beta(x)=\{y\in\triangle_\beta^n \mid \pi(y-x,x)>0\}$.&
IID       & The infection immunization dynamics method~\cite{IID}. \\ \hline

$a^*$     & The number of vertexes in the largest dense subgraph (i.e., dominant cluster). &
SEA       & The shrink and expansion algorithm~\cite{SEA}. \\ \hline

$\alpha$  & The index set of all vertexes in subgraph $x$. &
AP        & The affinity propagation method~\cite{AP}. \\ \hline

$\beta$   & The ``local range'', which is the index set of a local group of vertexes. &
LSR       & The locality sensitive region (Figure~\ref{Fig:hive}). \\ \hline

$A_{\beta\beta}$, $A_{\beta\alpha}$    & Local affinity matrices (see Figure~\ref{Fig:LID_matrix}). &
w.r.t.    & This means ``with respect to''. \\ \hline

$c\leq C$ & $c$ is the current number of iterations of {\ALID}. $C$ is the maximum iteration limit of {\ALID}. &
AVG-F     & Average $\text{F}_1$ score. \\ \hline

$\omega$, $\eta$, $P$    & The constant parameters in Table~\ref{Table:complexity_bounds}. &
NDI       & The near duplicate image data set. \\ \hline

$\delta$  & The maximum number of vertexes that can be retrieved by CIVS (Step 3). &
Sub-NDI   & The subset of NDI data set (Section~\ref{Section:SIA}). \\ \hline

$\triangle^n$, $\triangle_\beta^n$   &
$\triangle^n=\{x\in{R^n}\mid\sum_{i\in I}{x_i}=1,x_i\geq0\}$ is the set of all possible subgraphs in global range $I$.
$\triangle_\beta^n=\{x\in \triangle^n \mid \sum_{i\in \beta}x_i=1\}$ is the set of all possible subgraphs in local range $\beta$. &
SIFT-50M  & The data set of 50 million SIFT features~\cite{SIFT}. \\ \hline

$\hat x$, $\hat x^{(c)}$, $x^*$   & $\hat x$ is the local dense subgraph found by LID. $\hat x^{(c)}$ is the dense subgraph detected by LID in the $c$-th iteration of {\ALID}. $x^*$ is the global dense subgraph (i.e., the output of {\ALID}). &
NART      & The news articles data set. \\ \hline

\end{tabular}
\end{table*}

\appendix

\section{Proof of Proposition~1}
\label{sec:double_hyperball_proof_appendix}

\newtheorem{proposition_appendix}{Proposition}
\begin{proposition_appendix}
\label{prop:triangle_appendix}
Given the local dense subgraph $\hat{x}$, the double-deck hyperball $H(D,R_{in},R_{out})$ has the following properties:
\begin{enumerate}
  \item $\forall{j}\in{I}$ and $\vnorm{v_j-D}<R_{in}$, $\pi(s_j-\hat{x},\hat{x})>0$; and
  \item $\forall{j}\in{I}$ and $\vnorm{v_j-D}>R_{out}$, $\pi(s_j-\hat{x},\hat{x})<0$.
\end{enumerate}
\end{proposition_appendix}

\begin{proof}
We refer to Equantion~\ref{Eqn:define_hyperball} in Section~\ref{sec:Estimate_ROI} as
\begin{equation}\notag
\left\{
\begin{array}{l}
D=\sum\limits_{i\in\alpha}{v_i\hat x_i},\;where\;v_i\in{V}\;are\;the\;data\;items. \\[3mm]
R_{in}=\frac{1}{k}\ln(\frac{\lambda_{in}}{\pi(\hat{x})}),\;where\;\lambda_{in}{=}\sum\limits_{i\in\alpha}{\hat x_ie^{-k\vnorm{v_i-D}}} \\[1mm]
R_{out}=\frac{1}{k}\ln(\frac{\lambda_{out}}{\pi(\hat{x})}),\;where\;\lambda_{out}{=}\sum\limits_{i\in\alpha}{\hat x_ie^{k\vnorm{v_i-D}}}
\end{array}\right.\;\;\,\eqref{Eqn:define_hyperball}
\end{equation}

Define $f_{in}(v_j)$ and $f_{out}(v_j)$ as
\begin{equation}
\label{Eqn:f_vi_inout}
\left\{
\begin{array}{l}
f_{in}(v_j)=\lambda_{in}e^{-k\vnorm{v_j-D}} \\[1mm]
f_{out}(v_j)=\lambda_{out}e^{-k\vnorm{v_j-D}}
\end{array}\right.
\end{equation}
By plugging $\lambda_{in}$ and $\lambda_{out}$ (Equation~\ref{Eqn:define_hyperball}) into Equation~\ref{Eqn:f_vi_inout}, we have:
\begin{equation}
\label{Eqn:f_vi_inout_specific}
\left\{
\begin{array}{l}
f_{in}(v_j)=\sum\limits_{i\in\alpha}\hat x_ie^{-k(\vnorm{v_j-D}+\vnorm{v_i-D})} \\
f_{out}(v_j)=\sum\limits_{i\in\alpha}\hat x_ie^{-k(\vnorm{v_j-D}-\vnorm{v_i-D})}
\end{array}\right.
\end{equation}
Recall that the scaling factor $k>0$ is always positive (Equation~\ref{Eqn:exp_kernel_affinity}). Applying the triangle inequality to Equation~\ref{Eqn:f_vi_inout_specific}, we obtain
\begin{equation}
\label{Eqn:hyperball_final_proof}
\left\{
\begin{array}{l}
\forall{j}\in{I}, \pi(s_j,\hat x)\geq{f_{in}(v_j)} \\
\forall{j}\in{I}, \pi(s_j,\hat x)\leq{f_{out}(v_j)}
\end{array}\right.
\end{equation}
where $\pi(s_j,\hat x)=(A\hat x)_j=\sum\limits_{i\in\alpha}\hat x_ie^{-k\vnorm{v_j-v_i}}$.

For any vertex $v_j$ that satisfies $\vnorm{v_j-D}=R_{in}$, we have
\begin{equation}
f_{in}(v_j)=\lambda_{in}e^{-kR_{in}}=\pi(\hat{x})
\end{equation}
by plugging $R_{in}$ of Equation~\ref{Eqn:define_hyperball} into $f_{in}(v_j)$ of Equation~\ref{Eqn:f_vi_inout}.

Since $f_{in}(v_j)$ in Equation~\ref{Eqn:f_vi_inout} monotonously decreases with respect to $\vnorm{v_j-D}$, we can derive
\begin{equation}
\label{Eqn:hyperball_final_proof_chain_mid}
\forall{j}\in{I} \text{ and } \vnorm{v_j-D}<R_{in}, f_{in}(v_j)>\pi(\hat{x})
\end{equation}
Therefore, we can derive from Equation~\ref{Eqn:hyperball_final_proof_chain_mid} and the first inequation of Equation~\ref{Eqn:hyperball_final_proof} that
\begin{equation}
\forall{j}\in{I} \text{ and } \vnorm{v_j-D}<R_{in}, \pi(s_j,\hat{x})\geq f_{in}(v_j)>\pi(\hat{x})
\end{equation}
which proves the first property of Proposition~\ref{prop:triangle_inequality_hyperball}.

The second property of Proposition~\ref{prop:triangle_inequality_hyperball} can be proved in a similary way by plugging $\vnorm{v_j-D}=R_{out}$ of Equantion~\ref{Eqn:define_hyperball} into $f_{out}(v_j)$ of Equation~\ref{Eqn:f_vi_inout_specific}.
\end{proof}

\section{Proof of Convergence}
\label{sec:convergence_proof_appendix}

\begin{proposition}
\label{prop:convergence}
For a global dense subgraph $x^*$ that contains $M$ vertexes, denote $\hat x^{(c)}$ as the local dense subgraph detected by \emph{Step 1} of Algorithm~\ref{alg:ALID_final} (i.e., {\ALID}) in the $c$-th iteration, then we have the following two properties:
\begin{enumerate}
  \item {\ALID} is guaranteed to converge; and
  \item The statistical expectation of the number of vertexes in $\hat x^{(c)}$ converges to $M$.
\end{enumerate}
\end{proposition}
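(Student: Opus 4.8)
The plan is to handle the two claims in turn: convergence of {\ALID} follows from monotonicity and boundedness of the density sequence together with Proposition~\ref{prop:triangle_inequality_hyperball}, while the expected support size is controlled by a recall bound on the LSH-based CIVS step.

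\emph{Part 1 (convergence).} The pivotal quantity is $\pi(\hat x^{(c)})$. I would first show it is non-decreasing: in Step~3 the update of Equation~\ref{Eqn:update_local_opt_result} sets $x^{(c+1)}=\hat x^{(c)}$, so the density is unchanged there; in Step~1 each LID iteration invades the current iterate with an infective vertex or co-vertex $y=S(x)\in\gamma_\beta(x)$, so Theorem~\ref{theorem:infection_immune} makes $\pi$ strictly increase until $\gamma_\beta(x)=\emptyset$. Hence $\pi(\hat x^{(c+1)})\ge\pi(\hat x^{(c)})$. Boundedness is immediate: since $a_{ij}=e^{-k\vnorm{v_i-v_j}}<1$ for $i\ne j$ and $a_{ii}=0$, we get $\pi(x)=\sum_{i\ne j}x_ix_ja_{ij}<\big(\sum_i x_i\big)^2=1$ for every $x\in\triangle^n$. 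Thus $\{\pi(\hat x^{(c)})\}$ converges. To upgrade ``the density converges'' into ``{\ALID} reaches a global dense subgraph'', I would argue that the local range eventually stabilizes: once no vertex newly retrieved into $\psi$ is infective against $\hat x^{(c)}$ --- which must eventually hold, e.g.\ because the quadratic $\pi$ is semialgebraic so the infection--immunization iterates themselves converge, or under a mild non-degeneracy assumption making the attainable local-maximum densities form a discrete set --- the subgraph $\hat x^{(c)}$ is immune against all of $\psi$. Since $\theta(c)\to1$, the ROI radius in Equation~\ref{Eqn:ROI_final_def} tends to $R_{out}$; by Proposition~\ref{prop:triangle_inequality_hyperball} every globally infective vertex lies inside the outer ball, hence (being close to the current support) is retrieved by CIVS. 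Therefore $\gamma(\hat x^{(c)})=\emptyset$, so by Theorem~\ref{theorem:nash_gamma_equal} $\hat x^{(c)}$ is a global dense subgraph and {\ALID} halts.

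\emph{Part 2 (expected support size).} Let $x^*$ be the global dense subgraph {\ALID} converges to, with $|\mathrm{supp}(x^*)|=M$, and set $K^{(c)}=M-|\mathrm{supp}(\hat x^{(c)})\cap\mathrm{supp}(x^*)|$, the number of vertices of $x^*$ not yet captured. I would show $K^{(c)}$ is non-increasing along the run (a member of the cohesive cluster $x^*$ that already lies in the local range is never a weak vertex, so it stays in the support of the local dense subgraph). At iteration $c$, CIVS issues LSH queries from every supporting data item of $\hat x^{(c)}$; an uncaptured $v_j\in\mathrm{supp}(x^*)$ is close to all those query points (all belong to the highly similar cluster $x^*$) and, for $c$ large, lies inside the ROI by Proposition~\ref{prop:triangle_inequality_hyperball}, so its LSH recall probability in that iteration is at least a constant $p>0$. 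Hence each uncaptured vertex is captured with probability $\ge p$ per iteration, giving $\mathbb{E}[K^{(c)}]\le M(1-p)^{\,c-c_0}\to0$. When $K^{(c)}=0$ the local range contains all of $\mathrm{supp}(x^*)$ and LID returns exactly $x^*$, so $|\mathrm{supp}(\hat x^{(c)})|=M$; when $K^{(c)}>0$ the local range is still bounded by $a^*+\delta$. Therefore $M-\mathbb{E}[K^{(c)}]\le\mathbb{E}\big[|\mathrm{supp}(\hat x^{(c)})|\big]\le M+(a^*+\delta)\Pr[K^{(c)}>0]$, and as $c\to\infty$ both bounds tend to $M$.

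The main obstacle is the assertion that once $\mathrm{supp}(x^*)\subseteq\beta^{(c)}$ the LID dynamics actually return $x^*$ rather than a competing local maximum, and that no member of $\mathrm{supp}(x^*)$ is immunized out before the rest of its support is gathered. Making this rigorous needs a basin-of-attraction argument for the infection--immunization dynamics --- or the standing assumption that this {\ALID} run is seeded inside the basin of $x^*$ --- together with the cohesiveness of $x^*$ to rule out a denser intruding subgraph within the local range. Establishing the uniform LSH recall bound $p>0$ similarly requires a separation condition relating the diameter of $x^*$ to the LSH parameters; these are the points at which the statement implicitly becomes ``converges with high probability''.
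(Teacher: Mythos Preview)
Your route differs from the paper's on both parts, and is in places cleaner. For Part~1 the paper proceeds by a three-case split on how many of the $M$ vertices of $x^*$ CIVS retrieves in a given round (all, some, or none), and in each case appeals to Theorems~\ref{theorem:nash_gamma_equal} and~\ref{theorem:infection_immune} to say LID converges to the corresponding subgraph; it never really argues why the \emph{outer} loop of Algorithm~\ref{alg:ALID_final} stabilizes. Your monotone-and-bounded argument on $\pi(\hat x^{(c)})$ makes that mechanism explicit. Both proofs, however, share the unproved step you correctly flag at the end: that LID, once $\mathrm{supp}(x^*)\subseteq\beta^{(c)}$, actually lands on $x^*$ rather than some other local maximizer of $\pi$ in $\triangle^n_{\beta^{(c)}}$. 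The paper simply asserts this (its Case~1), whereas you identify it as a basin-of-attraction assumption.

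For Part~2 the paper writes the recursion $a^{(c+1)}=m^{(c)}\bigl[1-(1-p)^{a^{(c)}}\bigr]$, where $m^{(c)}\le M$ is the number of vertices of $x^*$ inside the current ROI, shows $m^{(c)}\nearrow M$ via a contradiction from Theorem~\ref{theorem:nash_gamma_equal}, and then concludes $a^{(c)}\to M$. Your decrement argument on $K^{(c)}$ with the geometric bound $\mathbb{E}[K^{(c)}]\le M(1-p)^{c-c_0}$ is more direct and arguably more correct: the fixed point of the paper's own recursion satisfies $a=M[1-(1-p)^a]<M$ for any $p<1$, so the limit $a^{(c)}\to M$ does not actually follow from the recursion it sets up. Your sandwich bound avoids this issue entirely. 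What the paper does supply, and you invoke only informally (``for $c$ large''), is the argument that the growing ROI eventually contains every vertex of $x^*$; that is what underwrites your uniform recall bound $p>0$ across iterations.
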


\begin{proof}
Some useful notations as listed as follow:
\begin{itemize}
  \item $x^*$ is the global dense subgraph containing $M$ vertexes.
  \item $\hat x^{(c)}$ denotes the local dense subgraph detected by LID (i.e., Step 1) in the $c$-th iteration of Algorithm~\ref{alg:ALID_final}.
  \item $a^{(c)}$ denotes the statistical expectation of the number of vertexes in local dense subgraph $\hat x^{(c)}$.
  \item $H_c(D,R)$ is the ROI estimated from $\hat x^{(c)}$ in Step 2.
  \item $m^{(c)}\leq M$ represents the number of vertexes that belong to $x^*$ and are within the ROI $H_c(D,R)$.
  \item $p\in(0,1)$ is the lower bound of recall of the LSH method \cite{LSH} in CIVS (i.e., Step 3). Theoretical proof of the lower bound of recall of LSH can be found in M. Datar's work~\cite{LSH}.
\end{itemize}

First, we prove the convergence of {\ALID}, which is the first property of Proposition~\ref{prop:convergence}.

According to Theorem~\ref{theorem:nash_gamma_equal} and Theorem~\ref{theorem:infection_immune}, given any local range $\beta^{(c)}$, the LID method in Step 1 of Algorithm~\ref{alg:ALID_final} is guaranteed to converge to the local dense subgraph $\hat x^{(c)}\in\triangle_{\beta^{(c)}}^n$ within the local range $\beta^{(c)}$. That is,
\begin{equation}
\label{Eqn:x_c_max_LID}
\hat x^{(c)}=\max\limits_{x\in\triangle_{\beta^{(c)}}^n}{\pi(x)}
\end{equation}
Recall that $\beta^{(c)}$ is updated by the vertexes retrieved by CIVS in Step 3 of Algorithm~\ref{alg:ALID_final}, we discuss how {\ALID} converges in the following cases. For the interest of simplicity, we regard all the vertexes in the updated local range $\beta^{(c)}=a^{(c-1)}\cup \psi$ as retrieved by CIVS.

\begin{description}
\item [Case 1]: CIVS retrieves all the $M$ vertexes of the global dense subgraph $x^*$. In this case, we have $x^*\in\triangle_{\beta^{(c)}}^n$. By the definition of global dense subgraph (See Equation~\ref{Eqn:StQP}), we can drive that
     \begin{equation}
     x^*=\max\limits_{x\in\triangle^n}{\pi(x)}
     \end{equation}
     Since $\triangle_{\beta^{(c)}}^n\subset\triangle^n$ and $x^*\in\triangle_{\beta^{(c)}}^n$, we have
     \begin{equation}
     x^*=\max\limits_{x\in\triangle_{\beta^{(c)}}^n}{\pi(x)}
     \end{equation}
     Therefore, we can derive from Equation~\ref{Eqn:x_c_max_LID} that LID is guaranteed by Theorem~\ref{theorem:nash_gamma_equal} and Theorem~\ref{theorem:infection_immune} to converge to the global dense subgraph $\hat x^{(c)}=x^*$. In other words, {\ALID} converges to the global dense subgraph $x^*$.
\item [Case 2]: CIVS retrieves $N$ vertexes ($0<N<M$) of the global dense subgraph $x^*$. In this case, the $N$ retrieved vertexes of $x^*$ form a subgraph $x'\in\triangle_{\beta^{(c)}}^n$, which is a subgraph of the global dense subgraph $x^*\in\triangle^n$. Considering the fact that all vertexes of $x^*$ are highly similar with each other, we know that the vertexes in $x'$ are highly similar with each other as well. Thus, the graph density $\pi(x')$ is the maximum in the local range $\beta^{(c)}$, that is
    \begin{equation}
    x'=\max\limits_{x\in\triangle_{\beta^{(c)}}^n}{\pi(x)}
    \end{equation}
    Therefore, we can derive from Equation~\ref{Eqn:x_c_max_LID} that LID is guaranteed to converge to $\hat x^{(c)}=x'$ in the $c$-th iteration of {\ALID}. As proved later, the statistical expectation of the number of vertexes in $\hat x^{(c)}$ converges to $M$. In other words, {\ALID} converges and the expected detection quality converges to the optimal result.

\item [Case 3]: CIVS retrieves zero vertex of the global dense subgraph $x^*$. This is an ill-conditioned case, which only happens when the recall of LSH~\cite{LSH} is $p\approx 0$ under improper parameters, such as extremely large number of hash functions and small segment length $r$ ( Figure~\ref{Fig:SIA_COMP}). In this case, LID is still guaranteed to converge to a local dense subgraph $\hat x^{(c)}\in\triangle_{\beta^{(c)}}^n$, which does not contain any vertex of $x^*$. However, such ill-condition can be easily avoided by setting the LSH parameters properly~\cite{LSH}.
\end{description}

In summary, {\ALID} is guaranteed to converge and the detection quality of {\ALID} depends on how many vertexes of the global dense subgraph $x^*$ is retrieved by CIVS.

In the following, we further prove that the statistical expectation of the number of vertexes in $\hat x^{(c)}$ converges to $M$, which is the second property of Proposition~\ref{prop:convergence}.

For Case 1, we have proved that $\hat x^{(c)}=x^*$, thus the second property of Proposition~\ref{prop:convergence} is valid for Case 1.

For Case 2, we analyze the situation in detail as follows. By the definition of the global dense subgraph, all the $M$ vertexes in the global dense subgraph $x^*$ are highly similar with each other, therefore, when using a single graph vertex in the global dense subgraph $x^*$ as the LSH query, the probability to retrieve each of the other vertexes in $x^*$ is lower bounded by $p\in(0,1)$ (i.e., the recall of LSH~\cite{LSH}).

Since CIVS uses all the $a^{(c)}$ different vertexes in the local dense subgraph $\hat x^{(c)}$ as LSH queries, we can derive that the probability to retrieve each of the vertexes in the global dense subgraph $x^*$ is lower bounded by $1-(1-p)^{a^{(c)}}$.

Considering that there are totally $m^{(c)}$ vertexes of $x^*$ inside the ROI $H_c(D,R)$, we can derive that the number of vertexes of $x^*$ retrieved by CIVS is a random variable $b^{(c)}$ that submits to binomial distribution
\begin{equation}
\label{Eqn:b_c}
b^{(c)}\sim B(m^{(c)}, 1-(1-p)^{a^{(c)}})
\end{equation}

\nop{
These $b^{(c)}$ vertexes retrieved by CIVS are used to update the local range $\beta^{(c)}\rightarrow\beta^{(c+1)}$.
}

Here, $b^{(c)}$ is the number of vertexes of $x^*$ in the updated local range $\beta^{(c+1)}=\alpha^{(c)}\cup \psi$.
As proved in Case 2, such vertexes form a subgraph $x'\in\triangle_{\beta^{(c+1)}}^n$ of the global dense subgraph $x^*$, and LID is guaranteed to converge to the local dense subgraph $\hat x^{(c+1)}=x'$. Therefore, we can further derive from Equation~\ref{Eqn:b_c} that the statistical expectation of the number of vertexes in $\hat x^{(c+1)}$ is $a^{(c+1)}$, where
\begin{equation}
\label{Eqn:a_c_series}
a^{(c+1)}=E(b^{(c)})=m^{(c)}\left[1-(1-p)^{a^{(c)}}\right]
\end{equation}

Recall that $m^{(c)}$ is the number of vertexes of $x^*$ within ROI $H_c(D,R)$, since there are totally $M$ vertexes in $x^*$, we have $m^{(c)}\leq M$.

For any iteration $c$, if $m^{(c)}<M$, then there are $(M-m^{(c)})$ vertexes of $x^*$ outside the ROI $H_c(D,R)$ and we can derive
\begin{equation}
\label{Eqn:xc_fact}
\hat x^{(c+1)}\neq x^*
\end{equation}
from the fact that $\hat x^{(c+1)}$ is the local dense subgraph within the ROI $H_c(D,R)$ and $m^{(c)}<M$.

Then, we prove by contradiction that at least one of the $(M-m^{(c)})$ vertexes of $x^*$ is infective against the local dense subgraph $\hat x^{(c+1)}$, as follows.
If none of the $(M-m^{(c)})$ vertexes are infective against $\hat x^{(c+1)}$, then, as indicated by Theorem~\ref{theorem:nash_gamma_equal}, $\hat x^{(c+1)}$ will be the global dense subgraph, which leads to $\hat x^{(c+1)}=x^*$. This is in contradiction with the fact that $\hat x^{(c+1)}\neq x^*$ in Equation~\ref{Eqn:xc_fact}. Therefore, at least one of the $(M-m^{(c)})$ vertexes is infective against $\hat x^{(c+1)}$.

Since the ROI $H_{(c+1)}(D,R)$ is guaranteed by Proposition~\ref{prop:triangle_inequality_hyperball} to obtain all the new infective vertexes against $\hat x^{(c+1)}$, we can derive that $m^{(c+1)}>m^{(c)}$. Therefore, we have $m^{(c)}<m^{(c+1)}\leq M$,
which indicates that the series $\{m^{(c)}\}$ is an increasing series with a reachable upper bound $M$.

Recall that $p\in(0,1)$, we can further derive from Equation~\ref{Eqn:a_c_series} that the series of $\{a^{(c)}\}$ is an increasing series with respect to the iteration times $c$.

Since the total number of vertexes in the global dense subgraph $x^*$ is $M$, we have $a^{(c)}\leq M$. Since $\{a^{(c)}\}$ is an increasing series, we can derive from Equation~\ref{Eqn:a_c_series} that the series $\{a^{(c)}\}$ converges to $M$, and a larger value of $p$ leads to a faster convergence rate. Recall that $a^{(c)}$ is the statistical expectation of the number of vertexes in local dense subgraph $\hat x^{(c)}$, thus, the second statement of Proposition~\ref{prop:convergence} is proved.
\end{proof}

\section{Noise Resistance Analysis}\label{sec:noise-analysis}
The purpose of noise resistance analysis is to compare the noise resisting performance of the affinity-based methods (i.e., ALID, AP, IID, SEA) and the partitioning-based methods, such as: 1) \emph{k}-means (KM) \cite{KM_Lloyd}; 2) spectral clustering using full affinity matrix (SC-FL) \cite{SC_tradition}; 3) spectral clustering using nystrom approximation (SC-NYS) \cite{NYS_PAMI04}. For the partitioning-based methods, we use the source codes published by Chen \emph{et al.} \cite{PSC}. The mean-shift method (MS)~\cite{MS_PAMI_most_cited} is analyzed as well.
 
We compare the AVG-F performance of each method on a series of data sets with increasing noise degree
\begin{equation}
\label{Eqn:noise_degree}
\text{noise degree} = \frac{\# \emph{\textrm{ noise\_data}}}{\# \emph{\textrm{ ground\_truth\_data}}}
\end{equation}
which is the ratio of the number of noise data items over the number of data items in the dominant clusters according to the ground truth.
Such series of data sets are obtained by injecting a certain amount of noise data into the ground truth data set.

For partitioning-based methods KM, SC-FL and SC-NYS, we set the cluster number $K$ in the same way as Liu~\textit{et~al.}~\cite{SEA,GS} did, which counts the noise data as an extra cluster. We use the same settings as in Section~\ref{sec:exp} for the affinity-based methods and adopts the classic gaussian kernel for the mean shift method.  All compared methods are optimally tuned. To avoid the performance degeneration caused by the enforced sparsity of the affinity graph, we use a full affinity matrix for methods AP, SEA and IID to preserve the original cohesiveness of all dense subgraphs. Due to the limited scalability of method AP, we use the two small data sets NART and Sub-NDI.

\begin{figure}[h]
\mbox{
\hspace{-3mm}
\subfigure[NART]{\includegraphics[width=42mm]{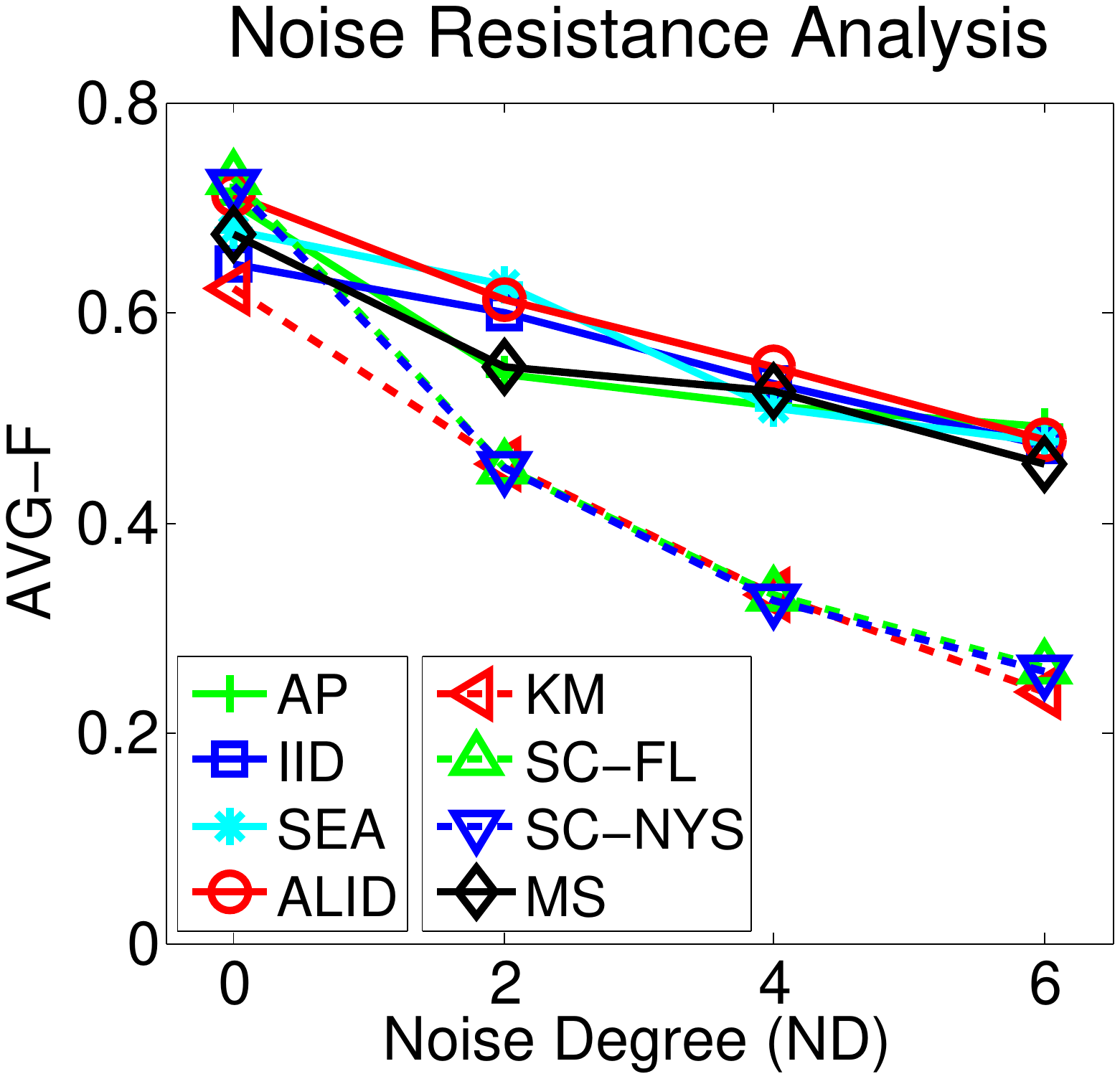}}
\subfigure[Sub-NDI]{\includegraphics[width=42mm]{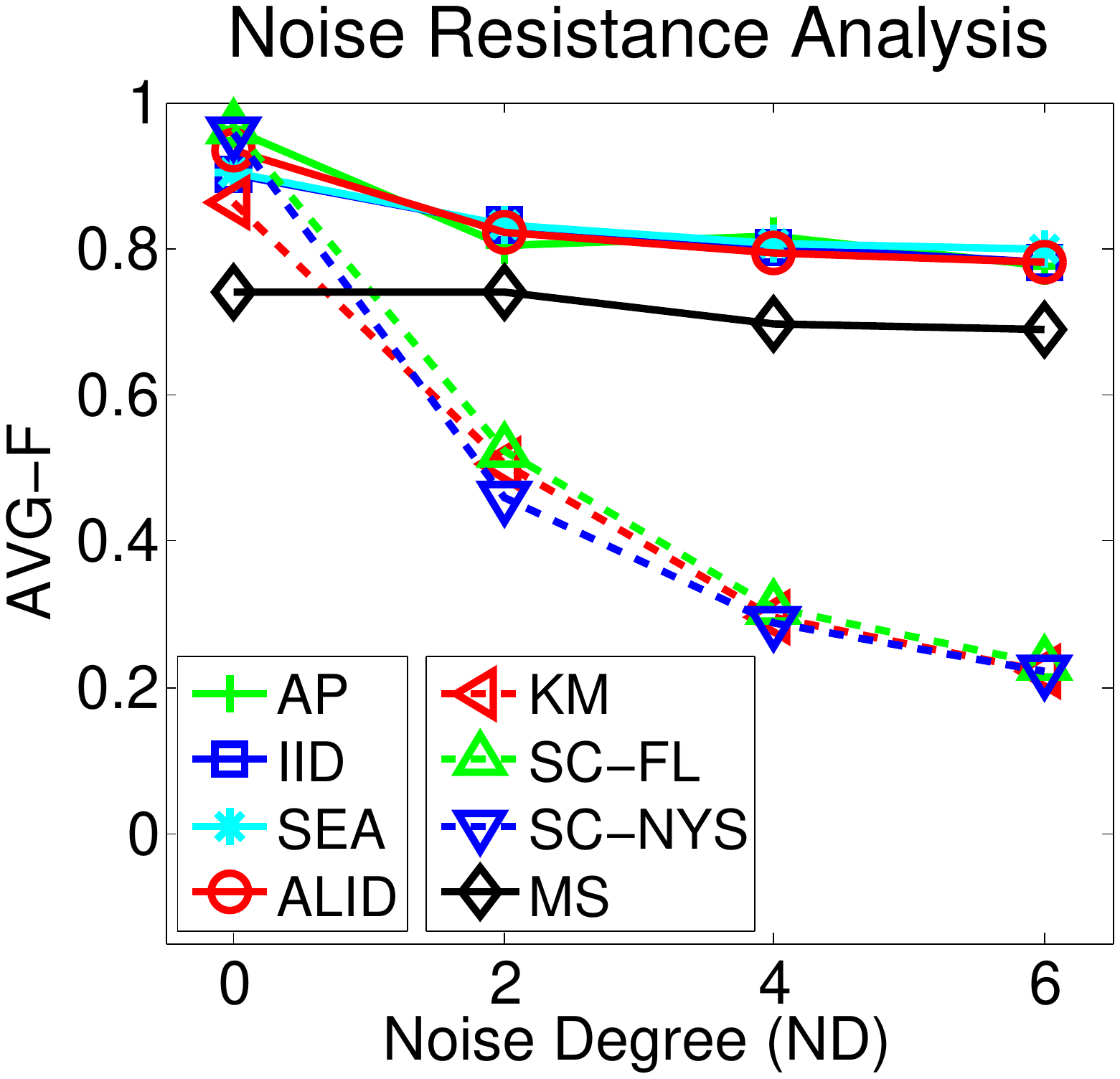}}
}
\caption{The noise resisting performances.  (a) The results on the NART data set. (b) The results on the Sub-NDI data set.}
\label{Fig:NRA}
\end{figure}

As shown in Figure~\ref{Fig:NRA}(a)-(b), the AVG-F performance of the partitioning-based methods decreases much faster than the other methods when noise degree increases. Such phenomenon is caused by the core mechanism of partitioning-based methods, which partitions all data items (including the noise data) into a fixed number of clusters. Thus, such mechanism is ineffective on noisy data. Similar observations is also reported by Pavan~\textit{et~al.}~\cite{DS}. In contrast, the affinity-based methods significantly outperform the partitioning-based methods on noisy data when the noise degree is large. This demonstrates the advantage of the affinity-based methods in resisting high background noise.

It is easy to understand why both SC-FL and SC-NYS achieve good AVG-F performance when ``$\text{noise degree}=0$'', since those methods are fed with the correct number of clusters. However, the number of clusters is most likely unknown in practical applications.

As shown in Figure~\ref{Fig:NRA}(a), MS achieves a comparable AVG-F performance with the affinity-based methods on the NART data set. This is achieved by a proper bandwidth setting of the pre-defined density distribution, which makes the MS method noise-resisting if the bandwidth properly fits the scale of most true clusters.
However, on the Sub-NDI data set (Figure~\ref{Fig:NRA}(b)), MS failed to achieve a comparable AVG-F with the affinity-based methods, since the distribution of the image features in Sub-NDI data set is more sophisticated than the text features in NART data set. Such performance degeneration is probably caused by the strong reliance of MS on the bandwidth and the pre-defined density distribution, which cannot simultaneously fit on various clusters under complex feature distributions.

In summary, {\ALID} and the other affinity-based methods are more robust than the partitioning-based methods in detecting unknown number of dominant clusters from noisy background.

\end{sloppy}
\end{document}